\documentclass[12pt]{article}

\usepackage{amsmath,amssymb,amsfonts,amsthm}
\usepackage{tikz}
\usetikzlibrary{arrows,positioning,calc,arrows.meta,shapes.geometric}
\usepackage{xcolor}
\usepackage{natbib}
\usepackage{geometry}
\usepackage{setspace}
\usepackage{hyperref}
\usepackage{graphicx}
\usepackage{booktabs}
\usepackage{placeins}
\usepackage{xparse}
\usepackage{adjustbox}
\usepackage{rotating}
\usepackage{enumitem}
\usepackage{framed}

\newcommand{\ind}{\mathbf{1}}

\DeclareMathOperator*{\argmax}{arg\,max}

\newcommand{\inputIfExists}[1]{%
  \IfFileExists{#1}{\input{#1}}{%
    \begin{tabular}{lccc}\toprule
    Output & Cluster & Factor & Sparse\\
    \midrule
    Placeholder & -- & -- & --\\
    \bottomrule
    \end{tabular}%
  }%
}
\NewDocumentCommand{\includegraphicsIfExists}{O{}m}{%
  \IfFileExists{#2}{\includegraphics[#1]{#2}}{%
    \fbox{\begin{minipage}{0.75\textwidth}\centering
    Figure file \texttt{\detokenize{#2}} not found. Run the replication package to generate this file.
    \end{minipage}}%
  }%
}

\IfFileExists{results/tables/macros_lambda_bar.tex}{%
\newcommand{\lambdaBarRealized}{0.1187}
\newcommand{\lambdaBarZscore}{1.877}
}{%
  \newcommand{\lambdaBarRealized}{[run replication package]}%
  \newcommand{\lambdaBarZscore}{[run replication package]}%
}

\newtheorem{assumption}{Assumption}
\newtheorem{definition}{Definition}
\newtheorem{lemma}{Lemma}
\newtheorem{theorem}{Theorem}

\newtheorem{proposition}{Proposition}
\newtheorem{remark}{Remark}

\title{\bf Learning Dependence Structures for Econometric Inference:
Identification, Ambiguity, and Adaptive Inference}

\author{
Ulrich Hounyo\thanks{%
Department of Economics, University at Albany -- State University of New
York, Albany, NY 12222, USA. E-mail: khounyo@albany.edu. }\\
Department of Economics\\
University at Albany, SUNY
}

\date{}

\begin{document}

\maketitle

\begin{abstract}
\begingroup
\setstretch{1.0}
Econometric inference usually conditions on a dependence structure chosen in advance, even though the data may support clustering, latent factors, sparse interactions, or mixtures of these mechanisms. This paper studies the prior problem of learning the dependence structure that is relevant for inference. We represent candidate structures as covariance geometries in a common Hilbert space and project an estimable dependence operator onto them. The resulting geometric dependence profile is a low-dimensional diagnostic of their relative empirical support; an off-diagonal companion profile isolates cross-sectional dependence and drives procedure selection. We establish well-definedness, consistency, asymptotic normality, and finite-sample classification bounds under local projection regularity and geometric separation, and show that tangent-space overlap creates a first-order impossibility region in which competing geometries cannot be reliably distinguished. Formulating inference-procedure choice as a statistical decision problem, we prove that when one off-diagonal geometry is uniquely separated and profile rankings are compatible with inferential loss, profile-guided inference is asymptotically equivalent to an infeasible oracle and has vanishing regret. The framework thus links dependence diagnostics, learnability, ambiguity, and adaptive inference in a single data-to-decision procedure.

\smallskip
\noindent \textbf{Keywords}:
Dependence learning;
covariance geometry;
adaptive inference;
oracle adaptivity;
regret analysis;
dependence classification;
multiway clustering.

\smallskip

\noindent \textbf{JEL Classification}: 
C12, C13, C14, C38, C51.
\endgroup
\end{abstract}


\section{Introduction}
\label{sec:introduction}

Applied researchers routinely face a consequential choice before conducting
inference: which dependence structure should govern the construction of
standard errors and critical values? Cluster-robust procedures require a
clustering partition \citep{LiangZeger1986,Arellano1987,CameronGelbachMiller2011,
CameronMiller2015}; factor-based corrections require a specification of the
latent factor structure \citep{Bai2003,ChamberlainRothschild1983}; and HAC,
spatial, and network procedures require a bandwidth, kernel, distance metric,
or network architecture \citep{White1980,NeweyWest1987,Andrews1991,Conley1999,
DriscollKraay1998,Auerbach2019,Leung2022}. These methods answer how to conduct
inference conditional on a dependence model. They do not answer the prior
question that arises when the relevant dependence model is itself unknown.

That prior question is empirically important. Firms may share industry shocks,
latent macroeconomic factors, and localized network linkages at the same time.
Regional outcomes may reflect institutional clustering together with spatial
spillovers. Financial returns may contain pervasive common components and
sparse pairwise interactions. Choosing one robust procedure by assumption can
therefore be difficult to justify and can materially affect empirical
conclusions. The statistical problem studied in this paper is consequently not
another variance-estimation problem under a maintained dependence structure.
It is the problem of learning the dependence structure that is most relevant
for inference and translating that evidence into a defensible inferential
decision.

We develop a unified framework for this problem. We observe a panel
$\{u_t\}_{t=1}^T$ of $N$-dimensional cross-sectional vectors, or residuals
thereof, with fixed $N$ and $T\to\infty$, so the covariance operator
$\Gamma_0=E(u_tu_t')$ is estimable. Rather than assuming $\Gamma_0$ belongs to
one covariance class, we compare it with a researcher-specified dictionary of
economically meaningful geometries---in the benchmark, cluster, factor, and
sparse covariance cones. Metric projections produce similarity scores whose
normalization forms a low-dimensional \emph{geometric dependence profile}: the
full profile records overall covariance fit, and an off-diagonal companion
isolates the cross-sectional dependence relevant for choosing among robust
procedures. The profile is a diagnostic, not a claim that the data reveal a
unique structural mechanism---it summarizes the relative support the operator
assigns to the dictionary, while projection residuals reveal when the dictionary
is inadequate and geometric overlap reveals when the data cannot discriminate
among its elements.

The resulting data-to-decision procedure has four linked components: the
operator is estimated and projected onto the candidate geometries; the profile
and residuals diagnose dominance, ambiguity, and dictionary misspecification;
geometric separation determines whether the dominant geometry is learnable; and
the learned geometry is mapped into an inference procedure through a decision
rule---recommending the matched procedure when one geometry dominates, reporting
ambiguity and supporting hybrid procedures near a tie, and warning against
geometry-specific inference when every projection fits poorly. Dependence is
therefore treated as an estimand, and adaptive inference becomes the prescriptive
consequence of learning it.

The central theoretical question is whether the profile can reliably learn a
dominant geometry from an estimated dependence operator. The answer depends on
local projection regularity and on the geometric separation of the candidate
covariance classes. Principal angles between their off-diagonal tangent
directions measure separation between local approximations to the candidate
geometries. Positive angles provide pairwise local separation, whereas shared
tangent directions generate first-order indistinguishability. The same
geometric construction therefore characterizes both the circumstances under
which dependence learning is informative and the circumstances under which
reliable discrimination is impossible.

This yields a fundamental limit of dependence learning: near regions of
geometric overlap, no statistical procedure can distinguish competing
dependence structures at first order. Ambiguous dependence profiles
therefore reflect intrinsic nonidentification rather than finite-sample
uncertainty.

Beyond relative comparisons among candidate geometries, the framework
provides diagnostics for assessing the adequacy of the covariance
dictionary itself. Projection-residual diagnostics measure the distance
between the empirical dependence operator and its closest geometric
approximation, distinguishing three practically important situations: a
clearly dominant geometry, intrinsic ambiguity, and a misspecified
dictionary.

The inferential payoff is naturally decision theoretic \citep{Wald1950}.
Different robust procedures have different losses across dependence
geometries, so the empirical problem is procedure selection under uncertainty
about the dependence state. Existing methods typically resolve that
uncertainty by assumption before inference begins. We instead let the estimated
off-diagonal profile generate a data-driven decision rule. When its dominant
geometry is uniquely separated and profile rankings are compatible with
inferential loss, the selected procedure is asymptotically equivalent to an
infeasible oracle that knows the relevant geometry in advance and its regret
vanishes asymptotically. Thus the descriptive problem of dependence learning
and the prescriptive problem of inference selection are two stages of one
statistical decision problem.

The paper contributes to the literature in five ways.

\medskip

\noindent
\textbf{First}, we introduce a geometric framework for dependence
learning. We observe a panel and estimate the cross-sectional covariance
matrix $\Gamma_0$, which is a fixed $N\times N$ object consistently
estimable as $T\to\infty$. Covariance geometries are defined as closed covariance cones contained in the positive-semidefinite cone of the Hilbert space $(\mathbb{R}^{N\times N},\langle\cdot,
\cdot\rangle_F)$, and the dependence profile is obtained by projecting
the estimated $\hat\Gamma_T$ onto each candidate geometry. This
provides a unified representation of cluster, factor, sparse, and more
general dependence structures within a common mathematical framework.

\medskip

\noindent
\textbf{Second}, we establish an identification theory for dependence
learning. Under local projection regularity, dependence profiles are well defined; a principal-angle separation condition establishes pairwise separation of the candidate geometries. Under additional sampling and projection-%
differentiability conditions, the full and off-diagonal profiles are consistently estimable and asymptotically normal. We further derive finite-sample classification
error bounds that quantify the reliability of dependence learning.

\medskip

\noindent
\textbf{Third}, we characterize the fundamental limits of dependence
learning. We show that when covariance geometries share common tangent
directions, no statistical procedure can distinguish them at first
order. This impossibility theorem demonstrates that ambiguous
dependence profiles may reflect intrinsic nonidentification rather than
finite-sample uncertainty; projection-residual diagnostics complement
these results by testing whether the candidate dictionary itself is
adequate.

\medskip

\noindent
\textbf{Fourth}, we develop a framework for adaptive econometric
inference under unknown dependence. Estimated off-diagonal dependence profiles guide
the selection of dependence-robust inference procedures. Under a unique
profile margin and an explicit profile--loss compatibility condition, the
selected procedure is asymptotically equivalent to an infeasible oracle
possessing prior knowledge of the dominant dependence structure.

\medskip

\noindent
\textbf{Fifth and finally}, we provide a decision-theoretic interpretation. We
formulate profile-guided inference as a procedure-selection problem
under uncertainty and show that the resulting procedure achieves
asymptotically vanishing regret. Under the same unique-margin and profile--loss compatibility conditions,
learning the dependence structure incurs no first-order asymptotic loss.

\paragraph{Related literature.}
The paper connects four literatures that have largely developed separately.
\emph{Dependence-robust inference} provides procedure-specific solutions
conditional on a known structure---heteroskedasticity- and cluster-robust
variance estimation \citep{White1980,LiangZeger1986,Arellano1987,
CameronGelbachMiller2011,CameronMiller2015}, HAC and spatial corrections
\citep{NeweyWest1987,Andrews1991,Conley1999,Conley2008,DriscollKraay1998}, and
network-robust inference \citep{Auerbach2019,Leung2022}---whereas we take the
\emph{choice} among such procedures as the object of study. \emph{Structured
covariance estimation} studies each geometry in isolation: approximate factor
structure \citep{ChamberlainRothschild1983,BaiNg2002,Bai2003}, sparse covariance
and precision estimation \citep{BickelLevina2008,CaiLiu2011,
FriedmanHastieTibshirani2008}, and low-rank-plus-sparse decompositions
\citep{CandesLiMaWright2011,Chandrasekaran2011}; we instead treat these as
competing geometries in a common Hilbert space and characterize when the data
cannot distinguish them. The \emph{model-selection and averaging} literatures
\citep{BatesGranger1969,Hansen2007,ClaeskensHjort2008} quantify relative fit; the
geometric formulation adds a notion of distance between candidate structures
themselves, enabling the local-separation and impossibility results and
connecting to adaptation theory \citep{Lepski1991}. Finally, because the
procedure selects an inference method using the same data, the oracle and regret
analysis relates to \emph{post-selection inference} \citep{LeebPoetscher2005,
BerkBrownBujaZhangZhao2013,TibshiraniEtAl2016}, delineating when selection
consistency is asymptotically innocuous and when---near overlap---no procedure
escapes the ambiguity. It also relates to the robust-statistics tradition
\citep{Huber1981,HampelEtAl1986}, where robustness concerns distributional
departures within a given dependence model rather than the dependence structure
itself; the asymptotic theory builds on standard semiparametric and
empirical-process tools \citep{BickelKlaassenRitovWellner1993,Newey1994,
VanDerVaartWellner1996,LeCamYang2000}.

The geometric formulation is not adopted merely for mathematical
convenience; it is essential relative to direct model selection based on
cross-validation, information criteria, or prediction error. It represents
heterogeneous dependence structures---clustering, latent factors, sparse
networks---within a single parameter space, so structures of very different
functional form become directly comparable; it supplies a unified notion of
distance and ambiguity, since the same Frobenius geometry that measures fit also
measures how close two candidate structures are; and it yields projection-based
identification conditions, local impossibility results, and adaptive
inference---none of which have counterparts in generic model-selection criteria,
which quantify relative fit but are silent on when candidate structures are
intrinsically indistinguishable.

\emph{Our contribution.} Existing work assumes, estimates, selects
among, or averages over covariance structures. The present paper does
not propose another covariance estimator, robust variance formula, or
selection criterion. It proposes a statistical theory for learning which
covariance geometry should govern subsequent inference: an identification
theory for when the geometry is learnable, an impossibility theory for when it
is not, and an adaptive inference theory establishing that, whenever the
geometry is learnable, learning it costs nothing asymptotically relative to
knowing it in advance. In this sense the paper shifts the focus from
dependence-robust inference under a specified covariance model to adaptive
inference under unknown dependence.

\paragraph{Theoretical roadmap and central result.}
The theory is deliberately sequential. Theorem~\ref{thm:identification} shows the
profile is well defined at regular operators and that positive principal angles
separate geometries locally; Theorems~\ref{thm:profile_consistency}
and~\ref{thm:profile_clt} transfer covariance estimation into consistent,
asymptotically normal estimation of the full and off-diagonal profiles;
Theorems~\ref{thm:dominant_geometry_consistency} and~\ref{thm:local_ties}
characterize the positive and boundary classification regimes; and
Theorem~\ref{thm:fundamental_limit} shows tangent-space overlap creates a
first-order impossibility region. The analysis culminates in
Theorem~\ref{thm:oracle_adaptivity_asymptotic_optimality}, the paper's central
data-to-decision result: whenever the dominant off-diagonal geometry is learnable
and profile rankings are compatible with inferential loss, the profile-guided
procedure is first-order equivalent to the infeasible oracle and has vanishing
regret. Each preceding result supplies a necessary link in the chain from
covariance estimation to adaptive inference.

\paragraph{Notation guide.}
Objects are introduced in pipeline order. The population operator is
$\Gamma_0\in\mathbb H_+$, estimated by $\widehat\Gamma_T$; candidate geometries
$\mathcal S_d$, $d\in\mathfrak D$, have metric projections $P_d(\Gamma)$ and
similarity scores $S_d(\Gamma)=\|P_d(\Gamma)\|_F^2$; $\omega(\Gamma)$ is the
normalized full profile and $\omega^{\mathrm{off}}(\Gamma)$ its off-diagonal
companion. The dominant population and sample geometries are $d^\star$ and
$\widehat d$, with separation margins $\Delta_{\omega^{\mathrm{off}}}$ and
$\widehat\Delta_{\omega^{\mathrm{off}}}$; $\rho_d,\widehat\rho_d$ are projection
residuals, and $\kappa_0,\widehat\kappa$ combine separation and dictionary fit.
Finally $a_d$ is the action matched to geometry $d$, $\delta$ the data-dependent
decision rule, and $\mathcal R(\delta,\Gamma_0)$ its regret. The same symbols
carry unchanged meanings in the paper, appendix, and replication code.

The paper is organized as follows. Section~\ref{sec:geometry}
introduces the statistical experiment, defines the population
dependence operator, establishes its estimability, and develops the
covariance geometry framework. Section~\ref{sec:identification}
develops geometric separation and ambiguity results.
Sections~\ref{sec:estimation} and~\ref{sec:asymptotics} study
estimation and asymptotic theory. Section~\ref{sec:classification}
develops dependence diagnostics and classification.
Section~\ref{sec:profile_guided_inference} presents profile-guided
inference. Sections~\ref{sec:simulation} and~\ref{sec:empirical}
report simulation and empirical evidence. Technical proofs and
additional results are in the Appendix. The data-to-decision pipeline is
summarized schematically in Figure~\ref{fig:workflow} of
Section~\ref{sec:profile_guided_inference}.


\section{Statistical Experiment, Estimability, and Covariance Geometry}
\label{sec:geometry}

\subsection{Statistical Experiment}
\label{subsec:stat_experiment}

Let $N\geq 1$ denote the cross-sectional dimension, treated as fixed
throughout. We observe a panel
\[
\{u_t\}_{t=1}^T,
\qquad u_t=(u_{1t},\ldots,u_{Nt})'\in\mathbb R^N,
\]
where $u_t$ is a zero-mean random vector with covariance matrix
\[
\Sigma = E(u_t u_t') \in \mathbb{R}^{N\times N}.
\]
Asymptotics are driven by the time dimension: $T\to\infty$ with $N$
fixed.

\begin{remark}[Asymptotic regime]
\label{rem:asymptotic_regime}
Throughout, $N$ is fixed and $T\to\infty$. The cross-sectional dimension $N$
governs the \emph{geometry} of dependence---the dimension of the Hilbert space,
the projection operators, and the principal-angle conditions---while $T$ governs
\emph{estimation}, the rate at which the empirical operator concentrates around
its population counterpart. This separation between the geometric and asymptotic
dimensions is what makes the framework well-posed.
\end{remark}

\subsection{Population Dependence Operator}
\label{subsec:pop_dep_operator}

We equip the space of symmetric $N\times N$ matrices with the Frobenius
inner product
\[
\langle A, B\rangle_F = \operatorname{tr}(A'B),
\qquad
\|A\|_F = \langle A,A\rangle_F^{1/2},
\]
giving the Hilbert space
\[
\mathbb{H} = \bigl\{A\in\mathbb{R}^{N\times N}: A=A',\;\|A\|_F<\infty\bigr\}.
\]

\begin{definition}[Population Dependence Operator]
\label{def:pop_dep_op}
The \emph{population dependence operator} is the element
$\Gamma_0\in\mathbb{H}$ represented by the covariance matrix
\[
\Gamma_0 \;=\; \Sigma \;=\; E(u_t u_t').
\]
\end{definition}

\noindent
The Hilbert-space formulation in Definition~\ref{def:pop_dep_op} is not
merely notational. Representing $\Sigma$ as an element of
$(\mathbb{H},\langle\cdot,\cdot\rangle_F)$ enables the use of
projection operators, tangent spaces, and principal angles---the central
geometric objects of the paper. In finite dimensions, $\Gamma_0$ is
identified with $\Sigma$; in extensions to functional or network data,
the same abstract framework accommodates more general covariance kernels
or operators without changing the theoretical results.

\subsection{Empirical Dependence Operator and Estimability}
\label{subsec:estimability}

We observe $T$ cross-sectional vectors $u_1,\ldots,u_T$. In practice $u_t$ is
typically not observed directly but recovered as residuals from a regression
model; the leading case, used in the simulation and empirical sections, is the
panel regression
\begin{equation}
y_{it} = x_{it}'\beta + u_{it},
\qquad i=1,\ldots,N,\quad t=1,\ldots,T,
\label{eq:panel_regression}
\end{equation}
with $u_t=(u_{1t},\ldots,u_{Nt})'$ and $\widehat\Gamma_T$ built from the OLS
residuals $\widehat u_t$. The estimand is $\Gamma_0=\Sigma=E(u_tu_t')$, a fixed
$N\times N$ positive semidefinite matrix: since $N$ is fixed, this is not the
problem of estimating a covariance matrix that grows with the sample, but of
estimating a fixed matrix using an increasing number of time-series
observations. The natural estimator is the sample covariance of the (estimated)
residuals,
\begin{equation}
\widehat\Gamma_T
=
\frac{1}{T}
\sum_{t=1}^{T}
\widehat u_t\widehat u_t',
\label{eq:sample_cov}
\end{equation}
$\widehat u_t = (\widehat u_{1t},\ldots,\widehat u_{Nt})'$. We state the formal
conditions and consistency as a proposition, giving the estimator an explicit
foundation before the geometric theory.

\begin{assumption}[Stationarity, Mixing, and Moments]
\label{ass:mixing}
The joint process $\{(u_t,X_t)\}$ is strictly stationary and ergodic,
where $X_t\in\mathbb{R}^{N\times p}$ collects the regressors at time
$t$. There exists $\delta>0$ such that $E\|u_t\|^{4+\delta}<\infty$
and $E\|X_t\|_F^{2+\delta}<\infty$. The mixing condition is imposed
directly on the \emph{quadratic} process
$\psi_t=\operatorname{vec}(u_tu_t'-\Sigma)$, which is the process to
which the central limit theorem is applied: writing $\eta=\delta/2$, so
that $E\|\psi_t\|^{2+\eta}<\infty$ whenever
$E\|u_t\|^{4+\delta}<\infty$, the process satisfies strong mixing with
\[
\sum_{k=0}^\infty \alpha(k)^{\eta/(2+\eta)}
=
\sum_{k=0}^\infty \alpha(k)^{\delta/(4+\delta)}
<\infty ,
\]
where $\alpha(k)$ is the strong mixing coefficient of
$\{(u_t,X_t)\}$ at lag $k$ (mixing coefficients of $\{\psi_t\}$ are
bounded by those of $\{u_t\}$, since $\psi_t$ is a measurable function of
$u_t$). This is the summability exponent required by the mixing central
limit theorem applied to $\{\psi_t\}$, and it is the binding condition;
the more familiar exponent $\delta/(2+\delta)$ would be appropriate for
$\{u_t\}$ itself rather than for the quadratic process.
The regressors are strictly exogenous, the OLS estimator satisfies
$\hat\beta-\beta=O_p(T^{-1/2})$, and the regressor--error cross
moment obeys
\[
\frac1T\sum_{t=1}^T X_t\otimes u_t=O_p(T^{-1/2}).
\]
The last rate is the condition used to make residual replacement
first-order negligible; it follows, for example, from an appropriate
mixing central limit theorem for the cross-product process. In particular,
$T^{-1}\sum_{t=1}^T\|X_t\|_F^2=O_p(1)$ by the ergodic theorem.
\end{assumption}

\begin{proposition}[Consistency of the Empirical Dependence Operator]
\label{prop:consistency_operator}
Under Assumption~\ref{ass:mixing},
\[
\|\widehat\Gamma_T - \Gamma_0\|_F = O_p(T^{-1/2}).
\]
\end{proposition}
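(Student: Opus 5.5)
The plan is to split the estimation error into an infeasible part, built from the true errors $u_t$, and a residual-replacement part:
\[
\widehat\Gamma_T-\Gamma_0
=\Bigl(\tfrac1T\textstyle\sum_{t=1}^T u_tu_t'-\Sigma\Bigr)
+\tfrac1T\textstyle\sum_{t=1}^T\bigl(\widehat u_t\widehat u_t'-u_tu_t'\bigr)
=:A_T+B_T .
\]
We then show $A_T=O_p(T^{-1/2})$ and $B_T=O_p(T^{-1/2})$. Because $N$ is fixed, all norms on $\mathbb H$ are equivalent, so it is enough to control each entry, or equivalently $\operatorname{vec}(\cdot)$ in Euclidean norm.

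\textbf{Infeasible term.} Note that $\operatorname{vec}(A_T)=T^{-1}\sum_t\psi_t$, where $\psi_t=\operatorname{vec}(u_tu_t'-\Sigma)$ has mean zero. Assumption~\ref{ass:mixing} gives $E\|\psi_t\|^{2+\eta}<\infty$ with $\eta=\delta/2$, since $\|\psi_t\|\le\|u_t\|^2+\|\Sigma\|_F$. It also gives the summability $\sum_k\alpha(k)^{\eta/(2+\eta)}<\infty$. With these two facts, the covariance inequality for strongly mixing sequences (Davydov/Rio) yields absolutely summable autocovariances:
\[
\operatorname{Var}\Bigl(T^{-1/2}\textstyle\sum_t\psi_t\Bigr)\le \textstyle\sum_{k\in\mathbb Z}\|\operatorname{Cov}(\psi_0,\psi_k)\|<\infty .
\]
Chebyshev's inequality then gives $\|A_T\|_F=O_p(T^{-1/2})$. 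The mixing CLT would give the same conclusion directly, but only tightness is needed here.

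\textbf{Residual term.} Write $\widehat u_t=u_t-X_t(\widehat\beta-\beta)$ and set $h=\widehat\beta-\beta$. Then
\[
B_T=-\tfrac1T\textstyle\sum_t\bigl(X_th\,u_t'+u_th'X_t'\bigr)+\tfrac1T\textstyle\sum_t X_thh'X_t' .
\]
The quadratic piece is bounded by $\|h\|^2\,T^{-1}\sum_t\|X_t\|_F^2=O_p(T^{-1})\cdot O_p(1)$, using the ergodic theorem and $E\|X_t\|_F^2<\infty$. The cross piece is the step that needs care. Bounding it crudely by Cauchy–Schwarz as $\|h\|\,(T^{-1}\sum\|X_t\|^2)^{1/2}(T^{-1}\sum\|u_t\|^2)^{1/2}$ gives $O_p(T^{-1/2})$, and that is already enough for the stated rate. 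The sharper route would use the assumed rate $T^{-1}\sum_t X_t\otimes u_t=O_p(T^{-1/2})$. Each entry of $T^{-1}\sum_t X_thu_t'$ is the linear form $\sum_k h_k\,T^{-1}\sum_t X_{t,ik}u_{t,j}$, which is an entry of that Kronecker average contracted with $h$. This shows the cross piece is actually $O_p(T^{-1})$, so residual replacement is first-order negligible. That matters later for the CLT in Theorem~\ref{thm:profile_clt}, even though it is not needed for the present rate.

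\textbf{Combining.} The triangle inequality gives $\|\widehat\Gamma_T-\Gamma_0\|_F\le\|A_T\|_F+\|B_T\|_F=O_p(T^{-1/2})$. The main obstacle is the variance bound for $A_T$: it must be verified that the stated mixing exponent, applied to the quadratic process with $2+\eta$ moments, is exactly what the covariance inequality requires. The rest is bookkeeping that is valid because $N$ and $p$ are fixed.
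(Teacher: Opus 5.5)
Your proof is correct, and its skeleton matches the paper's. The paper splits $\widehat\Gamma_T-\Gamma_0$ into the same infeasible term $A_T$ plus two residual-replacement terms: $B_T=-T^{-1}\sum_t X_t(\widehat\beta-\beta)u_t'$, and $C_T$, which equals the quadratic term plus $B_T'$. Together $B_T$ and $C_T$ are exactly your $B_T$.

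The two arguments differ in their tools, in two places.

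\emph{The infeasible term.} The paper simply cites a mixing CLT (Davidson, Theorem 27.4) to get $\sqrt T\operatorname{vec}(A_T)\Rightarrow N(0,\Omega_\Gamma)$ and reads off tightness. You instead use Davydov's inequality with $p=q=2+\eta$. Its exponent $1-2/(2+\eta)=\eta/(2+\eta)$ is precisely the summability imposed in Assumption~\ref{ass:mixing}, and Chebyshev then gives the rate. Your route is more elementary and shows transparently why that exponent is the right one. It yields only tightness, whereas the CLT route also delivers the Gaussian limit that the paper reuses in Assumption~\ref{ass:operatorclt} and Remark~\ref{rem:primitive_clt}.

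\emph{The cross term.} The paper invokes the assumed rate $T^{-1}\sum_t X_t\otimes u_t=O_p(T^{-1/2})$ to obtain $O_p(T^{-1})$. Your Cauchy--Schwarz bound shows that the proposition as stated needs only three things:
\begin{itemize}
\item $\widehat\beta-\beta=O_p(T^{-1/2})$;
\item $T^{-1}\sum_t\|X_t\|_F^2=O_p(1)$;
\item $T^{-1}\sum_t\|u_t\|^2=O_p(1)$.
\end{itemize}
The last two follow from the ergodic theorem. So the cross-moment condition is not needed for this rate. Your sharper contraction argument reproduces the paper's $O_p(T^{-1})$ bound. You are also right that this sharper bound matters only downstream, where it ensures residual replacement leaves the influence function unchanged.
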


\begin{remark}[Generic residual constructions]
\label{rem:generic_residuals}
Assumption~\ref{ass:mixing} describes a fixed-dimensional, common-$\beta$ OLS
problem. Empirical designs often use richer residual constructions---most
commonly a two-way within transformation---not covered directly, so it is useful
to state the requirement generically. Let $\widehat u_t$ be any residual
construction and $u_t$ the corresponding population disturbance, and suppose
\[
\frac1T\sum_{t=1}^T(\widehat u_t-u_t)(\widehat u_t-u_t)'=O_p(T^{-1}),
\qquad
\frac1T\sum_{t=1}^T(\widehat u_t-u_t)u_t'=O_p(T^{-1}),
\tag{R}
\label{eq:generic_residual}
\]
together with the stationarity, moment, and mixing conditions of
Assumption~\ref{ass:mixing} imposed on $\{u_t\}$. Then the conclusion of
Proposition~\ref{prop:consistency_operator} holds verbatim and, by
Remark~\ref{rem:primitive_clt}, the influence function is unchanged.
Condition~\eqref{eq:generic_residual} is verified for common-$\beta$ OLS under
strict exogeneity in the proof of Proposition~\ref{prop:consistency_operator}.
For a two-way fixed-effects specification, the relevant transformed disturbance
and its covariance operator must be defined from the complete within
transformation, and \eqref{eq:generic_residual} verified for that construction;
with fixed $N$ this is standard under suitable moment, exogeneity, and
weak-dependence conditions, but it is design-specific and not an automatic
consequence of Proposition~\ref{prop:consistency_operator}. Researchers using
other constructions should verify~\eqref{eq:generic_residual} directly.
\end{remark}

\noindent
Proposition~\ref{prop:consistency_operator} confirms that $\Gamma_0$ is a
well-defined, consistently estimable estimand at the $\sqrt T$-rate that drives
all the CLTs of Section~\ref{sec:asymptotics}. The proof decomposes
$\widehat\Gamma_T-\Gamma_0$ into the sampling error of the infeasible covariance
$T^{-1}\sum_t u_tu_t'$ (which is $O_p(T^{-1/2})$ under Assumption~\ref{ass:mixing}
by standard results for weakly dependent processes, e.g.\
\citealt{Davidson1994}) and a residual-replacement remainder that is
$O_p(T^{-1})$ under strict exogeneity. Estimating $\Gamma_0$ is not the goal in
itself: $\widehat\Gamma_T$ is an intermediate object from which the estimand of
interest, the population profile $\omega_0$, is constructed.

\begin{remark}[Two notions of dependence]
\label{rem:two_dependence_notions}
The weak-dependence conditions on $\{u_t\}$ in
Proposition~\ref{prop:consistency_operator} concern the \emph{temporal} sequence
across $t$ and serve only to guarantee consistent estimation of $\Gamma_0$; they
place \emph{no} restriction on the \emph{cross-sectional} dependence structure
encoded in $\Gamma_0=E(u_tu_t')$, which is precisely the object being learned.
Temporal and cross-sectional dependence are therefore distinct, and assumptions
on the former do not restrict the latter.
\end{remark}

\subsection{Covariance Classes}
\label{subsec:cov_classes}

Rather than assuming $\Gamma_0$ belongs to a single class, we characterize its
dependence architecture through geometric proximity to several economically
meaningful covariance classes. Each geometry, defined next, is a
\emph{deterministic} subset of $\mathbb H$, so the projection of $\Gamma_0$ onto
it is well defined. Let $\mathcal S_C,\mathcal S_F,\mathcal S_S$ denote the
cluster, factor, and sparse classes.

\begin{definition}[Covariance-cone dictionary]
\label{def:covariance_cone_dictionary}
Write $\mathbb H_+=\{\Gamma\in\mathbb H:\Gamma\succeq0\}$.  Every
candidate geometry used in this paper is a nonempty closed cone contained
in $\mathbb H_+$.  Consequently every projected point is itself a valid
covariance matrix.  This restriction is substantive: the dictionary
classifies covariance structure, rather than merely symmetric support
patterns.
\end{definition}

For a prespecified cluster-support matrix $M_C$ with unit diagonal, define
\[
\mathcal S_C
=
\{\Gamma\in\mathbb H_+:\Gamma_{ij}=0\text{ whenever }(M_C)_{ij}=0\}.
\]
Thus the cluster class is the intersection of the covariance cone with a
fixed linear support space.  It is a closed convex cone.  For one-way
partitions the support mask is block diagonal and masking preserves
positive semidefiniteness.  For overlapping multiway supports the metric
projection is the solution of a convex PSD-constrained least-squares
problem; entrywise masking alone is not, in general, the metric
projection.

The support matrix is constructed from the clustering dimensions by
\[
(M_C)_{ij}
=
\ind\Bigl\{g_m(i)=g_m(j)\text{ for at least one }m\in\{1,\ldots,M\}\Bigr\}.
\]
This includes one-way, two-way, and multiway cluster dictionaries.

\begin{remark}[Cluster geometry versus many-cluster asymptotics]
\label{rem:cluster_vs_manycluster}
The paper's objective is not the asymptotic validity of cluster-robust variance
estimators under many-cluster asymptotics. The cluster covariance class here is a
finite-dimensional geometry on a fixed $N\times N$ matrix describing which entries
are compatible with a given one-way, two-way, or multiway support---distinct from
the many-cluster theory used to justify conventional cluster-robust standard
errors. The present fixed-$N$, $T\to\infty$ framework studies learning the
cross-sectional geometry; validity of a particular geometry-specific inference
action is imposed separately when that action is used.
\end{remark}

For a fixed rank $r$, define the factor class as
\[
\mathcal S_F(r)
=
\Big\{
\Gamma\succeq0:
\Gamma=L+D,\;
L\succeq0,\;
\operatorname{rank}(L)\le r,\;
D\succeq0 \text{ diagonal}
\Big\}.
\]

For a fixed sparsity budget $k_S\le N(N-1)/2$, define the sparse
covariance cone as
\[
\mathcal S_S
=
\Big\{
\Gamma\in\mathbb H_+:
|\operatorname{supp}_{\mathrm{off}}(\Gamma)|\le k_S
\Big\},
\]
where $\operatorname{supp}_{\mathrm{off}}(\Gamma)=\{(i,j):i<j,
\Gamma_{ij}\neq0\}$.  This is a finite union of closed convex cones and
is therefore a closed, generally nonconvex cone.  Its exact metric
projection jointly selects a support and solves a PSD-constrained
least-squares problem on that support.  The computational appendix uses
a transparent support-selection plus constrained-refinement algorithm;
the population theory concerns the exact metric projection.

The dictionary is intentionally illustrative rather than exhaustive: the
framework extends to any family of closed covariance geometries satisfying the
projection regularity assumptions. Cluster, factor, and sparse geometries are
adopted because they encompass the most widely used dependence structures in
econometric practice---group-correlated shocks, pervasive common variation, and
localized or network linkages. A researcher who suspects an additional structure
may enlarge the dictionary by adding the corresponding closed class, with no
change to the identification, impossibility, or adaptive-inference theory,
provided the new class admits a well-defined projection.

\subsection{Projection Operators}
\label{subsec:projections}

Throughout the paper, let
\[
\mathfrak D=\{C,F,S\}
\]
denote the covariance-cone dictionary, corresponding respectively to
cluster, factor, and sparse dependence. Define
\[
P_d(\Gamma)
=
\arg\min_{\Gamma'\in\mathcal S_d}
\|\Gamma-\Gamma'\|_F.
\]
$P_d(\Gamma)$ represents the closest approximation to $\Gamma$ within
the covariance class $\mathcal S_d$. Although we use projection
notation, $P_d$ need not be a linear orthogonal projection because the
covariance classes may be nonlinear or nonconvex.

\begin{lemma}[Existence of Projections]
\label{lem:projectionexistence}
If $\mathcal S_d$ is closed in $\mathbb H$, then $P_d(\Gamma)$ exists
for every $\Gamma\in\mathbb H$.
\end{lemma}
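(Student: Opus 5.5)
The plan is the standard finite-dimensional argument: a continuous function attains its minimum on a compact set. Since $N$ is fixed, $\mathbb H$ is a finite-dimensional real inner-product space (of dimension $N(N+1)/2$), so closed and bounded subsets of $\mathbb H$ are compact by Heine--Borel. Fix $\Gamma\in\mathbb H$ and consider the objective $f(\Gamma')=\|\Gamma-\Gamma'\|_F$ on $\mathcal S_d$. The function $f$ is continuous, indeed $1$-Lipschitz by the triangle inequality. The argument therefore reduces to restricting the minimization to a compact sublevel set.

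\textbf{Step 1 (nonemptiness and a reference point).} By Definition~\ref{def:covariance_cone_dictionary}, each candidate geometry is a nonempty closed cone. Since cones contain $0$ (the zero matrix lies in $\mathcal S_C$, in $\mathcal S_F(r)$ with $L=D=0$, and in $\mathcal S_S$), I would take $\Gamma'_0=0$. For the bare statement, which assumes only closedness, I would instead take any point $\Gamma'_0\in\mathcal S_d$ and note that nonemptiness is implicit in the definition of a candidate geometry. Set $R=\|\Gamma-\Gamma'_0\|_F$, so that $\inf_{\Gamma'\in\mathcal S_d}f(\Gamma')\le R$.

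\textbf{Step 2 (compact reduction).} Let $K=\mathcal S_d\cap\{\Gamma':\|\Gamma-\Gamma'\|_F\le R\}$. The set $K$ is the intersection of a closed set with a closed ball, so it is closed. It is also bounded, and it is nonempty because it contains $\Gamma'_0$. Hence $K$ is compact. Every $\Gamma'\in\mathcal S_d\setminus K$ satisfies $f(\Gamma')>R\ge\inf_K f$, so the infimum of $f$ over $\mathcal S_d$ equals its infimum over $K$.

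\textbf{Step 3 (attainment).} By Weierstrass, $f$ attains its minimum on $K$ at some $\Gamma^\ast$, and by Step 2 this $\Gamma^\ast$ minimizes $f$ over all of $\mathcal S_d$. Thus $P_d(\Gamma)$ is nonempty. I would add a remark that uniqueness is not claimed: the $\arg\min$ may be set-valued for the nonconvex classes $\mathcal S_F(r)$ and $\mathcal S_S$, and single-valuedness is instead a matter for the later local projection regularity conditions. For the convex class $\mathcal S_C$, strict convexity of $\|\cdot\|_F^2$ additionally gives uniqueness.

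There is no real obstacle. The only points needing care are (i) that finite dimensionality is what delivers compactness of closed bounded sets, which would fail in the infinite-dimensional extensions mentioned after Definition~\ref{def:pop_dep_op} without convexity or weak closedness, and (ii) the nonemptiness of $\mathcal S_d$, which the definition of the dictionary supplies.
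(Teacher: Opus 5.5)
Your proof is correct and is essentially the paper's argument. The paper bounds a minimizing sequence using $0\in\mathcal S_d$, extracts a convergent subsequence by finite-dimensional compactness, and uses closedness to place the limit in $\mathcal S_d$, which is the sequential form of your compact-sublevel-set plus Weierstrass argument. Your choice of an arbitrary reference point $\Gamma'_0\in\mathcal S_d$ in place of $0$ makes explicit that nonemptiness is the only fact needed beyond closedness; the paper supplies it through the cone structure.
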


\begin{lemma}[Continuity of Projection Operators]
\label{lem:projectioncontinuity}
Suppose $P_d$ is locally unique at $\Gamma$. If $\Gamma_m\to\Gamma$,
then $P_d(\Gamma_m)\to P_d(\Gamma)$.
\end{lemma}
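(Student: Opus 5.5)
The plan is the standard compactness-plus-uniqueness argument for metric projections in a finite-dimensional Hilbert space. I read ``locally unique at $\Gamma$'' as: the minimization problem defining $P_d(\Gamma)$ has a unique minimizer $P_d(\Gamma)$. The approach also allows each $P_d(\Gamma_m)$ to be any measurable selection from the (possibly multi-valued) argmin set. Throughout, fix the sequence $\Gamma_m\to\Gamma$ and pick $\Pi_m\in\arg\min_{\Gamma'\in\mathcal S_d}\|\Gamma_m-\Gamma'\|_F$. Such a $\Pi_m$ exists by Lemma~\ref{lem:projectionexistence}, since each $\mathcal S_d$ is closed by Definition~\ref{def:covariance_cone_dictionary}.

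\textbf{Step 1: distance function.} The distance function $\operatorname{dist}(\cdot,\mathcal S_d)$ is $1$-Lipschitz, so
\[
\|\Gamma_m-\Pi_m\|_F=\operatorname{dist}(\Gamma_m,\mathcal S_d)\to\operatorname{dist}(\Gamma,\mathcal S_d).
\]

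\textbf{Step 2: boundedness.} The triangle inequality gives
\[
\|\Pi_m\|_F\le \|\Gamma_m\|_F+\operatorname{dist}(\Gamma_m,\mathcal S_d)\le 2\|\Gamma_m\|_F+\|Z\|_F
\]
for any fixed $Z\in\mathcal S_d$; one may take $Z=0$, since $\mathcal S_d$ is a closed cone. Hence $\{\Pi_m\}$ is bounded in the finite-dimensional space $\mathbb H$.

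\textbf{Step 3: limit points are minimizers.} Let $\Pi_{m_k}\to\Pi^\ast$ be any convergent subsequence, which exists by Bolzano--Weierstrass. Closedness gives $\Pi^\ast\in\mathcal S_d$. Continuity of the norm together with Step~1 gives
\[
\|\Gamma-\Pi^\ast\|_F=\lim_k\|\Gamma_{m_k}-\Pi_{m_k}\|_F=\operatorname{dist}(\Gamma,\mathcal S_d),
\]
so $\Pi^\ast$ is a minimizer for $\Gamma$. Uniqueness then forces $\Pi^\ast=P_d(\Gamma)$.

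\textbf{Step 4: full convergence.} Every subsequence of the bounded sequence $\{\Pi_m\}$ has a further subsequence converging to $P_d(\Gamma)$. The standard subsequence argument therefore yields $\Pi_m\to P_d(\Gamma)$. Since the argument applies to every selection, the convergence holds for the whole argmin sets, in the sense that their Hausdorff distance to $\{P_d(\Gamma)\}$ tends to zero.

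\textbf{Main obstacle: the meaning of ``locally unique''.} If the hypothesis means only that $P_d(\Gamma)$ is an isolated minimizer, the conclusion fails for the nonconvex sparse and factor cones, where other minimizers may exist. I would therefore state explicitly that local uniqueness means uniqueness of the global minimizer at $\Gamma$. A weaker alternative would be to restrict attention to selections $\Pi_m$ lying in a fixed neighbourhood $U$ of $P_d(\Gamma)$ that contains no other minimizer; the same argument then works with $\overline U$ replacing the boundedness step.

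For the convex cluster class no such issue arises. There the projection is single-valued and $1$-Lipschitz, so the result is immediate.
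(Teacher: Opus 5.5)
Your proof is correct and follows essentially the same route as the paper's: the selections are bounded because $0\in\mathcal S_d$, closedness keeps subsequential limits in $\mathcal S_d$, each limit is identified as a minimizer, and uniqueness plus the subsequence principle gives full convergence. The only cosmetic difference is that you use the $1$-Lipschitz distance function where the paper passes to the limit in $\|\Gamma_m-p_m\|_F\le\|\Gamma_m-q\|_F$ for fixed $q\in\mathcal S_d$; you also read ``locally unique'' exactly as the paper does, namely that the metric projection at $\Gamma$ is a single point.
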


\subsection{Why the Frobenius Geometry?}
\label{subsec:frobenius}

The Frobenius norm is induced by the Hilbert inner product
$\langle A,B\rangle_F=\operatorname{tr}(A'B)$ and therefore provides a natural
geometry for projections, tangent spaces, principal angles, and local
separation. It is also computationally convenient: low-rank projections use
spectral updates, one-way cluster projections reduce to block support masks, and
general cluster or sparse projections reduce to PSD-constrained fixed-support
least squares (with support selection for sparse classes). Alternative metrics
may suit other applications, but the Frobenius geometry gives the most
transparent framework for the theory below.

\subsection{Dependence Profile}
\label{subsec:dep_profile}

Define the \emph{population similarity score} for geometry $d$ as
\[
S_{d,0}=\|P_d(\Gamma_0)\|_F^2,
\qquad d\in\mathfrak D.
\]
The normalized dependence similarity scores are
\[
\omega_{d,0}=\frac{S_{d,0}}{\sum_{d\in\mathfrak D}S_{d,0}},
\qquad d\in\mathfrak D.
\]
The vector
\[
\omega_0=(\omega_{C,0},\omega_{F,0},\omega_{S,0})'
\]
is called the \emph{population dependence profile}. It is a functional
of the estimable object $\Gamma_0$ and therefore inherits its
estimability from Proposition~\ref{prop:consistency_operator}.

\begin{remark}[No Variance Decomposition]
\label{rem:no_variance_decomp}
The quantities $\omega_{C,0},\omega_{F,0},\omega_{S,0}$ should not be
interpreted as fractions of covariance explained. Since the covariance
geometries overlap and the projections are not generally orthogonal,
$\sum_{d\in\mathfrak D}S_{d,0}\neq\|\Gamma_0\|_F^2$ in general. The
dependence profile measures relative geometric similarity, not additive
covariance contributions.
\end{remark}

\section{Identification of Dependence Geometry}
\label{sec:identification}
This section studies whether the dependence profile is well defined and
whether distinct covariance geometries are locally separated. The first
question is governed by regularity of the individual metric projections;
the second depends on whether the cluster, factor, and sparse geometries
have distinct local off-diagonal directions near the projected population
points.

Identification concerns the population operator; estimation concerns its
empirical counterpart. The identification theory of this section
therefore studies properties of $\Gamma_0$ and its projections,
independently of sampling variability.

For each $d\in\mathfrak D$, let
\[
\Sigma_d = P_d(\Gamma_0)
\]
denote the population projection of $\Gamma_0$ onto geometry $d$.
For each geometry \(d\in\mathfrak D\), let \(T_d(\Sigma_d)\) denote the
corresponding tangent space at this projected population point.
Since diagonal perturbations are common to all covariance geometries and
carry no information about cross-sectional dependence, identification is
based on the off-diagonal tangent spaces

\[
T_d^{\mathrm{off}}(\Sigma_d)
=
\left\{
H\in T_d(\Sigma_d)
:
H_{ii}=0
\ \text{for all } i
\right\},
\qquad
d\in\mathfrak D.
\]
\paragraph{Principal-Angle Condition.}
For two tangent spaces \(U\) and \(V\), define the smallest principal angle by
\[
\theta(U,V)
=
\arccos
\left(
\sup_{\substack{u\in U,\;v\in V\\ \|u\|_F=\|v\|_F=1}}
\bigl|\langle u,v\rangle_F\bigr|
\right),
\]
with the convention $\theta(U,V)=\pi/2$ if $U=\{0\}$ or $V=\{0\}$.

\begin{assumption}[Principal-Angle Identification]
\label{ass:principalangle}
There exists \(\theta_0>0\) such that
\[
\theta\bigl(T_i^{\mathrm{off}}(\Sigma_i),\,T_j^{\mathrm{off}}(\Sigma_j)\bigr)\ge \theta_0
\]
for all \(i\neq j\), \(i,j\in\mathfrak D\).
\end{assumption}
Formal definitions of the covariance geometries, regularity conditions,
and tangent spaces are given in the Online Appendix.

\begin{assumption}[Sparse Projection Regularity]
\label{ass:sparse_unique}

Let $\mathfrak I_S$ denote the finite collection of admissible symmetric
off-diagonal supports with cardinality at most $k_S$, and let
$\mathcal C_I=\{\Gamma\succeq0:\operatorname{supp}_{\mathrm{off}}(\Gamma)
\subseteq I\}$. The population sparse projection has a unique active
support $I_0$: there exists $\varepsilon_S>0$ such that
\[
\operatorname{dist}^2(\Gamma_0,\mathcal C_I)
\ge
\operatorname{dist}^2(\Gamma_0,\mathcal C_{I_0})+\varepsilon_S,
\qquad I\in\mathfrak I_S,\ I\neq I_0.
\]
This objective-gap condition, rather than an ordering of raw entries, is
the appropriate support-stability condition once positive
semidefiniteness is imposed.

\end{assumption}

\begin{lemma}[Principal Angle and Tangent-Space Transversality]
\label{lem:angle_transversality}
Let $U$ and $V$ be closed linear subspaces of $\mathbb H$. Then
$\theta(U,V)>0$ if and only if $U\cap V=\{0\}$.
\end{lemma}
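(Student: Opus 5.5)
The plan is to prove both directions directly from the definition of $\theta(U,V)$, using that $\mathbb H$ is finite-dimensional: it has dimension $N(N+1)/2$, so closed subspaces are just linear subspaces and their unit spheres are compact. Write
\[
c(U,V)=\sup_{\substack{u\in U,\;v\in V\\ \|u\|_F=\|v\|_F=1}}|\langle u,v\rangle_F|,
\]
so that $\theta(U,V)=\arccos c(U,V)$. Since $\arccos$ is strictly decreasing on $[0,1]$ with $\arccos 1=0$, we have $\theta(U,V)>0$ if and only if $c(U,V)<1$. Cauchy--Schwarz gives $c(U,V)\le 1$. The degenerate case $U=\{0\}$ or $V=\{0\}$ is handled by the convention $\theta=\pi/2>0$; in that case $U\cap V=\{0\}$ trivially, so the equivalence holds. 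Assume from now on that both subspaces are nonzero.

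For ($\Rightarrow$), argue by contraposition. Suppose $0\neq w\in U\cap V$ and set $u=v=w/\|w\|_F$. Then $|\langle u,v\rangle_F|=1$, so $c(U,V)=1$ and $\theta(U,V)=0$.

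For ($\Leftarrow$), suppose $U\cap V=\{0\}$. The function $(u,v)\mapsto|\langle u,v\rangle_F|$ is continuous on the product of unit spheres $\{u\in U:\|u\|_F=1\}\times\{v\in V:\|v\|_F=1\}$. This set is compact because each sphere is closed and bounded in a finite-dimensional space. The supremum is therefore attained at some pair $(u^\ast,v^\ast)$. If $c(U,V)=1$, the equality case of Cauchy--Schwarz forces $v^\ast=\pm u^\ast$. That puts the unit vector $u^\ast$ in $U\cap V$, a contradiction. Hence $c(U,V)<1$ and $\theta(U,V)>0$.

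The only step with real content is the attainment of the supremum in ($\Leftarrow$). This step is exactly where finite dimensionality (equivalently, the compactness of the unit spheres) is needed. In an infinite-dimensional Hilbert space, two closed subspaces can intersect trivially yet have principal angle zero, so the lemma would fail there; the paper's fixed-$N$ setting makes it valid. Should one want to avoid compactness of the product set, an alternative is to write $c(U,V)=\|P_VP_U\|_{\mathrm{op}}$, where $P_U$ and $P_V$ are the orthogonal projections onto $U$ and $V$. The argument then reduces to a norm being attained on a finite-dimensional operator, with the same equality-case conclusion.
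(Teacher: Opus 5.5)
Your proposal is correct and follows essentially the same route as the paper. In both, one direction uses a normalized common vector $w/\|w\|_F$, and the other uses compactness of the unit spheres in the finite-dimensional $\mathbb H$ together with the Cauchy--Schwarz equality case. Your explicit handling of the $\{0\}$ convention and of the sign $v^\ast=\pm u^\ast$ coming from the absolute value are small refinements the paper leaves implicit; the paper works with a maximizing sequence rather than direct attainment of the supremum, which is immaterial.
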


\begin{theorem}[Well-Defined Dependence Profiles and Pairwise Geometric
Separation]
\label{thm:identification}
Suppose $\sum_{d\in\mathfrak D}S_{d,0}>0$ and the projected points
$\Sigma_d=P_d(\Gamma_0)$ are regular points at which the projections are
locally single-valued and continuous (Assumption~\ref{ass:sparse_unique}
for the sparse geometry). Then:
\begin{enumerate}
\item[(i)] \emph{(Well-definedness.)} The map
$\Gamma\mapsto\omega(\Gamma)=(\omega_C,\omega_F,\omega_S)(\Gamma)$ is
single-valued and continuous in a neighborhood of $\Gamma_0$; in
particular $\omega(\Gamma_0)$ is a well-defined functional of the
population dependence operator and is estimable at the rate of
Theorem~\ref{thm:profile_consistency}.
\item[(ii)] \emph{(Pairwise tangent separation.)} If in addition
Assumption~\ref{ass:principalangle} holds, then for every $i\neq j$ the
off-diagonal tangent spaces at the corresponding projected points
intersect trivially,
\[
T_i^{\mathrm{off}}(\Sigma_i)\cap T_j^{\mathrm{off}}(\Sigma_j)=\{0\}.
\]
\end{enumerate}
\end{theorem}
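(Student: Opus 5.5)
Part (i) follows by composing continuous maps, and part (ii) is a direct application of Lemma~\ref{lem:angle_transversality}; I take them in turn.

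\emph{Part (i).} By hypothesis each $P_d$ is locally single-valued and continuous at $\Gamma_0$: there is a neighborhood $\mathcal N_d$ of $\Gamma_0$ on which $P_d(\Gamma)$ is a single point and $\Gamma\mapsto P_d(\Gamma)$ is continuous. Existence of the minimizer comes from Lemma~\ref{lem:projectionexistence}, since each $\mathcal S_d$ is closed. Continuity of a locally unique selection comes from Lemma~\ref{lem:projectioncontinuity}.

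For the sparse cone the hypothesis is supplied by Assumption~\ref{ass:sparse_unique}, and I would check this explicitly. The map $\Gamma\mapsto\operatorname{dist}^2(\Gamma,\mathcal C_I)$ is continuous for every $I$, being a distance to a closed set. The family $\mathfrak I_S$ is finite. Hence the gap $\varepsilon_S$ persists, say as $\varepsilon_S/2$, on a small ball around $\Gamma_0$. On that ball the active support is $I_0$, so $P_S(\Gamma)=\Pi_{\mathcal C_{I_0}}(\Gamma)$. This is the projection onto a closed convex cone, which is single-valued and $1$-Lipschitz.

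Let $\mathcal N=\bigcap_d\mathcal N_d$. Each score $S_d(\Gamma)=\|P_d(\Gamma)\|_F^2$ is then continuous on $\mathcal N$, because the norm is continuous. Their sum is positive at $\Gamma_0$ by assumption, so after shrinking $\mathcal N$ it is bounded away from zero. The map $\omega(\Gamma)=S(\Gamma)/\sum_d S_d(\Gamma)$ is therefore a quotient of continuous functions with a nonvanishing denominator, hence single-valued and continuous on $\mathcal N$. Applied at $\Gamma_0$, this makes $\omega(\Gamma_0)$ a well-defined functional of the population operator.

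Estimability follows from Proposition~\ref{prop:consistency_operator}: since $\widehat\Gamma_T\to\Gamma_0$ in probability, $\widehat\Gamma_T\in\mathcal N$ with probability tending to one. The continuous mapping theorem then gives $\omega(\widehat\Gamma_T)\to_p\omega(\Gamma_0)$. The rate itself is deferred to Theorem~\ref{thm:profile_consistency}, as the statement indicates.

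\emph{Part (ii).} Each $T_d^{\mathrm{off}}(\Sigma_d)$ is the intersection of the linear tangent space $T_d(\Sigma_d)$ with the linear subspace of zero-diagonal matrices. It is therefore a linear subspace of the finite-dimensional space $\mathbb H$, and hence closed. Assumption~\ref{ass:principalangle} gives $\theta(T_i^{\mathrm{off}},T_j^{\mathrm{off}})\ge\theta_0>0$, and Lemma~\ref{lem:angle_transversality} then yields trivial intersection.

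To keep the argument self-contained, the direction needed is short. If $0\ne H\in T_i^{\mathrm{off}}\cap T_j^{\mathrm{off}}$, take $u=v=H/\|H\|_F$. Then $|\langle u,v\rangle_F|=1$, so $\theta=0$, a contradiction. The convention $\theta=\pi/2$ for trivial spaces causes no issue, because the intersection is trivially $\{0\}$ in that case.

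\emph{Main obstacle.} The only delicate step is confirming that the tangent spaces are genuine linear subspaces at the regular points, so that Lemma~\ref{lem:angle_transversality} applies. This matters most for the factor cone, which is a nonconvex union, and the sparse cone, which is a union of convex cones. For the sparse cone, unique active support reduces the tangent object to that of the single convex cone $\mathcal C_{I_0}$ at $\Sigma_S$. I would take the tangent space to be its lineality (linear) part, as the Online Appendix definition presumably does. For the factor cone, regularity means $\operatorname{rank}(L)=r$, where the set is locally a smooth manifold and its tangent space is linear. If those definitions instead produced cones rather than subspaces, the part (ii) argument still goes through, since it uses only $u=v$.
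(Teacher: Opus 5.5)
Your proposal is correct and follows the paper's proof. Part (i) composes the locally single-valued, continuous projections with $\|\cdot\|_F^2$ and with the normalization on a neighborhood where the denominator stays positive, and part (ii) is Assumption~\ref{ass:principalangle} combined with Lemma~\ref{lem:angle_transversality}. The differences are minor: you check sparse support stability directly from the objective gap, and you use the theorem's own continuity hypothesis where the paper cites Lemmas~\ref{lem:individual_uniqueness} and~\ref{lem:projection_lipschitz}. Your ``main obstacle'' is settled by the Online Appendix, which defines each tangent space explicitly as a linear subspace; for the sparse cone at a relative-interior point this is the full support subspace of $I_0$, not a lineality space.
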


\begin{remark}[What Theorem~\ref{thm:identification} does and does not
assert]
\label{rem:identification_scope}
Part~(i) is a statement of well-definedness and estimability, not of
injectivity: because $\omega$ takes values in the two-dimensional simplex
while $\mathbb H$ has dimension $N(N+1)/2$, the profile map cannot be
locally injective on a full neighborhood of $\Gamma_0$, and no such claim
is made. The profile is a deliberate low-dimensional summary; the
projection residuals $\rho_d$ record what it discards.

Part~(ii) asserts pairwise trivial intersection only. A positive
principal angle means the tangent spaces do not intersect; it does not
make them orthogonal, so a perturbation lying in $T_i^{\mathrm{off}}$ may
still have a nonzero projection onto $T_j^{\mathrm{off}}$ and move more
than one score to first order. Nor do pairwise trivial intersections
imply a direct-sum decomposition when three or more geometries are
present---three distinct lines in a plane intersect pairwise trivially
without spanning independently. Attributing a first-order profile change
to a \emph{unique} geometry therefore requires the strictly stronger
condition that
$T_C^{\mathrm{off}}\oplus T_F^{\mathrm{off}}\oplus T_S^{\mathrm{off}}$ be
a direct sum on the relevant perturbation space, or equivalently that the
Jacobian of the score vector have full rank; we do not impose it and do
not claim unique attribution. What Assumption~\ref{ass:principalangle}
delivers, and what the subsequent theory uses, is that distinct
geometries are locally distinguishable in the pairwise sense of~(ii). The
complementary configuration, in which two geometries share a common point
\emph{and} a common tangent direction, is exactly the source of the
fundamental limit in Theorem~\ref{thm:fundamental_limit}.
\end{remark}

\begin{assumption}[Local Asymptotic Normality]
\label{ass:lan}
Let $P_\Gamma^T$ denote the joint distribution of $(u_1,\ldots,u_T)$
when the cross-sectional covariance matrix is $\Gamma$. The statistical
model $\{P_\Gamma^T:\Gamma\in\mathbb H\}$ satisfies local asymptotic
normality (LAN) at $\Gamma_0$, \emph{uniformly over bounded local
perturbations}: for every bounded sequence $H_T\to H$ in $\mathbb H$, with
the $o_p(1)$ remainder below uniform over $\|H\|_{\mathcal I}\le B$ for
each fixed $B<\infty$,
\[
\log\frac{dP_{\Gamma_0+T^{-1/2}H_T}^T}{dP_{\Gamma_0}^T}
=
\frac{1}{\sqrt T}\sum_{t=1}^T \ell_t(H)
-
\tfrac12\|H\|_{\mathcal I}^2
+o_p(1),
\]
where $\ell_t(H)$ is a zero-mean score with $E[\ell_t(H)^2]=\|H\|_{\mathcal I}^2$ for
a positive-definite inner product $\|\cdot\|_{\mathcal I}$ on $\mathbb H$.
The local perturbations $\Gamma_0+T^{-1/2}H_T$ are required to be
positive semidefinite and to belong to the parameter space for all
sufficiently large $T$; when $\Gamma_0$ is positive definite this holds
for every fixed $H\in\mathbb H$, and otherwise $H$ is restricted to
admissible tangent directions.
In addition, the local experiment is asymptotically continuous in
Hellinger distance: for any bounded local sequences $H_T$ and $G_T$ with
$\|H_T-G_T\|_{\mathcal I}\to0$,
\[
H\!\left(
P_{\Gamma_0+T^{-1/2}H_T}^T,\,
P_{\Gamma_0+T^{-1/2}G_T}^T
\right)\to0.
\]
This property follows from differentiability in quadratic mean in the
regular finite-dimensional covariance experiments for which LAN is
invoked.
\end{assumption}
Assumption~\ref{ass:lan} is imposed only for the local impossibility
result (Theorem~\ref{thm:fundamental_limit}). It holds in regular
finite-dimensional covariance models, for example Gaussian experiments
with covariance parameter $\Gamma$; see \citet{LeCamYang2000},
Chapter~7. The assumption should be interpreted as a local regularity
condition on the statistical experiment at $\Gamma_0$, not as an
additional parametric restriction on the cross-sectional covariance
geometry being learned.
\subsection{Fundamental Limits of Dependence Learning}
The next result shows that ambiguity in dependence learning is not merely a
finite-sample phenomenon: when two geometries share a common off-diagonal tangent
direction, the data contain no first-order information distinguishing them along
local alternatives, so near-ties should be read as intrinsic ambiguity at the
relevant local scale rather than procedure failure.

\begin{theorem}[Fundamental limit of dependence learning]
\label{thm:fundamental_limit}
Let $\mathcal S_i$ and $\mathcal S_j$ be two covariance geometries with
$i\neq j$.
Suppose Assumption~\ref{ass:lan} holds at a regular point
$\Gamma_0\in\mathcal S_i\cap\mathcal S_j$, so that
$P_i(\Gamma_0)=P_j(\Gamma_0)=\Gamma_0$ and the off-diagonal tangent
spaces $T_i^{\mathrm{off}}(\Gamma_0)$, $T_j^{\mathrm{off}}(\Gamma_0)$
are evaluated at $\Gamma_0$ itself. Suppose there exists a nonzero
off-diagonal direction
$H \in T_i^{\mathrm{off}}(\Gamma_0)\cap T_j^{\mathrm{off}}(\Gamma_0)$.
 
Let $\Gamma_{i,T}\in\mathcal S_i$ and $\Gamma_{j,T}\in\mathcal S_j$ satisfy
\[
\Gamma_{i,T}
=
\Gamma_0+\frac{1}{\sqrt T}H+o(T^{-1/2}),
\qquad
\Gamma_{j,T}
=
\Gamma_0+\frac{1}{\sqrt T}H+o(T^{-1/2}).
\]
 
Then, for any sequence of tests $\varphi_T\in[0,1]$,
\[
\limsup_{T\to\infty}
\bigl|E_{\Gamma_{i,T}}\varphi_T-E_{\Gamma_{j,T}}\varphi_T\bigr|
=0.
\]
Consequently, no test can have asymptotic size tending to zero and power
tending to one for distinguishing $\mathcal S_i$ from $\mathcal S_j$ along
these local sequences.
\end{theorem}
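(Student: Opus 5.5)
The plan is to bound the difference in expectations by the total variation distance between $P^T_{\Gamma_{i,T}}$ and $P^T_{\Gamma_{j,T}}$, and then bound that by Hellinger distance. For any test $\varphi_T\in[0,1]$,
\[
\bigl|E_{\Gamma_{i,T}}\varphi_T-E_{\Gamma_{j,T}}\varphi_T\bigr|
\le d_{TV}\bigl(P^T_{\Gamma_{i,T}},P^T_{\Gamma_{j,T}}\bigr)
\le \sqrt2\, H\bigl(P^T_{\Gamma_{i,T}},P^T_{\Gamma_{j,T}}\bigr),
\]
where the second inequality is the standard Le Cam bound (up to the normalization convention for $H$). So it suffices to show that the Hellinger distance tends to zero. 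The bound holds uniformly over tests, which gives the limsup statement for every sequence $\varphi_T$.

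To apply the asymptotic-continuity clause of Assumption~\ref{ass:lan}, first rewrite the two sequences in local form. Set $H_{i,T}=\sqrt T(\Gamma_{i,T}-\Gamma_0)$ and $H_{j,T}=\sqrt T(\Gamma_{j,T}-\Gamma_0)$. Then $\Gamma_{i,T}=\Gamma_0+T^{-1/2}H_{i,T}$, and the hypothesis $\Gamma_{i,T}=\Gamma_0+T^{-1/2}H+o(T^{-1/2})$ says exactly that $H_{i,T}\to H$, and likewise $H_{j,T}\to H$. Both sequences are therefore bounded.

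Next, check admissibility. The matrices $\Gamma_{i,T}$ and $\Gamma_{j,T}$ lie in $\mathcal S_i,\mathcal S_j\subset\mathbb H_+$ by Definition~\ref{def:covariance_cone_dictionary}, so they are PSD points of the parameter space. Assumption~\ref{ass:lan} applies to them, with $H$ an admissible tangent direction when $\Gamma_0$ is singular. Since $\|\cdot\|_{\mathcal I}$ is an inner-product norm on the finite-dimensional space $\mathbb H$, it is equivalent to $\|\cdot\|_F$. Hence
\[
\|H_{i,T}-H_{j,T}\|_{\mathcal I}\le c\,\|H_{i,T}-H_{j,T}\|_F\to0,
\]
and the Hellinger-continuity part of Assumption~\ref{ass:lan} gives $H(P^T_{\Gamma_{i,T}},P^T_{\Gamma_{j,T}})\to0$.

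As a cross-check, I would also give a route that uses only the LAN expansion. By the uniform LAN expansion, the log-likelihood ratios of both sequences against $P^T_{\Gamma_0}$ converge to the same limit, $T^{-1/2}\sum_t\ell_t(H)-\tfrac12\|H\|_{\mathcal I}^2$, up to $o_p(1)$. By Le Cam's first lemma the alternatives are contiguous to $P^T_{\Gamma_0}$, so the log-ratio between the two alternatives is $o_p(1)$ under either alternative. Scheffé-type arguments then give $d_{TV}\to0$. The hypothesis that $H$ lies in $T_i^{\mathrm{off}}\cap T_j^{\mathrm{off}}$ matters only because it guarantees that such in-class sequences $\Gamma_{i,T},\Gamma_{j,T}$ exist (curves in each geometry with common velocity $H$). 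I would note this in a remark rather than in the proof.

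The final claim follows directly. If a test had size tending to zero under $\Gamma_{i,T}$ and power tending to one under $\Gamma_{j,T}$, then $|E_{\Gamma_{i,T}}\varphi_T-E_{\Gamma_{j,T}}\varphi_T|\to1$, which contradicts the limit just established.

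The main obstacle is the step from "same local parameter limit" to "vanishing Hellinger distance." Its rigor rests on the uniformity and Hellinger-continuity clauses of Assumption~\ref{ass:lan}, together with admissibility of the perturbations at a singular $\Gamma_0$. I would rely on the continuity clause directly and cite Assumption~\ref{ass:lan} for it.
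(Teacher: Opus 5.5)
Your proposal is correct and follows the paper's proof essentially step for step: you rewrite the sequences in local form with $H_{i,T},H_{j,T}\to H$, deduce $\|H_{i,T}-H_{j,T}\|_{\mathcal I}\to0$, invoke the Hellinger-continuity clause of Assumption~\ref{ass:lan}, and bound the test difference by the total variation distance, which Hellinger distance controls. Your LAN-plus-contiguity cross-check is a valid but unnecessary extra: it shows the continuity clause is essentially implied by the uniform LAN expansion, provided the score limit $T^{-1/2}\sum_t\ell_t(H)\Rightarrow N(0,\|H\|_{\mathcal I}^2)$, implicit in LAN, is available for Le Cam's first lemma.
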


Theorem~\ref{thm:fundamental_limit} establishes \emph{pairwise local
indistinguishability}: along these local sequences every test has asymptotically
identical rejection probability under the two geometries, so no classifier
separates them with non-trivial asymptotic power. We deliberately do not describe
this as a minimax lower bound---no quantified minimax risk bound is proved---but
as an impossibility statement along specified local sequences, which is what the
subsequent theory uses. It explains why the profile is deliberately continuous
rather than a hard model-selection device.

\begin{remark}[Identification versus Ambiguity]
\label{rem:identification_vs_ambiguity}
Theorems~\ref{thm:identification} and
\ref{thm:fundamental_limit} characterize the boundary between identifiable and ambiguous dependence
structures. Positive principal angles imply pairwise local separation of
the geometries, whereas overlapping tangent spaces imply local
indistinguishability.

\end{remark}


\section{Statistical Learning of Dependence Profiles}
\label{sec:estimation}

This section describes how the population dependence profile $\omega_0$
is estimated from data. The empirical dependence operator
$\widehat\Gamma_T$ was introduced in~\eqref{eq:sample_cov} and shown to
be consistent for $\Gamma_0$ in Proposition~\ref{prop:consistency_operator}.
Dependence profiles, projection-residual diagnostics, and their
sample counterparts are defined below.

\subsection{Choice of Dependence Operator}
\label{subsec:operatorchoice}

The empirical operator $\widehat\Gamma_T$ should capture the dependence relevant
for the inferential problem: the residual covariance
operator~\eqref{eq:sample_cov} for contemporaneous cross-sectional dependence, a
long-run covariance operator under serial dependence, or distance-, adjacency-,
or cluster-based operators for spatial, network, or clustered data. The framework
does not select a unique operator; it studies the covariance geometry encoded in
a chosen $\Gamma_0$. The dependence profile is invariant to positive rescalings
of the operator; additional examples and technical details are in the Online
Appendix.

\subsection{Projection-Residual Diagnostics}
\label{subsec:projection_residuals}

The dependence profile provides a relative measure of geometric fit.
To assess goodness-of-fit in an absolute sense, we introduce
projection-residual diagnostics.

Define the normalized projection residual
\[
\rho_d(\Gamma)
=
\frac{
\|
\Gamma-P_d(\Gamma)
\|_F
}
{
\|\Gamma\|_F
},
\qquad
d\in\mathfrak D,\ \Gamma\neq0.
\]
The quantity $\rho_d(\Gamma)$ measures the fraction of the operator
that remains unexplained after projection onto geometry $d$. Since
$0\in\mathcal S_d$, the definition of the projection implies
$\|\Gamma-P_d(\Gamma)\|_F\le\|\Gamma\|_F$, so $0\le\rho_d(\Gamma)\le1$.
Small values indicate that geometry $d$ provides a good approximation
to the dependence operator, whereas large values indicate substantial
lack of fit.

Define the minimum residual
\[
\rho_{\min}(\Gamma)
=
\min_{d\in\mathfrak D}
\rho_d(\Gamma).
\]
We say that a dependence operator exhibits a ``none-of-the-above''
pattern whenever $\rho_{\min}(\Gamma)>\delta$ for a prespecified
threshold $\delta\in(0,1)$. Large values of $\rho_{\min}$ indicate
that none of the cluster, factor, or sparse geometries provides an
adequate approximation to the observed dependence structure.

\begin{proposition}
\label{prop:residual_consistency}
Suppose Assumptions~\ref{ass:local_projection_regularity} and~\ref{ass:mixing} hold. Then
\[
\rho_d(\widehat\Gamma_T)
\overset{p}{\longrightarrow}
\rho_d(\Gamma_0),
\qquad
d\in\mathfrak D,
\]
and $\rho_{\min}(\widehat\Gamma_T)\overset{p}{\longrightarrow}
\rho_{\min}(\Gamma_0)$.
\end{proposition}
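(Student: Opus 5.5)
The plan is a continuous-mapping argument: first show that each $\rho_d$ is continuous at $\Gamma_0$, then combine this with Proposition~\ref{prop:consistency_operator}.

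\emph{Distance function.} Write the numerator as the distance function
\[
f_d(\Gamma)=\operatorname{dist}(\Gamma,\mathcal S_d)=\|\Gamma-P_d(\Gamma)\|_F .
\]
This is well defined for every $\Gamma$ by Lemma~\ref{lem:projectionexistence}, because each $\mathcal S_d$ is closed (Definition~\ref{def:covariance_cone_dictionary}). The distance to any nonempty set is $1$-Lipschitz:
\[
|f_d(\Gamma)-f_d(\Gamma')|\le\|\Gamma-\Gamma'\|_F .
\]
This needs no uniqueness of the projection. Assumption~\ref{ass:local_projection_regularity} is therefore not needed for the numerator: even if $P_d(\widehat\Gamma_T)$ is multivalued, every selection has the same distance. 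Regularity would only matter if one wanted $P_d(\widehat\Gamma_T)\to P_d(\Gamma_0)$, via Lemma~\ref{lem:projectioncontinuity}. The denominator $\|\Gamma\|_F$ is also $1$-Lipschitz.

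\emph{Nondegeneracy and continuity.} The ratio needs $\Gamma_0\neq0$. This follows because $\rho_d(\Gamma_0)$ is assumed defined; alternatively, $\sum_d S_{d,0}>0$ forces some $P_d(\Gamma_0)\neq0$, hence $\Gamma_0\neq0$, since $P_d(0)=0$ for cones. On the open set $\{\Gamma\neq0\}$ the map $\rho_d=f_d/\|\cdot\|_F$ is then continuous, indeed locally Lipschitz. Near $\Gamma_0$ a two-line estimate gives
\[
|\rho_d(\Gamma)-\rho_d(\Gamma_0)|\le 2\|\Gamma-\Gamma_0\|_F/\|\Gamma_0\|_F
\quad\text{whenever } \|\Gamma-\Gamma_0\|_F\le\|\Gamma_0\|_F/2 .
\]

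\emph{Stochastic step.} Under Assumption~\ref{ass:mixing}, Proposition~\ref{prop:consistency_operator} gives $\|\widehat\Gamma_T-\Gamma_0\|_F=O_p(T^{-1/2})$. Hence $P(\widehat\Gamma_T\neq0)\to1$, and on that event the local Lipschitz bound yields
\[
\rho_d(\widehat\Gamma_T)-\rho_d(\Gamma_0)=O_p(T^{-1/2}),
\]
which in particular converges to zero in probability. On the vanishing-probability event $\widehat\Gamma_T=0$ we may assign any convention in $[0,1]$ without affecting the limit.

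\emph{Minimum residual.} Since $\mathfrak D$ is finite and $(x_d)\mapsto\min_d x_d$ is $1$-Lipschitz in the sup norm,
\[
|\rho_{\min}(\widehat\Gamma_T)-\rho_{\min}(\Gamma_0)|\le\max_d|\rho_d(\widehat\Gamma_T)-\rho_d(\Gamma_0)|\overset{p}{\to}0 .
\]

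The only mildly delicate points are handling $\Gamma_0\neq0$ and observing that projection non-uniqueness, for instance in the nonconvex factor and sparse cones, is harmless because $\rho_d$ depends on $P_d$ only through the distance. I expect no real obstacle.
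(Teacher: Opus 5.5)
Your proof is correct, but it takes a different route from the paper. The paper first invokes Lemma~\ref{lem:projection_lipschitz}, that is, Assumption~\ref{ass:local_projection_regularity}. Combined with Proposition~\ref{prop:consistency_operator}, this gives $\|P_d(\widehat\Gamma_T)-P_d(\Gamma_0)\|_F=o_p(1)$. The paper then applies continuity of the Frobenius norm to the ratio, and the continuous mapping theorem to the finite minimum.

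You instead observe that the numerator of $\rho_d$ is the distance function $\operatorname{dist}(\cdot,\mathcal S_d)$. This function is $1$-Lipschitz for any nonempty set and takes the same value for every selection of the projection. You therefore never need convergence, uniqueness, or local single-valuedness of $P_d$.

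Your route buys several things:
\begin{itemize}
\item The proposition holds without Assumption~\ref{ass:local_projection_regularity}. Closedness of the cones is used only to write the infimum as $\|\Gamma-P_d(\Gamma)\|_F$.
\item It is robust to exactly the non-unique projections that can arise for the nonconvex factor and sparse cones.
\item It gives an explicit bound, and hence the rate $O_p(T^{-1/2})$.
\item It makes explicit the nondegeneracy $\Gamma_0\neq0$, which the paper's ratio argument uses silently.
\end{itemize}

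Your estimate also holds for every $\Gamma\neq0$, not only near $\Gamma_0$. Since $\operatorname{dist}(\Gamma,\mathcal S_d)\le\|\Gamma\|_F$, you get $|\rho_d(\Gamma)-\rho_d(\Gamma_0)|\le 2\|\Gamma-\Gamma_0\|_F/\|\Gamma_0\|_F$ directly.

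The paper's route has one advantage: it reuses the projection-consistency step that it needs anyway for Lemma~\ref{lem:uniform_projection_consistency} and the profile results. For this particular statement, however, the regularity hypothesis is superfluous.
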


\subsection{Similarity-Score Estimation}
\label{subsec:similarity_score_estimation}

Define the estimated projection, similarity score, and dependence
profile as
\[
\widehat P_d = P_d(\widehat\Gamma_T),
\qquad
\widehat S_d = \|\widehat P_d\|_F^2,
\qquad
\widehat\omega_d = \frac{\widehat S_d}{\sum_{d\in\mathfrak D}\widehat S_d},
\qquad d\in\mathfrak D.
\]
The estimated dependence profile is
$\widehat\omega=(\widehat\omega_C,\widehat\omega_F,\widehat\omega_S)'$.

The use of the projection \emph{magnitude} $\|P_d(\Gamma)\|_F^2$ rather
than the projection \emph{distance} $\|\Gamma-P_d(\Gamma)\|_F^2$ requires
justification, since the classes are nonconvex and the two are equivalent
only through a Pythagorean identity that is automatic for linear
subspaces but not for general sets. The following lemma supplies it.

\begin{lemma}[Conic Pythagoras]
\label{lem:conic_pythagoras}
Each geometry $\mathcal S_d$, $d\in\mathfrak D$, is a cone containing the
origin: $\Gamma\in\mathcal S_d$ and $a\ge 0$ imply $a\Gamma\in\mathcal
S_d$. Consequently, for every $\Gamma\in\mathbb H$ and every metric
projection $p=P_d(\Gamma)$,
\[
\langle \Gamma-p,\;p\rangle_F=0,
\qquad
\|\Gamma\|_F^2=\|P_d(\Gamma)\|_F^2+\|\Gamma-P_d(\Gamma)\|_F^2 .
\]
\end{lemma}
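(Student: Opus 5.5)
The plan has two parts. First we check the cone property for each member of the dictionary. Then we derive the orthogonality relation from the variational characterization of a metric projection onto a cone, using only scalings of the projected point.

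\emph{Cone property.} For $\mathcal S_C$, $\mathcal S_F(r)$ and $\mathcal S_S$, we check directly that $a\ge0$ preserves membership.

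For the cluster class, scaling preserves positive semidefiniteness, and the zero pattern outside the support of $M_C$ is unchanged.

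For the factor class, $a\Gamma=aL+aD$ with $aL\succeq0$ and $\operatorname{rank}(aL)\le r$, while $aD$ is diagonal and PSD.

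For the sparse class, $\operatorname{supp}_{\mathrm{off}}(a\Gamma)\subseteq\operatorname{supp}_{\mathrm{off}}(\Gamma)$, with equality when $a>0$ and the empty set when $a=0$. The cardinality bound $k_S$ is therefore preserved.

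Taking $a=0$ shows that $0\in\mathcal S_d$ in each case. This is also immediate from Definition~\ref{def:covariance_cone_dictionary}. Existence of $p=P_d(\Gamma)$ follows from Lemma~\ref{lem:projectionexistence}, since each class is closed.

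\emph{Orthogonality.} Fix any metric projection $p$; uniqueness is not needed. Because $\mathcal S_d$ is a cone, the whole ray $\{ap:a\ge0\}$ lies in $\mathcal S_d$. Minimality of $p$ then gives, for every $a\ge0$,
\[
f(a)=\|\Gamma-ap\|_F^2=\|\Gamma\|_F^2-2a\langle\Gamma,p\rangle_F+a^2\|p\|_F^2\ \ge\ f(1).
\]

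If $p=0$, the identity $\langle\Gamma-p,p\rangle_F=0$ is trivial.

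If $p\ne0$, then $f$ is a strictly convex quadratic in $a$ on $[0,\infty)$, and it attains its minimum over this half-line at the interior point $a=1$. Hence $f'(1)=0$, that is, $-2\langle\Gamma,p\rangle_F+2\|p\|_F^2=0$. This is exactly $\langle\Gamma-p,p\rangle_F=0$.

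\emph{Pythagorean identity.} Expand $\|\Gamma\|_F^2=\|(\Gamma-p)+p\|_F^2=\|\Gamma-p\|_F^2+2\langle\Gamma-p,p\rangle_F+\|p\|_F^2$. The cross term vanishes by the previous step, which gives the identity. The argument never uses convexity, so it applies equally to the nonconvex factor and sparse cones.

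The only point needing care is the stationarity step at $a=1$. It requires $a=1$ to be an interior point of the admissible scaling range, which holds because the classes are closed under all $a\ge0$, not only under $a\in[0,1]$. Checking the cone property class by class is therefore the substantive step, and it is routine. A corollary worth recording is that $S_d(\Gamma)=\|\Gamma\|_F^2-\|\Gamma-P_d(\Gamma)\|_F^2$. Ranking geometries by projection magnitude is thus equivalent to ranking them by projection distance.
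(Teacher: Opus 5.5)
Your proof is correct and follows essentially the same route as the paper. Both verify the cone property class by class, then note that $t\mapsto\|\Gamma-tp\|_F^2$ is minimized over $t\ge 0$ at the interior point $t=1$, so the first-order condition gives $\langle\Gamma-p,p\rangle_F=0$ (trivially when $p=0$), and the Pythagorean identity follows by expanding the square.
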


\noindent
The proof is in the Online Appendix.

\noindent
Lemma~\ref{lem:conic_pythagoras} shows that the similarity score and the
projection residual are two encodings of the same information:
\begin{equation}
S_d(\Gamma)=\|\Gamma\|_F^2\bigl(1-\rho_d(\Gamma)^2\bigr),
\qquad
\rho_d(\Gamma)=\frac{\|\Gamma-P_d(\Gamma)\|_F}{\|\Gamma\|_F},
\qquad \Gamma\neq0 .
\label{eq:score_residual_duality}
\end{equation}
In particular $S_i(\Gamma)>S_j(\Gamma)$ if and only if
$\|\Gamma-P_i(\Gamma)\|_F<\|\Gamma-P_j(\Gamma)\|_F$: ranking geometries by
projection magnitude is \emph{equivalent} to ranking them by
best-approximation distance, so the dominant geometry of
Section~\ref{sec:classification} is exactly the best-fitting geometry, and
no separate justification of the score is required.

\begin{lemma}[Diagonal invariance of the geometric ranking]
\label{lem:diagonal_invariance}
Suppose each projection reproduces the diagonal of its argument,
$\operatorname{diag}(P_d(\Gamma))=\operatorname{diag}(\Gamma)$ for all
$d\in\mathfrak D$. Then
\[
S_d(\Gamma)
=
\|\operatorname{diag}(\Gamma)\|_F^2
+
\|\operatorname{off}(P_d(\Gamma))\|_F^2 ,
\]
so the scores differ across geometries only through their off-diagonal
parts. Consequently the dominant geometry satisfies
$d^\star=\arg\max_d\|\operatorname{off}(P_d(\Gamma))\|_F^2$ and is
invariant to the diagonal of $\Gamma$.
\end{lemma}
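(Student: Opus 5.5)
The identity follows by splitting $P_d(\Gamma)$ into its diagonal and off-diagonal parts, which are orthogonal in the Frobenius inner product. Write $\operatorname{diag}(A)$ for the diagonal matrix carrying the diagonal of $A$, and set $\operatorname{off}(A)=A-\operatorname{diag}(A)$. Since $\langle \operatorname{diag}(A),\operatorname{off}(B)\rangle_F=\sum_i A_{ii}\,\operatorname{off}(B)_{ii}=0$ for any $A,B\in\mathbb H$, we have the decomposition
\[
\|P_d(\Gamma)\|_F^2=\|\operatorname{diag}(P_d(\Gamma))\|_F^2+\|\operatorname{off}(P_d(\Gamma))\|_F^2 .
\]
Substituting the hypothesis $\operatorname{diag}(P_d(\Gamma))=\operatorname{diag}(\Gamma)$ gives the displayed formula for $S_d(\Gamma)$ directly.

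For the ranking, the first term $\|\operatorname{diag}(\Gamma)\|_F^2$ is the same for every $d\in\mathfrak D$. Hence $S_i(\Gamma)>S_j(\Gamma)$ holds if and only if $\|\operatorname{off}(P_i(\Gamma))\|_F^2>\|\operatorname{off}(P_j(\Gamma))\|_F^2$. It follows that $\arg\max_d S_d(\Gamma)=\arg\max_d\|\operatorname{off}(P_d(\Gamma))\|_F^2$, including the set of maximizers when there are ties. By the normalization defining $\omega$, the dominant geometry $d^\star$ is the maximizer of $S_d$, so it is the maximizer of the off-diagonal projection norm.

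Invariance to the diagonal is the one step needing care. I would state it as follows: if $\Gamma'=\Gamma+D$ with $D$ diagonal and both $\Gamma,\Gamma'$ in the domain where the hypothesis holds, then $d^\star(\Gamma')=d^\star(\Gamma)$. Proving this requires showing $\operatorname{off}(P_d(\Gamma+D))=\operatorname{off}(P_d(\Gamma))$, and the orthogonality argument above does not give that.

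For the nonconvex or PSD-constrained classes, $P_d$ is a joint minimization and need not commute with diagonal shifts. My first approach is to argue that, under the hypothesis, the projection problem separates. If every minimizer reproduces the diagonal of its argument, then $P_d(\Gamma)$ minimizes $\|\operatorname{off}(\Gamma)-\operatorname{off}(\Gamma'')\|_F^2$ over those $\Gamma''\in\mathcal S_d$ whose diagonal equals $\operatorname{diag}(\Gamma)$. So the off-diagonal part of the projection depends on the diagonal only through the constraint set. This is the main obstacle, because PSD feasibility can depend on the diagonal.

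If a clean commutation argument is unavailable, I would interpret ``invariant to the diagonal'' in the weaker sense that the lemma actually supports. In that reading, the ranking criterion $\|\operatorname{off}(P_d(\Gamma))\|_F^2$ is built from off-diagonal projection content only, with the common diagonal term cancelling across $d$. That weaker claim follows immediately from the first two steps.
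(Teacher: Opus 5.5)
Your first two steps are exactly the paper's proof: the Frobenius-orthogonal split $\|P_d(\Gamma)\|_F^2=\|\operatorname{diag}(P_d(\Gamma))\|_F^2+\|\operatorname{off}(P_d(\Gamma))\|_F^2$, with the diagonal hypothesis making the first term the common constant $\|\operatorname{diag}(\Gamma)\|_F^2$, which cancels from the ranking. The paper's proof also supports only the weaker reading of ``invariant to the diagonal'' that you fall back on. You are right that the stronger claim would need $\operatorname{off}(P_d(\Gamma+D))=\operatorname{off}(P_d(\Gamma))$ for diagonal $D$; the hypothesis does not deliver this once PSD or nonnegativity constraints can bind, and the paper does not address it either.
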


\noindent
The proof is in the Online Appendix.

\noindent
The diagonal condition holds exactly for the cluster geometry (unit-diagonal
support) and the sparse geometry (off-diagonal thresholding), and for the factor
geometry whenever the idiosyncratic nonnegativity constraint is slack, since
$D=\operatorname{diag}(\Gamma-L)$ then reproduces $\operatorname{diag}(\Gamma)$.
When it holds for every geometry, the $S_d$ ranking depends only on off-diagonal
dependence, aligning classification with the principal-angle analysis. For the
factor geometry, when the constraint binds the diagonal can affect the
full-profile ranking, so the full and off-diagonal profiles need not rank
geometries identically. The off-diagonal profile is therefore a formal companion
estimand, not an optional robustness check:
\begin{equation}
S_d^{\mathrm{off}}(\Gamma)
=\|\operatorname{off}\{P_d(\Gamma)\}\|_F^2,
\qquad
\omega_d^{\mathrm{off}}(\Gamma)
=\frac{S_d^{\mathrm{off}}(\Gamma)}
{\sum_{j\in\mathfrak D}S_j^{\mathrm{off}}(\Gamma)},
\label{eq:off_profile}
\end{equation}
whenever the denominator is positive, with sample counterpart
\[
\widehat\omega_d^{\mathrm{off}}
=\frac{\|\operatorname{off}(\widehat P_d)\|_F^2}
{\sum_{j}\|\operatorname{off}(\widehat P_j)\|_F^2}.
\]
The full profile measures overall
covariance fit; the off-diagonal profile measures dependence architecture after
removing marginal variances. Classification and profile-guided inference use the
off-diagonal profile as the primary rule (the identification analysis is on the
off-diagonal tangent spaces), with the full profile as a companion diagnostic;
all consistency and delta-method results below apply jointly to
$(\omega,\omega^{\mathrm{off}})$ by stacking the two smooth score maps whenever
both denominators are positive.

\paragraph{Why the off-diagonal profile.}
It is natural to ask why geometry learning is based on
$\omega^{\mathrm{off}}$ rather than on the full profile $\omega$. The reason is
that the diagonal of the covariance operator carries \emph{marginal variance}
information that is common to all candidate geometries and largely uninformative
about which \emph{dependence} structure generated the data. Every geometry in the
dictionary reproduces (or nearly reproduces) $\operatorname{diag}(\Gamma)$, so the
diagonal contributes a common, geometry-invariant term
$\|\operatorname{diag}(\Gamma)\|_F^2$ to each score $S_d$. Including it in the
profile therefore adds the \emph{same} constant to every numerator, mechanically
pulling the full profile toward the uniform vector $(1/3,1/3,1/3)$ and
compressing exactly the differences across geometries that classification must
detect. This diagonal term also makes the full profile sensitive to the
\emph{scale} of individual series: rescaling one unit's disturbance inflates its
variance and hence its diagonal contribution, shifting $\omega$ without changing
the dependence pattern at all. The off-diagonal profile removes both effects. By
restricting the scores to $\operatorname{off}(P_d)$ it discards the common
diagonal term and the marginal variances, isolating the cross-sectional
co-movement that actually distinguishes clustering, factor, and sparse
dependence. This is also why the identification and impossibility results are
stated on the off-diagonal tangent spaces: it is there, not on the diagonal, that
two geometries either separate or overlap. The full profile is retained because
it answers a different and still useful question---how well the dictionary
approximates the entire operator, diagonal included---and because a large gap
between $\omega$ and $\omega^{\mathrm{off}}$ is itself a diagnostic that marginal
heterogeneity is masking the dependence geometry.

\begin{proposition}[Joint consistency of full and off-diagonal profiles]
\label{prop:joint_profile_consistency}
Under Assumptions~\ref{ass:local_projection_regularity}
and~\ref{ass:mixing}, and if both normalizing denominators are positive,
then
\[
(\widehat\omega-\omega_0,
 \widehat\omega^{\mathrm{off}}-\omega_0^{\mathrm{off}})
=O_p(T^{-1/2}).
\]
Under Assumptions~\ref{ass:operatorclt}
and~\ref{ass:projection_differentiability}, and the same positivity
conditions,
\[
\sqrt T
\begin{pmatrix}
\widehat\omega-\omega_0\\
\widehat\omega^{\mathrm{off}}-\omega_0^{\mathrm{off}}
\end{pmatrix}
\Rightarrow N(0,\Xi_{\mathrm{joint}}),
\]
where $\Xi_{\mathrm{joint}}$ is obtained by applying the stacked
full- and off-diagonal-profile Jacobian to $\Omega_\Gamma$.  We write
$\Xi_{\mathrm{off}}$ for its off-diagonal-profile marginal block.
\end{proposition}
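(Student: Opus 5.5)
The plan is to treat both profiles as compositions of a single map $\Gamma\mapsto(P_C(\Gamma),P_F(\Gamma),P_S(\Gamma))$ with smooth finite-dimensional functions. Writing $\operatorname{off}(\cdot)$ for the linear map that zeroes the diagonal, the full and off-diagonal scores are
\[
S_d(\Gamma)=\|P_d(\Gamma)\|_F^2,\qquad S_d^{\mathrm{off}}(\Gamma)=\|\operatorname{off}\{P_d(\Gamma)\}\|_F^2 .
\]
These are quadratic in $P_d(\Gamma)$, hence $C^\infty$ and locally Lipschitz in it. The profiles $\omega$ and $\omega^{\mathrm{off}}$ are obtained from the score vectors by the normalization $g(s)=s/\sum_d s_d$. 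This map is $C^\infty$ on the open set where the denominator is positive, and that set contains the population score vectors by the positivity hypothesis. So everything reduces to local properties of the projections at $\Gamma_0$, combined with Proposition~\ref{prop:consistency_operator}.

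\textbf{Rate.} I will use Assumption~\ref{ass:local_projection_regularity} in the form I expect it to take: each $P_d$ is locally single-valued and locally Lipschitz in a neighborhood $\mathcal N$ of $\Gamma_0$. For the sparse geometry, the objective gap in Assumption~\ref{ass:sparse_unique} supports this. The distance to each $\mathcal C_I$ is $1$-Lipschitz, so for $\Gamma$ near $\Gamma_0$ the active support stays $I_0$. Then $P_S$ coincides locally with the projection onto the closed convex cone $\mathcal C_{I_0}$, which is nonexpansive. By Proposition~\ref{prop:consistency_operator}, $\widehat\Gamma_T\in\mathcal N$ with probability tending to one, and on that event
\[
\|\widehat P_d-\Sigma_d\|_F\le L_d\|\widehat\Gamma_T-\Gamma_0\|_F=O_p(T^{-1/2}).
\]
Composing with the locally Lipschitz maps $p\mapsto\|p\|_F^2$, $p\mapsto\|\operatorname{off}(p)\|_F^2$ and $g$ gives $O_p(T^{-1/2})$ for both profile vectors. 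By positivity, the estimated denominators stay bounded away from zero with probability tending to one, so $g$ is applied on a compact neighborhood where it is Lipschitz. Stacking the two vectors preserves the rate.

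\textbf{Asymptotic normality.} Assumption~\ref{ass:operatorclt} should give
\[
\sqrt T\,\operatorname{vec}(\widehat\Gamma_T-\Gamma_0)\Rightarrow N(0,\Omega_\Gamma),
\]
via the mixing CLT for $\psi_t$ plus negligible residual replacement. Assumption~\ref{ass:projection_differentiability} should give Fréchet (or at least Hadamard) differentiability of each $P_d$ at $\Gamma_0$, with derivative $DP_d(\Gamma_0)$. Define the stacked map
\[
\Phi(\Gamma)=\bigl(g(S(\Gamma)),\,g(S^{\mathrm{off}}(\Gamma))\bigr).
\]
By the chain rule, $\Phi$ is differentiable at $\Gamma_0$. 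The relevant derivatives are
\[
DS_d[H]=2\langle\Sigma_d,DP_d(\Gamma_0)[H]\rangle_F,\qquad
DS_d^{\mathrm{off}}[H]=2\langle\operatorname{off}(\Sigma_d),DP_d(\Gamma_0)[H]\rangle_F,
\]
together with
\[
Dg(s)=\Bigl(I-\frac{s\mathbf 1'}{\mathbf 1's}\Bigr)\Big/\mathbf 1's .
\]
The delta method then yields $\sqrt T(\Phi(\widehat\Gamma_T)-\Phi(\Gamma_0))\Rightarrow N(0,J\Omega_\Gamma J')$, where $J$ is the stacked Jacobian, so $\Xi_{\mathrm{joint}}=J\Omega_\Gamma J'$. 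The off-diagonal marginal block $\Xi_{\mathrm{off}}$ is read off directly. The limit is degenerate along $\mathbf 1$ in each block because both profiles sum to one, and the delta method allows this.

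\textbf{Main obstacle.} The main difficulty is the projection step for the nonconvex geometries, factor and sparse. I must make sure local single-valuedness and differentiability hold at the point $\Gamma_0$ itself, not only at $\Sigma_d$, and that the random $\widehat P_d$ is selected from the correct branch. For the sparse class, I will handle this as in the rate argument above: fix the support $I_0$ by the $\varepsilon_S$ gap, then differentiate the convex PSD-constrained projection under the assumed regularity. For the factor class, I would rely on Assumption~\ref{ass:projection_differentiability}, which I expect to encode an eigengap at rank $r$ and slackness of $D\succeq0$. I will also check that the positivity of $\sum_d S_d^{\mathrm{off}}(\Gamma_0)$ transfers to $\widehat\Gamma_T$ with probability tending to one; continuity of $S_d^{\mathrm{off}}$ gives this.
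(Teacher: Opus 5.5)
Your proposal is correct and follows essentially the same route as the paper. The rate comes from local Lipschitz continuity of each $P_d$ combined with Proposition~\ref{prop:consistency_operator}. The limit comes from Hadamard differentiability of the six full and off-diagonal score maps (your derivatives agree with the paper's because $\operatorname{off}$ is a self-adjoint projection), followed by the functional delta method and the two normalization maps, which gives $\Xi_{\mathrm{joint}}=J\Omega_\Gamma J'$. Your extra checks on the sparse branch and the factor eigengap are harmless but unnecessary: Assumptions~\ref{ass:local_projection_regularity} and~\ref{ass:projection_differentiability} already posit local single-valuedness, local Lipschitz continuity and Hadamard differentiability at $\Gamma_0$, and Assumption~\ref{ass:operatorclt} delivers the operator CLT directly without appeal to the mixing conditions.
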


\section{Asymptotic Theory}
\label{sec:asymptotics}

This section establishes consistency and asymptotic normality of the
estimated dependence profile $\widehat\omega$. Asymptotics are
throughout in $T\to\infty$ with $N$ fixed, as stated in
Remark~\ref{rem:asymptotic_regime}.

\begin{assumption}[Asymptotic Linearity of the Dependence Operator]
\label{ass:operatorclt}
There exists a mean-zero random vector $\psi_t\in\mathbb{R}^{N^2}$ such
that
\[
\sqrt T\,
\operatorname{vec}
(
\widehat\Gamma_T-\Gamma_0
)
=
\frac{1}{\sqrt T}
\sum_{t=1}^T
\psi_t
+
o_p(1),
\]
and
\[
\frac{1}{\sqrt T}\sum_{t=1}^T\psi_t
\Rightarrow
N(0,\Omega_\Gamma),
\qquad
\Omega_\Gamma
=
\sum_{h=-\infty}^{\infty}
E(\psi_0\psi_h'),
\]
the series converging absolutely. Because $\{\psi_t\}$ is serially
dependent in general, $\Omega_\Gamma$ is the \emph{long-run} covariance
of $\{\psi_t\}$ and not the contemporaneous second moment
$E(\psi_t\psi_t')$; the distinction matters because $\Xi$ in
Theorem~\ref{thm:profile_clt} is built from $\Omega_\Gamma$.
\end{assumption}

This assumption isolates the sampling uncertainty in the estimated dependence
operator and is the only ingredient needed to transport a time-series central
limit theorem through the geometric projection and normalization maps.

\begin{remark}[Primitive conditions for Assumption~\ref{ass:operatorclt}]
\label{rem:primitive_clt}
Assumption~\ref{ass:operatorclt} is a high-level linear-representation condition
implied by Assumption~\ref{ass:mixing}: the functional CLT for the sample
covariance holds with $\psi_t=\operatorname{vec}(u_tu_t'-\Sigma)$ and
$\Omega_\Gamma$ the long-run variance of $\{u_tu_t'\}$ \citep[Theorem~27.4]{Davidson1994}
when $u_t$ is observed. When $\widehat\Gamma_T$ is built from residuals
$\widehat u_t=u_t-X_t(\widehat\beta-\beta)$, residual replacement is
asymptotically negligible and $\psi_t$ is unchanged: the leading correction
$T^{-1}\sum_t X_t(\widehat\beta-\beta)u_t'$ is a bilinear form in
$\widehat\beta-\beta=O_p(T^{-1/2})$ and the cross-moment
$T^{-1}\sum_t X_t\otimes u_t$, which is itself $O_p(T^{-1/2})$ under the strict
exogeneity $E(X_tu_t')=0$, so the correction is $O_p(T^{-1})=o_p(T^{-1/2})$ and no
first-step term enters $\Omega_\Gamma$. (If strict exogeneity fails, the
cross-moment is $O_p(1)$, the correction is $O_p(T^{-1/2})$, and an explicit
first-step term must be added.) These conditions restrict only the
\emph{temporal} dependence of $\{u_t\}$, not the cross-sectional operator
$\Gamma_0$ (Remark~\ref{rem:two_dependence_notions}).
\end{remark}

\begin{assumption}[Local Projection Regularity]
\label{ass:local_projection_regularity}
For every $d\in\mathfrak D$, the metric projection $P_d$ is single-valued
and locally Lipschitz on a neighborhood of $\Gamma_0$.  For the sparse
geometry, Assumption~\ref{ass:sparse_unique} is a sufficient support-stability
condition once the fixed-support PSD projection is regular.  For the factor
geometry this is a high-level local regularity condition; no global uniqueness
of the nonconvex projection is imposed.
\end{assumption}

Local regularity rules out points at which an arbitrarily small perturbation of
$\Gamma_0$ changes the relevant projection branch or active support; it is a
local learnability requirement, not a global uniqueness claim.

\begin{lemma}[Uniform Consistency of Projection Estimators]
\label{lem:uniform_projection_consistency}
Suppose Assumptions~\ref{ass:local_projection_regularity} and~\ref{ass:mixing} hold. Then
\[
\max_{d\in\mathfrak D}
\|
\widehat P_d-P_{d,0}
\|_F
=
O_p(T^{-1/2}).
\]
\end{lemma}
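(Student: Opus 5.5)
The plan is to combine the $\sqrt T$-rate for the empirical operator with the local Lipschitz property of each projection, and to let the finiteness of the dictionary turn pointwise bounds into the maximum. Write $P_{d,0}=P_d(\Gamma_0)$. By Assumption~\ref{ass:local_projection_regularity}, for each $d\in\mathfrak D$ there are a radius $r_d>0$ and a constant $L_d<\infty$ such that $P_d$ is single-valued on the ball $B(\Gamma_0,r_d)=\{\Gamma\in\mathbb H:\|\Gamma-\Gamma_0\|_F<r_d\}$. On that ball we also have
\[
\|P_d(\Gamma)-P_d(\Gamma')\|_F\le L_d\|\Gamma-\Gamma'\|_F .
\]
Set $r=\min_d r_d>0$ and $L=\max_d L_d<\infty$. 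Both are well defined because $\mathfrak D=\{C,F,S\}$ is finite.

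Next we bring in the sampling result. Proposition~\ref{prop:consistency_operator}, which holds under Assumption~\ref{ass:mixing}, gives $\|\widehat\Gamma_T-\Gamma_0\|_F=O_p(T^{-1/2})$. In particular $\widehat\Gamma_T\to\Gamma_0$ in probability. Define the event $E_T=\{\|\widehat\Gamma_T-\Gamma_0\|_F<r\}$; then $\Pr(E_T)\to1$. On $E_T$, both $\widehat\Gamma_T$ and $\Gamma_0$ lie in every ball $B(\Gamma_0,r_d)$. The estimated projection $\widehat P_d=P_d(\widehat\Gamma_T)$ is therefore the unique metric projection, so there is no selection ambiguity in the nonconvex factor and sparse cases. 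The Lipschitz bound then yields, simultaneously for all $d$,
\[
\max_{d\in\mathfrak D}\|\widehat P_d-P_{d,0}\|_F\le L\,\|\widehat\Gamma_T-\Gamma_0\|_F .
\]

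To conclude, fix $\varepsilon>0$ and choose $M$ such that $\Pr(\sqrt T\|\widehat\Gamma_T-\Gamma_0\|_F>M)<\varepsilon/2$ for all large $T$. Then
\[
\Pr\Bigl(\sqrt T\max_d\|\widehat P_d-P_{d,0}\|_F>LM\Bigr)\le\Pr(E_T^c)+\varepsilon/2<\varepsilon
\]
eventually, which is exactly the $O_p(T^{-1/2})$ claim. A union over $d$ is not even needed, because a single event controls all three geometries at once.

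The main obstacle is not the arithmetic but making sure the Lipschitz property is actually used on the right set. Assumption~\ref{ass:local_projection_regularity} is purely local, and off $E_T$ the projection may be set-valued. If so, $\widehat P_d$ must be read as any measurable selection from the argmin, whose existence follows from Lemma~\ref{lem:projectionexistence} since each $\mathcal S_d$ is closed. The event $E_T$ then absorbs that case with vanishing probability. For the sparse geometry I would also note that Assumption~\ref{ass:sparse_unique} keeps the active support fixed at $I_0$ near $\Gamma_0$. This means the Lipschitz constant is that of the convex projection onto $\mathcal C_{I_0}$, which is at most $1$, so the high-level assumption is genuinely satisfied there.
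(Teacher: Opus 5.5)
Your proposal is correct and follows essentially the same route as the paper. Both arguments combine the local Lipschitz bound $\|\widehat P_d-P_d(\Gamma_0)\|_F\le L_d\|\widehat\Gamma_T-\Gamma_0\|_F$ on an event of probability tending to one with the rate in Proposition~\ref{prop:consistency_operator} and the finiteness of $\mathfrak D$; your write-up is just more explicit (a common radius and constant, the $\varepsilon$--$M$ step, and the check that the sparse branch is nonexpansive under Assumption~\ref{ass:sparse_unique}), and it correctly concludes $O_p(T^{-1/2})$ where the paper's final line loosely writes $o_p(1)$.
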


\begin{theorem}[Consistency of the Dependence Profile]
\label{thm:profile_consistency}
Suppose Assumptions~\ref{ass:local_projection_regularity} and~\ref{ass:mixing} hold. If
$\sum_{d\in\mathfrak D}S_{d,0}>0$, then
$\widehat\omega-\omega_0=O_p(T^{-1/2})$.
\end{theorem}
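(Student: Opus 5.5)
The plan is to show that the similarity scores converge at rate $T^{-1/2}$ and then transport this rate through the normalization map, which is smooth near $S_0$.

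\textbf{Step 1: scores.} Write $S_{d,0}=\|P_{d,0}\|_F^2$ and $\widehat S_d=\|\widehat P_d\|_F^2$. The elementary identity
\[
\widehat S_d-S_{d,0}=\langle \widehat P_d-P_{d,0},\,\widehat P_d+P_{d,0}\rangle_F
\]
together with Cauchy--Schwarz gives
\[
|\widehat S_d-S_{d,0}|\le \|\widehat P_d-P_{d,0}\|_F\bigl(2\|P_{d,0}\|_F+\|\widehat P_d-P_{d,0}\|_F\bigr).
\]
Lemma~\ref{lem:uniform_projection_consistency} gives $\max_d\|\widehat P_d-P_{d,0}\|_F=O_p(T^{-1/2})$. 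This lemma itself rests on Proposition~\ref{prop:consistency_operator} and the local Lipschitz property in Assumption~\ref{ass:local_projection_regularity}. Since $\|P_{d,0}\|_F\le\|\Gamma_0\|_F<\infty$ (because $0\in\mathcal S_d$), it follows that $\max_d|\widehat S_d-S_{d,0}|=O_p(T^{-1/2})$.

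\textbf{Step 2: normalization.} Consider the map
\[
g:s\mapsto s/\sum_d s_d
\]
on $[0,\infty)^3$. By hypothesis $\sigma_0:=\sum_d S_{d,0}>0$, so $g$ is continuously differentiable, and hence Lipschitz, on the ball $B=\{s:\|s-S_0\|\le \sigma_0/(2\sqrt3)\}$. On this ball the denominator stays at least $\sigma_0/2$.

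\textbf{Step 3: localization.} By Step~1 the event $\{\widehat S\in B\}$ has probability tending to one. On that event,
\[
\|\widehat\omega-\omega_0\|=\|g(\widehat S)-g(S_0)\|\le L_B\|\widehat S-S_0\|=O_p(T^{-1/2}).
\]
An explicit bound also follows from
\[
\widehat\omega_d-\omega_{d,0}=\frac{(\widehat S_d-S_{d,0})\sigma_0-S_{d,0}(\widehat\sigma-\sigma_0)}{\widehat\sigma\,\sigma_0},
\]
where $\widehat\sigma=\sum_d\widehat S_d$. The complement event has vanishing probability, so it does not affect the $O_p$ statement.

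\textbf{Main obstacle.} The main care point is the localization step. Assumption~\ref{ass:local_projection_regularity} is only local, with single-valuedness and Lipschitz continuity holding on a neighborhood of $\Gamma_0$. One must therefore restrict to the event that $\widehat\Gamma_T$ lies in that neighborhood. That event has probability tending to one by Proposition~\ref{prop:consistency_operator}, and this is also what guarantees $\widehat P_d$ is well defined there. I would state this once when invoking Lemma~\ref{lem:uniform_projection_consistency}. The rest is routine delta-method bookkeeping.
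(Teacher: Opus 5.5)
Your proposal is correct and follows essentially the same route as the paper. The paper likewise transfers the $O_p(T^{-1/2})$ projection rate from Lemma~\ref{lem:uniform_projection_consistency} to the scores via local Lipschitzness of $P\mapsto\|P\|_F^2$, and then to $\widehat\omega$ via Lipschitzness of $\pi(s)=s/(\mathbf 1's)$ near $s_0$; your Cauchy--Schwarz bound and the localization on an event of probability tending to one make explicit what the paper compresses into a few lines. (One small imprecision: $0\in\mathcal S_d$ alone only gives $\|P_{d,0}\|_F\le 2\|\Gamma_0\|_F$, and the sharper $\|P_{d,0}\|_F\le\|\Gamma_0\|_F$ needs the cone property via Lemma~\ref{lem:conic_pythagoras}, though only finiteness matters for the argument.)
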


To state the limiting distribution, define the score map
\[
\mathcal S(\Gamma)
=
\left(
\|P_C(\Gamma)\|_F^2,
\|P_F(\Gamma)\|_F^2,
\|P_S(\Gamma)\|_F^2
\right)'
\]
and let $s_0=\mathcal S(\Gamma_0)=(S_{C,0},S_{F,0},S_{S,0})'$. Let
$\pi(s)=s/(\mathbf 1's)$ denote the normalization map, so that
$\omega_0=\pi(s_0)$.

\begin{assumption}[Projection Differentiability]
\label{ass:projection_differentiability}
Each projection map $P_d$, $d\in\mathfrak D$, is Hadamard differentiable
at $\Gamma_0$, with derivative $\dot P_{d,\Gamma_0}:\mathbb H\to\mathbb H$.
\end{assumption}

Differentiability converts local perturbations of the covariance operator into
local perturbations of the profile, and is precisely the condition that makes
standard errors for profile coordinates and pairwise contrasts available by the
functional delta method.

\begin{remark}[Sufficient conditions]
Assumption~\ref{ass:projection_differentiability} is stronger than
Assumption~\ref{ass:local_projection_regularity}. For a fixed-support cluster or
sparse cone, the projection is continuously differentiable where the active PSD
face is locally stable and the projected point lies on a regular stratum,
reducing to the orthogonal projection onto the support subspace when the PSD
constraint is slack; at an active PSD boundary the derivative is the derivative of
the metric projection onto the active convex cone, not a simple support mask. For
the nonconvex factor geometry, local single-valuedness and Hadamard
differentiability are imposed at a regular projected point (distinct positive
leading eigenvalues help but do not by themselves suffice). The high-level
formulation thus covers all three geometries without asserting invalid global or
boundary formulas.
\end{remark}

\begin{lemma}[Differentiability of the Score Map]
\label{lem:score_differentiability}
Suppose Assumption~\ref{ass:projection_differentiability} holds. Then
the score map $\mathcal S(\cdot)$ is Hadamard differentiable at
$\Gamma_0$, with derivative $\dot{\mathcal S}_{\Gamma_0}$ given
coordinatewise by
\[
\dot S_{d,\Gamma_0}[H]
=
2\,\bigl\langle P_d(\Gamma_0),\,\dot P_{d,\Gamma_0}[H]\bigr\rangle_F,
\qquad d\in\mathfrak D.
\]
\end{lemma}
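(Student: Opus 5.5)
The plan is to apply the chain rule for Hadamard derivatives to the composition $\Gamma\mapsto P_d(\Gamma)\mapsto\|P_d(\Gamma)\|_F^2$, coordinate by coordinate. Since $\mathcal S(\cdot)$ is a finite vector of such maps, Hadamard differentiability of the vector map is equivalent to Hadamard differentiability of each coordinate $S_d$. I therefore fix $d\in\mathfrak D$ and work with $S_d=q\circ P_d$, where $q(A)=\|A\|_F^2=\langle A,A\rangle_F$ on $\mathbb H$.

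The first step is to record that $q$ is Fréchet differentiable, and hence Hadamard differentiable, at every $A\in\mathbb H$, with derivative $\dot q_A[K]=2\langle A,K\rangle_F$. This follows from the exact expansion
\[
q(A+K)-q(A)=2\langle A,K\rangle_F+\|K\|_F^2,
\]
whose remainder is $o(\|K\|_F)$. The second step invokes Assumption~\ref{ass:projection_differentiability}: $P_d$ is Hadamard differentiable at $\Gamma_0$ with derivative $\dot P_{d,\Gamma_0}$. The third step is the chain rule. Hadamard differentiability of the inner map at $\Gamma_0$ and of the outer map at $P_d(\Gamma_0)$ together give Hadamard differentiability of $q\circ P_d$ at $\Gamma_0$, with derivative $\dot q_{P_d(\Gamma_0)}\circ\dot P_{d,\Gamma_0}$. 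Evaluated at $H$, this is $2\langle P_d(\Gamma_0),\dot P_{d,\Gamma_0}[H]\rangle_F$, which is the claimed formula.

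To keep the argument self-contained rather than citing the chain rule (e.g.\ van der Vaart and Wellner, Lemma 3.9.3), I would verify it directly. Take $t_m\downarrow0$ and $H_m\to H$, and set
\[
K_m=\frac{P_d(\Gamma_0+t_mH_m)-P_d(\Gamma_0)}{t_m}.
\]
Hadamard differentiability of $P_d$ gives $K_m\to\dot P_{d,\Gamma_0}[H]$. The expansion of $q$ then yields
\[
\frac{S_d(\Gamma_0+t_mH_m)-S_d(\Gamma_0)}{t_m}=2\langle P_d(\Gamma_0),K_m\rangle_F+t_m\|K_m\|_F^2 .
\]
Because $K_m$ converges, it is bounded, so the second term vanishes. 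Continuity of the inner product sends the first term to the stated limit.

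The only point needing care is well-definedness. $S_d$ must be a genuine single-valued function near $\Gamma_0$, so that the difference quotients refer to a definite object. This is implicit in Assumption~\ref{ass:projection_differentiability}, since a derivative of $P_d$ is only meaningful if $P_d$ is single-valued near $\Gamma_0$. I would also note that the derivative is linear and continuous in $H$, as required, because $\dot P_{d,\Gamma_0}$ is continuous and linear as a Hadamard derivative in finite dimensions. Beyond this there is no real obstacle, since $\mathbb H$ is finite-dimensional and $q$ is a polynomial. Stacking the three coordinates gives $\dot{\mathcal S}_{\Gamma_0}$.
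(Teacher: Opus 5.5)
Your proposal is correct and follows the paper's own argument: the paper also writes $S_d=g\circ P_d$ with $g(P)=\|P\|_F^2$ Fr\'echet differentiable with derivative $2\langle P,K\rangle_F$. It then invokes the Hadamard chain rule (van der Vaart and Wellner, Lemma~3.9.3) and stacks the coordinates; your sequential check of the chain rule is simply a self-contained version of that cited step.
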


\begin{theorem}[Asymptotic Distribution of the Dependence Profile]
\label{thm:profile_clt}
Suppose Assumptions~\ref{ass:projection_differentiability} and~\ref{ass:operatorclt} hold. If
$\sum_{d\in\mathfrak D}S_{d,0}>0$, then
\[
\sqrt T
(
\widehat\omega-\omega_0
)
\Rightarrow
N(0,\Xi),
\]
where $\Xi=J_\omega\Omega_\Gamma J_\omega'$, with
$J_\omega=\dot\pi_{s_0}\dot{\mathcal S}_{\Gamma_0}$ and $\dot\pi_{s_0}$
the derivative of $\pi(s)=s/(\mathbf 1's)$ at $s_0$.
\end{theorem}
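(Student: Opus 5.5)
The plan is a two-stage functional delta method: push Assumption~\ref{ass:operatorclt} through the score map $\mathcal S$ and then through the normalization $\pi$, combined by the chain rule.

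\textbf{Step 1 (operator level).} Identify $\mathbb H$ with the corresponding subspace of $\mathbb R^{N^2}$ via $\operatorname{vec}$. Assumption~\ref{ass:operatorclt} gives $\sqrt T(\widehat\Gamma_T-\Gamma_0)\Rightarrow Z$, where $\operatorname{vec}(Z)\sim N(0,\Omega_\Gamma)$. Since $\widehat\Gamma_T$ is symmetric, $Z$ is supported on $\mathbb H$, so the limit lives in the domain of the derivatives.

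\textbf{Step 2 (score map).} By Lemma~\ref{lem:score_differentiability}, which rests on Assumption~\ref{ass:projection_differentiability}, $\mathcal S$ is Hadamard differentiable at $\Gamma_0$ with derivative $\dot{\mathcal S}_{\Gamma_0}$. The functional delta method (van der Vaart--Wellner, Theorem~3.9.4) then yields
\[
\sqrt T(\widehat s-s_0)\Rightarrow \dot{\mathcal S}_{\Gamma_0}[Z],
\]
where $\widehat s=\mathcal S(\widehat\Gamma_T)$. This is applied to the exact metric projection at $\widehat\Gamma_T$. Hadamard differentiability at $\Gamma_0$ presupposes that $P_d$ is single-valued near $\Gamma_0$. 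Because $\widehat\Gamma_T\to\Gamma_0$ in probability, the event on which $\widehat\Gamma_T$ lies in that neighborhood has probability tending to one, and the complement does not affect weak limits.

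\textbf{Step 3 (normalization).} Since $\mathbf 1's_0=\sum_d S_{d,0}>0$, the map $\pi(s)=s/(\mathbf 1's)$ is $C^\infty$ near $s_0$, with
\[
\dot\pi_{s_0}[h]=\frac{h}{\mathbf 1's_0}-\frac{s_0\,\mathbf 1'h}{(\mathbf 1's_0)^2}.
\]
Apply the chain rule for Hadamard derivatives to $\pi\circ\mathcal S$, or equivalently apply the ordinary delta method to Step~2. Then $\sqrt T(\widehat\omega-\omega_0)\Rightarrow \dot\pi_{s_0}\dot{\mathcal S}_{\Gamma_0}[Z]$.

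\textbf{Step 4 (Gaussianity of the limit).} To identify the limit as $N(0,\Xi)$, note that a Hadamard derivative is by definition a continuous \emph{linear} map. Hence $J_\omega=\dot\pi_{s_0}\dot{\mathcal S}_{\Gamma_0}$ is a $3\times N^2$ matrix once we extend to $\mathbb R^{N^2}$ via the symmetrization $A\mapsto (A+A')/2$. That extension leaves $Z$ unchanged, since $Z$ is already symmetric. A linear image of a Gaussian is Gaussian, so the limit is $N(0,J_\omega\Omega_\Gamma J_\omega')$. Finally, $\Xi$ is singular along $\mathbf 1$, because $\mathbf 1'\dot\pi_{s_0}=0$; this is consistent with $\widehat\omega$ lying in the simplex.

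\textbf{Main obstacle.} The delicate step is Step~2, specifically the linearity and well-posedness of $\dot P_{d,\Gamma_0}$ for the nonconvex factor and sparse geometries. For metric projections onto cones the directional derivative is often only positively homogeneous, not linear, and a nonlinear derivative would give a non-Gaussian limit. I will take the linearity demanded by Assumption~\ref{ass:projection_differentiability} as given rather than derive it. I will also check that the derivative formula in Lemma~\ref{lem:score_differentiability} is applied on the stable branch, using Assumption~\ref{ass:sparse_unique} for the sparse support. Everything else is routine.
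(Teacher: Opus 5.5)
Your proposal is correct and follows essentially the paper's route: a delta-method chain from Assumption~\ref{ass:operatorclt} through the score map and then the normalization $\pi$, with the Gaussian limit coming from linearity of the Hadamard derivative. The only difference is cosmetic. You pass directly through $\mathcal S$ via Lemma~\ref{lem:score_differentiability}, whereas the paper first obtains a joint CLT for $(\widehat P_C,\widehat P_F,\widehat P_S)$ (Lemma~\ref{lem:projection_clt}) and then applies the score and normalization Jacobians separately, itself noting that the two decompositions give the same derivative.
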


The theorem makes dependence learning operational: researchers can attach
standard errors to profile coordinates and to the contrasts that determine which
geometry dominates, providing the sampling-uncertainty input for the
classification and near-tie analysis below.

\begin{remark}[Degeneracy of the limiting covariance]
\label{rem:xi_singular}
Because the profile lies on the unit simplex, $\mathbf 1'\widehat\omega=1$
identically, so $\mathbf 1'\Xi\mathbf 1=0$ and $\Xi$ is singular with
rank at most $|\mathfrak D|-1=2$. Inference on $\omega_0$ should
therefore be conducted on any two coordinates (or on contrasts
$\omega_{d}-\omega_{d'}$), using a generalized inverse of $\Xi$ with the
corresponding reduced degrees of freedom; the pairwise contrasts used in
Section~\ref{sec:classification} are of this form and remain
asymptotically $\chi^2$ with the correct degrees of freedom.
\end{remark}


\section{Dependence Diagnostics and Classification}
\label{sec:classification}

The off-diagonal dependence profile is a low-dimensional summary of the
dependence architecture encoded in the empirical operator: large
$\widehat\omega^{\mathrm{off}}_C$, $\widehat\omega^{\mathrm{off}}_F$, and
$\widehat\omega^{\mathrm{off}}_S$ indicate geometric proximity to the cluster,
factor, and sparse classes. It measures relative geometric similarity, not
additive covariance contributions. A natural classification rule uses it
directly,
\begin{equation}
\widehat d
=
\operatorname*{arg\,max}_{d\in\mathfrak D}
\widehat{\omega}^{\mathrm{off}}_d,
\label{eq:classification_rule}
\end{equation}
identifying the class with the largest off-diagonal score. The full dependence
profile is a companion diagnostic and is often more informative than the label
alone: e.g.\ $(0.60,0.30,0.10)$ suggests cluster dominance with meaningful factor
dependence, whereas $(0.35,0.35,0.30)$ suggests hybrid dependence and cautions
against a binary classification.

\subsection{Statistical Uncertainty for Dependence Scores}
\label{subsec:profile_variance_estimation}

Inference for the profile is based on Theorem~\ref{thm:profile_clt},
$\sqrt T(\widehat\omega-\omega_0)\Rightarrow N(0,\Xi)$. With
$e_C,e_F,e_S$ the standard basis, the asymptotic variance of
$\widehat\omega_d$ is $\sigma_d^2=e_d'\Xi e_d$.

\begin{assumption}[Consistent Covariance Estimation]
\label{ass:Xi_estimation}
There exists an estimator \(\widehat{\Xi}\) such that
$\widehat{\Xi}-\Xi=o_p(1)$.
\end{assumption}

Under Assumption~\ref{ass:Xi_estimation}, $\widehat\sigma_d^2=e_d'\widehat\Xi e_d$
consistently estimates $\sigma_d^2$. The construction of $\widehat\Xi$ depends on
the chosen operator: when it admits an asymptotic linear representation, a
plug-in estimator based on the influence function and the delta method may be
used. A universal variance estimator for all possible operators is outside the
paper's scope.

\subsection{Diagnostic Tests}
\label{subsec:score_diagnostics}

Theorem~\ref{thm:profile_clt} yields standard diagnostics for the profile. The
single-coordinate statistic
$T_d=\sqrt T(\widehat\omega_d-\omega_d^\ast)/\widehat\sigma_d$, with
$\widehat\sigma_d^2=e_d'\widehat\Xi e_d$, is asymptotically $N(0,1)$ under
$H_0:\omega_d=\omega_d^\ast$. More generally, for a full-row-rank $q\times3$
matrix $R_\omega$, the Wald statistic
$W_\omega=T(R_\omega\widehat\omega-r_\omega)'(R_\omega\widehat\Xi
R_\omega')^{-1}(R_\omega\widehat\omega-r_\omega)$ is asymptotically $\chi^2_q$
under $H_0:R_\omega\omega_0=r_\omega$ (using a generalized inverse where
$\Xi$ is singular, per Remark~\ref{rem:xi_singular}). These quantify the
uncertainty in the estimated profile and should be read as diagnostics.

\subsection{Dominant and Hybrid Dependence}
\label{subsec:dominant_hybrid_classification}

A dominant geometry is suggested when one score substantially exceeds the others
(e.g.\ $\widehat\omega^{\mathrm{off}}_C>\max\{\widehat\omega^{\mathrm{off}}_F,
\widehat\omega^{\mathrm{off}}_S\}$ indicates proximity to the cluster geometry);
hybrid dependence is suggested when several scores are simultaneously large, in
which case the full profile should be reported rather than a single label. This
matters because geometries may overlap---cluster matrices may also be sparse when
cluster sizes are small---so the profile is a continuous description of
dependence architecture rather than a rigid model-selection device.

\subsection{Dominant Dependence Geometry}
\label{subsec:dominant_geometry}

A central question is whether the estimated profile can identify the dominant
geometry underlying the chosen operator.

\begin{assumption}[Unique Dominant Geometry]
\label{ass:unique_dominant_geometry}

There exists a unique geometry
\(d^{\star}\in\mathfrak D\)
such that

\[
\omega^{\mathrm{off}}_{d^{\star}}
>
\max_{d\neq d^{\star}}
\omega^{\mathrm{off}}_d.
\]
Equivalently, the separation margin
\[
\Delta_{\omega^{\mathrm{off}}}
=
\omega^{\mathrm{off}}_{d^{\star}}
-
\max_{d\neq d^{\star}}
\omega^{\mathrm{off}}_d
\]
satisfies $
\Delta_{\omega^{\mathrm{off}}}>0$.

\end{assumption}

Under Assumption~\ref{ass:unique_dominant_geometry}, define
\begin{equation}
d^{\star}
=
\argmax_{d\in\mathfrak D}
\omega_d^{\mathrm{off}},
\label{eq:dominant_geometry}
\end{equation}
the population dominant geometry. The assumption rules out knife-edge cases in
which two or more geometries have exactly the same population score, a separation
condition common in model selection and classification. Throughout, $P_{\Gamma_0}$
denotes probability under the sample law $P_{\Gamma_0}^T$ of
Assumption~\ref{ass:lan}, under which all stochastic-order statements and
probabilities involving $\widehat\omega$ and $\widehat d$ are taken.

\begin{proposition}[Classification Error Bound]
\label{prop:classification_bound}

Suppose Assumption
\ref{ass:unique_dominant_geometry}
holds. Then
\[
P_{\Gamma_0}(\widehat d\neq d^\star)
\leq
P_{\Gamma_0}
\left(
\max_{d\in\mathfrak D}
|
\widehat\omega_d^{\mathrm{off}}-\omega_d^{\mathrm{off}}
|
\ge
\frac{\Delta_{\omega^{\mathrm{off}}}}{2}
\right),
\]
where
\[
\Delta_{\omega^{\mathrm{off}}}
=
\omega_{d^\star}^{\mathrm{off}}
-
\max_{d\neq d^\star}
\omega_d^{\mathrm{off}}.
\]
\end{proposition}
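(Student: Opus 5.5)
The argument is a deterministic event inclusion followed by monotonicity of probability. The plan is to show that on the event
\[
E_T=\Bigl\{\max_{d\in\mathfrak D}\bigl|\widehat\omega^{\mathrm{off}}_d-\omega^{\mathrm{off}}_d\bigr|<\Delta_{\omega^{\mathrm{off}}}/2\Bigr\},
\]
the sample argmax $\widehat d$ in \eqref{eq:classification_rule} equals $d^\star$. Then $\{\widehat d\neq d^\star\}\subseteq E_T^c$, and taking $P_{\Gamma_0}$-probabilities yields the stated bound. No sampling assumptions are needed. Assumption~\ref{ass:unique_dominant_geometry} supplies $\Delta_{\omega^{\mathrm{off}}}>0$, which makes the threshold nondegenerate. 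I will implicitly take the off-diagonal denominators to be positive, so that $\omega^{\mathrm{off}}$ and $\widehat\omega^{\mathrm{off}}$ are defined as in \eqref{eq:off_profile}.

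\textbf{Key step.} Fix $d\neq d^\star$ and work on $E_T$. Chaining the three inequalities gives
\[
\widehat\omega^{\mathrm{off}}_{d^\star}
>\omega^{\mathrm{off}}_{d^\star}-\tfrac{\Delta_{\omega^{\mathrm{off}}}}{2}
\ge \max_{d'\neq d^\star}\omega^{\mathrm{off}}_{d'}+\tfrac{\Delta_{\omega^{\mathrm{off}}}}{2}
\ge \omega^{\mathrm{off}}_d+\tfrac{\Delta_{\omega^{\mathrm{off}}}}{2}
>\widehat\omega^{\mathrm{off}}_d .
\]
The middle step is just the definition of the margin, $\omega^{\mathrm{off}}_{d^\star}-\Delta_{\omega^{\mathrm{off}}}=\max_{d'\neq d^\star}\omega^{\mathrm{off}}_{d'}$. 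So $d^\star$ strictly maximizes $\widehat\omega^{\mathrm{off}}$. The argmax is therefore unique and equals $d^\star$ whatever tie-breaking convention is used for $\widehat d$.

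\textbf{Concluding step.} Taking complements gives $\{\widehat d\neq d^\star\}\subseteq\{\max_d|\widehat\omega^{\mathrm{off}}_d-\omega^{\mathrm{off}}_d|\ge\Delta_{\omega^{\mathrm{off}}}/2\}$. Monotonicity of $P_{\Gamma_0}$ then gives the claimed inequality.

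\textbf{Main obstacle.} The only delicate point is measurability and well-definedness of $\widehat d$ when the sample argmax has ties or the sample denominator vanishes. The strict inequalities above avoid the tie issue entirely, because on $E_T$ there are no ties. If needed, I will handle the degenerate case by assigning $\widehat d$ an arbitrary label on the null event where the denominator is zero; that event lies in $E_T^c$ whenever $\widehat\omega^{\mathrm{off}}$ is then set to any fixed value far from $\omega^{\mathrm{off}}$, or else it is simply absorbed into an outer-probability statement.
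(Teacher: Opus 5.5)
Your proof is correct and follows the paper's argument essentially verbatim: on the event that the maximal profile error is below $\Delta_{\omega^{\mathrm{off}}}/2$, the same chain of strict inequalities forces $\widehat d=d^\star$, and taking complements gives the bound. Your remarks on ties and on a vanishing sample denominator go slightly beyond the paper, which simply takes $\widehat\omega^{\mathrm{off}}$ to be well defined, but the argument does not need them.
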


Classification error can occur only when profile-estimation error exceeds one
half of the population separation margin, linking the finite-sample reliability
of the discrete recommendation to a continuous, estimable geometric quantity.

\begin{theorem}[Consistency of Dominant-Geometry Classification]
\label{thm:dominant_geometry_consistency}

Suppose the consistency conditions of
Proposition~\ref{prop:joint_profile_consistency} hold, including
positivity of the off-diagonal normalizing denominator, and
Assumption~\ref{ass:unique_dominant_geometry} holds. Then

\[
P\!\left(
\widehat d=d^{\star}
\right)
\longrightarrow
1.
\]

\end{theorem}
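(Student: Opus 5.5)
The plan is to combine the classification error bound of Proposition~\ref{prop:classification_bound} with the $\sqrt T$-consistency of the off-diagonal profile from Proposition~\ref{prop:joint_profile_consistency}. Under the stated conditions, Proposition~\ref{prop:joint_profile_consistency} gives $\widehat\omega^{\mathrm{off}}-\omega^{\mathrm{off}}_0=O_p(T^{-1/2})$. Since $\mathfrak D$ is finite, this implies $\max_{d\in\mathfrak D}|\widehat\omega^{\mathrm{off}}_d-\omega^{\mathrm{off}}_d|\overset{p}{\longrightarrow}0$.

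Assumption~\ref{ass:unique_dominant_geometry} gives a fixed, $T$-independent margin $\Delta_{\omega^{\mathrm{off}}}>0$. The right-hand side of Proposition~\ref{prop:classification_bound} is therefore the probability that a quantity converging in probability to zero exceeds a fixed positive constant $\Delta_{\omega^{\mathrm{off}}}/2$, and this probability tends to zero by the definition of convergence in probability. Hence $P(\widehat d\neq d^\star)\to0$, which is the claim.

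For completeness I would also verify the deterministic step behind Proposition~\ref{prop:classification_bound}, since the theorem hinges on it. Suppose every coordinate error is strictly below $\Delta_{\omega^{\mathrm{off}}}/2$. Then for each $d\neq d^\star$,
\[
\widehat\omega^{\mathrm{off}}_{d^\star}
>\omega^{\mathrm{off}}_{d^\star}-\Delta_{\omega^{\mathrm{off}}}/2
\ge\omega^{\mathrm{off}}_d+\Delta_{\omega^{\mathrm{off}}}/2
>\widehat\omega^{\mathrm{off}}_d,
\]
so the argmax in \eqref{eq:classification_rule} is unique and equals $d^\star$. Ties or nonuniqueness of the sample argmax can occur only on the complement event, whose probability already bounds the error.

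The only point needing care is that $\widehat\omega^{\mathrm{off}}$ must be well defined with probability approaching one, i.e.\ its denominator $\sum_j\|\operatorname{off}(\widehat P_j)\|_F^2$ must be positive. I would handle this using Lemma~\ref{lem:uniform_projection_consistency} and continuity of $\Gamma\mapsto\|\operatorname{off}(\cdot)\|_F^2$: the sample denominator converges in probability to the positive population denominator, so it is bounded away from zero with probability tending to one. On the vanishing-probability exceptional event, any convention for $\widehat d$ suffices.

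This is also the main obstacle, modest as it is. It is already implicit in the hypothesis of Proposition~\ref{prop:joint_profile_consistency}, and the rest of the argument is immediate.
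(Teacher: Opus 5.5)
Your proposal is correct and follows essentially the same route as the paper: it combines the $o_p(1)$ off-diagonal profile error from Proposition~\ref{prop:joint_profile_consistency} with the fixed margin $\Delta_{\omega^{\mathrm{off}}}>0$ and the classification error bound of Proposition~\ref{prop:classification_bound}. Your added check that the sample off-diagonal denominator is positive with probability approaching one is a harmless extra that the paper leaves implicit.
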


Theorem~\ref{thm:dominant_geometry_consistency} shows that the
\emph{off-diagonal} dependence profile consistently identifies the dominant
covariance geometry whenever the dominant off-diagonal similarity score is
separated from the remaining scores. Consequently, the off-diagonal dependence
profile is not merely a descriptive summary of dependence but also provides a
statistically consistent basis for dependence classification.

\begin{remark}[Relation to Statistical Classification]

Theorem~\ref{thm:dominant_geometry_consistency}
is analogous to consistency results in model selection and statistical
classification. The distinguishing feature here is that the objects
being classified are covariance geometries rather than parametric
models.

\end{remark}

The classification result provides a formal justification for using the
estimated off-diagonal dependence profile as a guide for procedure recommendation. The implications for inference are developed formally in Section \ref{sec:profile_guided_inference}.
\begin{theorem}[Local Alternatives and Near-Ties]
\label{thm:local_ties}

Suppose

\[
\sqrt T
\left(
\widehat\omega^{\mathrm{off}}-\omega_T^{\mathrm{off}}
\right)
\Rightarrow
N(0,\Xi_{\mathrm{off}}),
\]
where
\[
\omega_T^{\mathrm{off}}
=
(\omega_{C,T}^{\mathrm{off}},\omega_{F,T}^{\mathrm{off}},\omega_{S,T}^{\mathrm{off}})'.
\]

Assume that two geometries,
\(d_1\) and \(d_2\),
satisfy

\[
\omega_{d_1,T}^{\mathrm{off}}
-
\omega_{d_2,T}^{\mathrm{off}}
=
\frac{c}{\sqrt T},
\]
for some constant \(c\in\mathbb R\), while all remaining scores remain
separated by positive constants. Assume the contrast direction
$e=e_{d_1}-e_{d_2}$ satisfies $e'\Xi_{\mathrm{off}} e>0$, so that the pairwise contrast
is asymptotically nondegenerate. Let $P_{\Gamma_T}$ denote probability under the triangular sequence
whose population profile is $\omega_T^{\mathrm{off}}$. Then
\[
P_{\Gamma_T}(\widehat d=d_1)
\]
converges to a nondegenerate limit lying strictly between zero and one,
equal to $\Phi\!\bigl(c/\sqrt{e'\Xi_{\mathrm{off}}e}\bigr)$.

\end{theorem}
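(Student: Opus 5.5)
The plan is to reduce the event $\{\widehat d=d_1\}$ to a one-dimensional contrast event and then apply the assumed triangular-array CLT together with the continuous mapping theorem. Write $Z_T=\sqrt T(\widehat\omega^{\mathrm{off}}-\omega_T^{\mathrm{off}})$, so that $Z_T\Rightarrow Z\sim N(0,\Xi_{\mathrm{off}})$ under $P_{\Gamma_T}$ by hypothesis. The event $\{\widehat d=d_1\}$ is, up to ties, the intersection of $\{\widehat\omega^{\mathrm{off}}_{d_1}>\widehat\omega^{\mathrm{off}}_{d_2}\}$ with $\{\widehat\omega^{\mathrm{off}}_{d_1}>\widehat\omega^{\mathrm{off}}_{d}\}$ for each remaining $d$. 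I will treat the pairwise contrast and the separated coordinates separately.

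\emph{Step 1 (remaining geometries).} For the separated coordinates, the hypothesis that the remaining scores stay separated by positive constants is read as follows: there is $\eta>0$ with $\omega^{\mathrm{off}}_{d_1,T}-\omega^{\mathrm{off}}_{d,T}\ge\eta$ for all $d\notin\{d_1,d_2\}$ and all large $T$. This is the only sensible reading if $d_1$ is to win with nontrivial probability, and the limit $\Phi(\cdot)$ stated presupposes it. Since $Z_T=O_p(1)$, $\widehat\omega^{\mathrm{off}}-\omega^{\mathrm{off}}_T=O_p(T^{-1/2})$. Hence $P_{\Gamma_T}(\widehat\omega^{\mathrm{off}}_{d_1}\le\widehat\omega^{\mathrm{off}}_d)\le P(2\max_k|\widehat\omega^{\mathrm{off}}_k-\omega^{\mathrm{off}}_{k,T}|\ge\eta)\to0$, exactly as in Proposition~\ref{prop:classification_bound}. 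Therefore $P_{\Gamma_T}(\widehat d=d_1)=P_{\Gamma_T}(\widehat\omega^{\mathrm{off}}_{d_1}>\widehat\omega^{\mathrm{off}}_{d_2})+o(1)$, and tie-breaking conventions contribute at most the probability of the tie event, which is handled in Step 2.

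\emph{Step 2 (the contrast).} With $e=e_{d_1}-e_{d_2}$,
\[
\sqrt T\,e'\widehat\omega^{\mathrm{off}}=e'Z_T+\sqrt T\,e'\omega^{\mathrm{off}}_T=e'Z_T+c .
\]
By the continuous mapping theorem, $e'Z_T\Rightarrow N(0,e'\Xi_{\mathrm{off}}e)$, and the variance is strictly positive by assumption. The limit law is therefore continuous, so the boundary $\{0\}$ has limit mass zero and the Portmanteau theorem gives
\[
P_{\Gamma_T}\bigl(e'Z_T>-c\bigr)\to P\bigl(N(0,e'\Xi_{\mathrm{off}}e)>-c\bigr)=\Phi\!\bigl(c/\sqrt{e'\Xi_{\mathrm{off}}e}\bigr).
\]
The same continuity shows that the tie event has vanishing probability. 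Because the normal limit has positive variance and $c$ is finite, the limit lies strictly in $(0,1)$.

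\emph{Main obstacle.} The genuine subtlety is Step 1's interpretation of ``separated.'' If some third geometry dominated $d_1$ by a fixed margin, the limit would be $0$, not $\Phi(\cdot)$. I would therefore state explicitly that the remaining scores lie below $\omega^{\mathrm{off}}_{d_1,T}$ (equivalently below $\omega^{\mathrm{off}}_{d_2,T}$, since the two differ by $O(T^{-1/2})$) by a fixed positive margin. Beyond that, everything is a direct consequence of the assumed local CLT. No projection differentiability is needed here, since the triangular-array CLT is taken as a hypothesis rather than derived from Theorem~\ref{thm:profile_clt}.
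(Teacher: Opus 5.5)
Your proposal is correct and follows the paper's proof: reduce $\{\widehat d=d_1\}$ to the pairwise contrast event, discard the separated third geometry, and apply the assumed local CLT to $e'\sqrt T(\widehat\omega^{\mathrm{off}}-\omega_T^{\mathrm{off}})$ shifted by $c$. Your explicit reading of the separation hypothesis (the remaining score lies below $\omega^{\mathrm{off}}_{d_1,T}$ by a fixed margin) and your Portmanteau argument for ties make precise what the paper leaves implicit.
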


A local near-tie is not a defect of the method but a region in which the data
contain only limited first-order evidence favoring one geometry; the
nondegenerate selection probability records that uncertainty rather than
producing an artificially decisive recommendation.
\begin{remark}[Near-Ties]
\label{rem:near_ties}

When the separation margin
\(\Delta_{\omega^{\mathrm{off}}}\)
is close to zero, small sampling fluctuations may alter the dominant
geometry classification. In such situations, the off-diagonal dependence
profile $
\widehat{\omega}^{\mathrm{off}}
=
(\widehat{\omega}_C^{\mathrm{off}},
  \widehat{\omega}_F^{\mathrm{off}},
  \widehat{\omega}_S^{\mathrm{off}})'
$ is typically more informative than the discrete classifier
\(\widehat d\).  In this regime the recommended output is therefore the
profile, its contrast uncertainty, and multiple or hybrid inferential actions,
rather than a single geometry label.
\end{remark}


\section{Profile-Guided Inference}
\label{sec:profile_guided_inference}

The dependence profile provides a low-dimensional summary of the
dependence geometry encoded in an empirical dependence operator. A
natural question is whether the estimated profile can be used to guide
econometric inference. The objective of this section is not to propose a new covariance
estimator. Rather, we show how dependence learning can be combined with
existing dependence-robust procedures in a systematic way.

\begin{framed}
\noindent\textbf{Algorithm 1: Profile-Guided Adaptive Inference}
\label{alg:profile_guided}
\begin{enumerate}[leftmargin=2.2em,itemsep=0.25em,topsep=0.4em]
\item \textbf{Estimate the dependence operator.} Construct
$\widehat\Gamma_T$ from observed disturbances or residuals and, when needed,
apply the positive-semidefinite refinement described in
Section~\ref{sec:geometry}.
\item \textbf{Project onto the covariance dictionary.} For every
$d\in\mathfrak D$, compute the exact metric projection $P_d(\widehat\Gamma_T)$
or the documented numerical approximation to it.
\item \textbf{Construct diagnostics.} Compute the full profile
$\widehat\omega$, the off-diagonal profile $\widehat\omega^{\mathrm{off}}$,
and the projection residuals $\widehat\rho_d$.
\item \textbf{Assess learnability.} Determine the estimated dominant geometry
$\widehat d$, the off-diagonal margin
$\widehat\Delta_{\omega^{\mathrm{off}}}$, and the confidence index
$\widehat\kappa$.
\item \textbf{Choose an inferential action.} Apply the decision rule
$\delta(\widehat\omega^{\mathrm{off}})$: use the geometry-matched procedure
when the evidence is sufficiently separated, and use a hybrid or
multi-procedure report when the profile is ambiguous or the dictionary fit is
poor.
\end{enumerate}
\noindent\emph{Output:} the estimated dependence profile, projection-residual
diagnostics, confidence index, and recommended inference procedure.
\end{framed}

The complete procedure is the single mapping
$\widehat\Gamma_T\to(\widehat\omega,\widehat\omega^{\mathrm{off}},\widehat\rho)
\to(\widehat d,\widehat\Delta_{\omega^{\mathrm{off}}},\widehat\kappa)
\to\delta(\widehat\omega^{\mathrm{off}})\to\widehat{\mathcal P}$, where
$\widehat{\mathcal P}$ is the selected, hybrid, or multi-procedure report;
Figure~\ref{fig:workflow} displays it together with the three decision branches.

\begin{figure}[!t]
\centering
\begin{tikzpicture}[
  font=\footnotesize,
  >={Stealth[length=2mm]},
  x=1cm,y=1cm,
  proc/.style={rectangle, rounded corners=2pt, draw=black!75, fill=black!5,
               align=center, minimum height=9mm, inner sep=3pt, text width=29mm},
  hubn/.style={rectangle, rounded corners=2pt, draw=black!75, fill=black!5,
               align=center, minimum height=9mm, inner sep=3pt, text width=40mm},
  act/.style={rectangle, rounded corners=2pt, draw=black!80, fill=black!10,
              align=center, minimum height=13mm, inner sep=3pt, text width=35mm},
  arr/.style={->, draw=black!75, thick},
  lbl/.style={font=\scriptsize\itshape, fill=white, inner sep=1.5pt}
]
\node[proc] (data) at (0,6.2)  {\textbf{1.} Panel / residuals\\ $\{\widehat u_t\}_{t=1}^T$};
\node[proc] (op)   at (3.9,6.2){\textbf{2.} Estimate operator\\ $\widehat\Gamma_T=\tfrac1T\sum_t\widehat u_t\widehat u_t'$};
\node[proc] (proj) at (7.8,6.2){\textbf{3.} Project onto\\ dictionary $P_d(\widehat\Gamma_T)$};
\node[proc] (prof) at (11.7,6.2){\textbf{4.} Profiles \& residuals\\ $\widehat\omega,\widehat\omega^{\mathrm{off}},\widehat\rho_d$};
\draw[arr] (data) -- (op);
\draw[arr] (op) -- (proj);
\draw[arr] (proj) -- (prof);
\node[hubn] (learn) at (5.85,3.9) {\textbf{5.} Assess learnability:\\ $\widehat d,\ \widehat\Delta_{\omega^{\mathrm{off}}},\ \widehat\kappa=(1-\widehat\rho_{\min})\widehat\Delta_{\omega^{\mathrm{off}}}$};
\draw[arr] (prof.south) |- (learn.east);
\node[act] (matched)  at (0,1.4)  {\textbf{Matched procedure} $a_{\widehat d}$\\[1pt] \emph{oracle-equivalent}};
\node[act] (hybrid)   at (5.85,1.4){\textbf{Hybrid / multiple}\\[1pt] report several procedures};
\node[act] (caution)  at (11.7,1.4){\textbf{Caution}\\[1pt] expand dictionary; avoid geometry-specific SE};
\draw[arr] (learn.south) -- ($(learn.south)+(0,-0.55)$) -| (matched.north)
      node[lbl,pos=0.25,above]{large margin, small $\widehat\rho_{\min}$};
\draw[arr] (learn.south) -- (hybrid.north)
      node[lbl,pos=0.55]{near tie, adequate fit};
\draw[arr] (learn.south) -- ($(learn.south)+(0,-0.55)$) -| (caution.north)
      node[lbl,pos=0.25,above]{large $\widehat\rho_{\min}$};
\end{tikzpicture}
\caption{\textbf{Profile-guided adaptive inference workflow.} Steps 1--4 form the
descriptive pipeline from data to diagnostics; Step 5 summarizes learnability;
the three branches give the prescriptive action---the geometry-matched procedure
when one geometry is clearly separated and fits well (oracle-equivalent by
Theorem~\ref{thm:oracle_adaptivity_asymptotic_optimality}), a hybrid or
multi-procedure report near a tie, and a caution when the dictionary fits poorly.}
\label{fig:workflow}
\end{figure}

Algorithm~1 separates the descriptive learning problem from the prescriptive
decision problem. Steps 1--4 summarize what the data say about the covariance
dictionary; Step 5 maps that evidence into an inferential action. The oracle
result below evaluates the statistical cost of this final mapping.

\subsection{Profile-Guided Procedure Recommendation}
\label{subsec:procedure_recommendation}

Let $\widehat d$ denote the estimated dominant geometry
of~\eqref{eq:classification_rule}. The mapping to procedures is direct: when
$\widehat\omega_C^{\mathrm{off}}$ dominates, cluster-robust procedures are the
natural benchmark; when $\widehat\omega_F^{\mathrm{off}}$ dominates,
factor-robust or common-shock-adjusted procedures; and when
$\widehat\omega_S^{\mathrm{off}}$ dominates, sparse, network, spatial, or
local-dependence robust procedures. The recommendation is evidence-based rather
than deterministic---it summarizes which geometry is most consistent with the
operator, not that one procedure is universally correct. When several scores are
substantial the data support multiple mechanisms and reporting several procedures
is preferable, and when the projection residuals indicate poor fit the entire
dictionary should be viewed with caution.

\subsection{Why Learn the Geometry?}
\label{subsec:why_learn_geometry}

A natural alternative is a single broadly robust procedure---multiway
clustering, HAC inference, or a general sandwich estimator. Dependence learning
remains useful for three reasons. First, broad robustness is not costless:
procedures valid under large classes of dependence may be conservative or poorly
sized when the realized geometry differs from the structure for which they are
best suited. Second, such procedures still require the researcher to specify a
dependence class---clustering dimensions, a spatial metric, a bandwidth, or a
network---which are themselves assumptions about dependence. Third,
Theorem~\ref{thm:oracle_adaptivity_asymptotic_optimality} shows profile-guided
selection is asymptotically equivalent to an infeasible oracle that knows the
dominant geometry. Dependence learning thus provides a data-driven way to choose
among geometry-specific procedures while retaining oracle-equivalent first-order
behavior.

\subsection{Procedure Confidence Index}
\label{subsec:procedure_confidence}

The reliability of a profile-based recommendation depends on two considerations:
the dominant geometry should be well separated from the competitors, and the
dictionary should approximate the operator well. Combining the estimated
separation margin
$\widehat\Delta_{\omega^{\mathrm{off}}}=\max_{d}\widehat\omega_d^{\mathrm{off}}
-\max_{d\neq\widehat d}\widehat\omega_d^{\mathrm{off}}$ with the minimum
projection residual $\widehat\rho_{\min}=\min_d\widehat\rho_d$ yields the
\emph{procedure confidence index}
\[
\widehat\kappa
=
\bigl(1-\widehat\rho_{\min}\bigr)\widehat\Delta_{\omega^{\mathrm{off}}}.
\]
Large $\widehat\kappa$ indicates that one geometry is clearly dominant
\emph{and} the dictionary fits well; small values indicate weak separation, poor
fit, or both. Its population counterpart is
$\kappa_0=(1-\rho_{\min,0})\Delta_{\omega^{\mathrm{off}},0}$, with
$\rho_{\min,0}=\min_d\rho_d$,
$\Delta_{\omega^{\mathrm{off}},0}=\max_d\omega_d^{\mathrm{off}}
-\max_{d\neq d^\star}\omega_d^{\mathrm{off}}$, and
$d^\star=\arg\max_d\omega_d^{\mathrm{off}}$.

\begin{proposition}[Consistency of the Procedure Confidence Index]
\label{prop:kappa_consistency}

Suppose the conditions of
Theorem~\ref{thm:profile_consistency} and
Assumption~\ref{ass:unique_dominant_geometry} hold, so that the dominant
geometry $d^\star$ entering $\kappa_0$ is uniquely defined. Then

\[
\widehat\kappa
\overset{p}{\longrightarrow}
\kappa_0.
\]

%
%
%
%
%
%
%
\end{proposition}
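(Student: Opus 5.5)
The proof combines consistency of the two ingredients of $\widehat\kappa$ through the continuous mapping theorem. Since $\widehat\kappa=(1-\widehat\rho_{\min})\widehat\Delta_{\omega^{\mathrm{off}}}$ is the product of two scalar statistics, it suffices to show $\widehat\rho_{\min}\overset{p}{\to}\rho_{\min,0}$ and $\widehat\Delta_{\omega^{\mathrm{off}}}\overset{p}{\to}\Delta_{\omega^{\mathrm{off}},0}$. The product map $(a,b)\mapsto(1-a)b$ is continuous, so the two limits together give $\widehat\kappa\overset{p}{\to}\kappa_0$.

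The residual part is immediate. The conditions of Theorem~\ref{thm:profile_consistency} include Assumptions~\ref{ass:local_projection_regularity} and~\ref{ass:mixing}, so Proposition~\ref{prop:residual_consistency} gives $\rho_{\min}(\widehat\Gamma_T)\overset{p}{\to}\rho_{\min}(\Gamma_0)=\rho_{\min,0}$. This also uses $\Gamma_0\neq0$, which follows from $\sum_dS_{d,0}>0$.

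The margin part proceeds as follows.

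\textbf{Profile consistency.} Proposition~\ref{prop:joint_profile_consistency} gives $\widehat\omega^{\mathrm{off}}-\omega^{\mathrm{off}}_0=O_p(T^{-1/2})$. This requires the off-diagonal denominator to be positive, which I take as implicit: Assumption~\ref{ass:unique_dominant_geometry} presupposes that $\omega^{\mathrm{off}}$ is defined. If one only wants $o_p(1)$, an alternative route is Lemma~\ref{lem:uniform_projection_consistency}, continuity of $\Gamma\mapsto\|\operatorname{off}(P_d(\Gamma))\|_F^2$, and continuity of the normalization map where its denominator is positive.

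\textbf{Rewriting the margin.} The estimated margin uses the random index $\widehat d$, so I would not handle it directly. Instead, I would write it as a fixed function $g$ of the profile: $\widehat\Delta_{\omega^{\mathrm{off}}}=g(\widehat\omega^{\mathrm{off}})$ with $g(w)=w_{(1)}-w_{(2)}$, the largest coordinate minus the second largest. This rewriting is valid because $\max_{d\neq\widehat d}\widehat\omega^{\mathrm{off}}_d$ is exactly the second order statistic, regardless of how ties are broken.

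\textbf{Continuity of $g$.} Order statistics are Lipschitz in the sup norm, with $|w_{(k)}-v_{(k)}|\le\max_d|w_d-v_d|$. Hence $|g(w)-g(v)|\le2\max_d|w_d-v_d|$, and the continuous mapping theorem gives $\widehat\Delta_{\omega^{\mathrm{off}}}\overset{p}{\to}g(\omega^{\mathrm{off}}_0)$.

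\textbf{Identifying the limit.} Under Assumption~\ref{ass:unique_dominant_geometry}, $g(\omega^{\mathrm{off}}_0)=\omega^{\mathrm{off}}_{d^\star}-\max_{d\neq d^\star}\omega^{\mathrm{off}}_d=\Delta_{\omega^{\mathrm{off}},0}$. As a cross-check, Theorem~\ref{thm:dominant_geometry_consistency} gives $P(\widehat d=d^\star)\to1$, so on an event of probability tending to one the sample and population margins use the same index.

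The only delicate point is the data-dependent index $\widehat d$ in the margin, and the order-statistic rewriting above handles it without further work. The uniqueness assumption matters only for giving $\kappa_0$ an unambiguous meaning, since the order-statistic margin is continuous even at ties. Everything else reduces to earlier results plus the continuous mapping theorem. The argument in fact yields the rate $O_p(T^{-1/2})$ whenever $\rho_{\min}$ converges at that rate, which holds by Lemma~\ref{lem:uniform_projection_consistency}.
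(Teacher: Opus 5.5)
Your proposal is correct and takes the same route as the paper: consistency of $\widehat\omega^{\mathrm{off}}$ and of $\widehat\rho_{\min}$ (Proposition~\ref{prop:residual_consistency}), combined through continuity of maximum, minimum, and multiplication and the continuous mapping theorem. Two of your steps are more careful than the paper's one-line proof: the order-statistic rewriting of $\widehat\Delta_{\omega^{\mathrm{off}}}$, and the appeal to Proposition~\ref{prop:joint_profile_consistency} rather than Theorem~\ref{thm:profile_consistency}, which covers only the full profile.
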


\subsection{Profile-Guided Variance Estimation}
\label{subsec:profile_guided_variance}
Suppose a scalar parameter \(\theta_0\) is estimated by \(\widehat\theta_T\).
For each geometry \(d\in\mathfrak D\), let \(\widehat V_d\) be a measurable
variance estimator and let \(\tau_{T,d}\to\infty\) be a deterministic
normalization such that
\[
\tau_{T,d}^{2}Var(\widehat\theta_T)\longrightarrow V_d,
\qquad 0<V_d<\infty.
\]
Let
\[
d^\star
=
\arg\max_{d\in\mathfrak D}
\omega_d^{\mathrm{off}}
\]
denote the dominant population dependence geometry. Define 
\[
\widehat d
=
\arg\max_{d\in\mathfrak D}\widehat\omega_d^{\mathrm{off}},
\qquad
\widehat V^{*}
=
\widehat V_{\widehat d},
\]
where ties in the argmax are broken by a fixed deterministic rule. Thus, inference is based on the variance estimator associated with the
estimated dominant dependence geometry. 

\begin{proposition}[Consistency of the Profile-Guided Variance Estimator]
\label{prop:vcov_consistency}
Let
\(
\widehat V_d
\)
denote a geometry-specific variance estimator for
\(
d\in\mathfrak D
\).
Suppose the conditions of
Theorem~\ref{thm:dominant_geometry_consistency}
hold and

\[
\widehat V_{d^\star}
\overset{p}{\longrightarrow}
V_{d^\star}.
\]
Then
\[
\widehat V^{*}
\overset{p}{\longrightarrow}
V_{d^\star}.
\]
\end{proposition}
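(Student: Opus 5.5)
The argument is a selection-consistency step combined with a decomposition on the event that the correct geometry is selected. By Theorem~\ref{thm:dominant_geometry_consistency}, whose hypotheses are assumed, the event $E_T=\{\widehat d=d^\star\}$ satisfies $P(E_T)\to1$. We also need that $\widehat d$ is a well-defined random element of $\mathfrak D$. This holds because $\widehat\omega^{\mathrm{off}}$ is measurable and ties are broken by a fixed deterministic rule, so $\widehat V^*=\sum_{d\in\mathfrak D}\ind\{\widehat d=d\}\widehat V_d$ is a measurable random variable.

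Next, write the decomposition
\[
\widehat V^{*}-V_{d^\star}
=\bigl(\widehat V_{d^\star}-V_{d^\star}\bigr)\ind\{E_T\}
+\bigl(\widehat V_{\widehat d}-V_{d^\star}\bigr)\ind\{E_T^c\}.
\]
Fix $\varepsilon>0$. Then
\[
P\bigl(|\widehat V^*-V_{d^\star}|>\varepsilon\bigr)\le P\bigl(|\widehat V_{d^\star}-V_{d^\star}|>\varepsilon\bigr)+P(E_T^c).
\]
The first term vanishes by the hypothesis $\widehat V_{d^\star}\overset{p}{\to}V_{d^\star}$, and the second vanishes by Theorem~\ref{thm:dominant_geometry_consistency}. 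This gives the claim.

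There is essentially no obstacle. The one point to stress is that no consistency or boundedness is required of the mismatched estimators $\widehat V_d$, $d\neq d^\star$. Their behaviour only matters on $E_T^c$, whose probability vanishes, so the inequality above controls them without any moment condition. The proof therefore rests entirely on the selection consistency from Theorem~\ref{thm:dominant_geometry_consistency}. That consistency is in turn driven by the unique margin $\Delta_{\omega^{\mathrm{off}}}>0$ through Proposition~\ref{prop:classification_bound} and the $O_p(T^{-1/2})$ rate of Proposition~\ref{prop:joint_profile_consistency}.
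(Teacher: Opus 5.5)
Your proof is correct and is the paper's argument: both bound $P(|\widehat V^*-V_{d^\star}|>\varepsilon)$ by $P(\widehat d\neq d^\star)+P(|\widehat V_{d^\star}-V_{d^\star}|>\varepsilon)$, then apply Theorem~\ref{thm:dominant_geometry_consistency} to the first term and the maintained consistency of $\widehat V_{d^\star}$ to the second. Your remarks on measurability and on needing no conditions for the mismatched estimators are accurate additions but not required.
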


Proposition~\ref{prop:vcov_consistency} shows that profile-guided
selection preserves consistency whenever the dominant geometry is correctly
classified. It separates the two sources of validity: geometry-specific variance
estimation supplies validity conditional on the correct geometry, while
dependence learning supplies the probability of selecting it. No new variance
formula is required.

\subsection{Decision-Theoretic Formulation}
\label{subsec:decision}

The dependence profile is not itself the inferential target. It is a
sufficient low-dimensional summary for selecting among a finite dictionary
of inference procedures: operators with the same profile call for the same
procedure, so the profile retains exactly the selection-relevant
information. The decision-theoretic analysis below concerns this
procedure-selection problem, not covariance estimation per se.

The profile-guided procedure admits a natural decision-theoretic
interpretation in the sense of \citet{Wald1950}: the objective is not to estimate
the dependence structure for its own sake, but to learn enough about the
dependence geometry to select an appropriate inference procedure. Let
$\mathfrak D$ be the finite dictionary ($\mathfrak D=\{C,F,S\}$ in the baseline)
and let $L(a,\Gamma_0)$ be the loss from action $a\in\mathcal A$ when the
population operator is $\Gamma_0$. The loss may measure size distortion, coverage
error, expected interval length subject to coverage, power loss, or a weighted
combination; for example,
$L(a,\Gamma_0)=\lambda_1|P_{\Gamma_0}\{0\notin CI_a\}-\alpha|
+\lambda_2 E_{\Gamma_0}[\ell(CI_a)]$, with $CI_a$ the interval produced by $a$,
$\ell(CI_a)$ its length, and $\lambda_1,\lambda_2\ge0$ weights. The oracle rule
$\delta^\star:\Gamma\mapsto\arg\min_{a\in\mathcal A}L(a,\Gamma)$ is evaluated at
$\Gamma_0$.

\begin{assumption}[Oracle Action]
\label{ass:oracle_action}
\emph{(Profile--loss compatibility.)} The loss function and the action
set are such that the oracle rule selects the procedure associated with
the dominant dependence geometry:
$\delta^\star(\Gamma_0)=a_{d^\star}$.
\end{assumption}

\noindent
Assumption~\ref{ass:oracle_action} is a genuine restriction, not a definition:
$d^\star$ is defined by the largest similarity score---a \emph{descriptive}
ranking of fit---whereas the oracle minimizes an \emph{inferential} loss, and the
two orderings need not coincide, so the regret bound of
Theorem~\ref{thm:oracle_adaptivity_asymptotic_optimality} is conditional on their
alignment. It holds in two leading cases. \emph{(a) Size control under correct
specification:} if $a_d$ is consistent when $\Gamma_0\in\mathcal S_d$, the loss is
absolute asymptotic size distortion, and $\Gamma_0\in\mathcal S_{d^\star}$ exactly,
then $\rho_{d^\star}=0$, $d^\star$ uniquely maximizes the score, and $a_{d^\star}$
alone has zero distortion. \emph{(b) Frobenius risk:} if
$L(a_d,\Gamma_0)=\|\Gamma_0-P_d(\Gamma_0)\|_F^2$, then
by~\eqref{eq:score_residual_duality} minimizing loss is equivalent to maximizing
the score. Compatibility can also fail---if two geometries fit almost equally
well but one yields a far more variable variance estimator, a risk-based oracle
may prefer the worse-fitting but more stable procedure---so it is exactly the
statement that the loss ranks procedures the way projection distance ranks
geometries. In practice $\Gamma_0$ is unknown, and the profile-guided rule uses
the estimated profile $\widehat\omega$.

The profile-guided action is
defined as
\[
        \widehat a
        =
        \delta(\widehat\omega),
\]
where $\delta:\Delta^{|\mathfrak D|-1}\to\mathcal A$ is a decision rule
mapping dependence profiles into inference actions. A simple rule is
\[
        \delta(\widehat\omega)
        =
        a_{\widehat d},
\]
where $\widehat d$ is the estimated dominant geometry. A more conservative rule uses the separation margin
\[
        \widehat\Delta_{\omega^{\mathrm{off}}}
        =
        \widehat\omega_{\widehat d}^{\mathrm{off}}
        -
        \max_{d\neq \widehat d}\widehat\omega_d^{\mathrm{off}}
\]
and sets
\[
        \delta(\widehat\omega)
        =
        \begin{cases}
        a_{\widehat d},
        & \text{if } \widehat\Delta_{\omega^{\mathrm{off}}}>\gamma_T,\\
        a_H,
        & \text{if } \widehat\Delta_{\omega^{\mathrm{off}}}\leq \gamma_T,
        \end{cases}
\]
where $\gamma_T\downarrow 0$ is a tolerance sequence. The hybrid action
$a_H$ is intended for ambiguous regions in which no single dependence
geometry is clearly dominant.


The risk of a decision rule $\delta$ is

\[
R(\delta,\Gamma_0)
=
E_{\Gamma_0}
\left[
L(\delta(\widehat\omega),\Gamma_0)
\right].
\]
The regret of the profile-guided decision rule is
\[
\mathcal R(\delta,\Gamma_0)
=
E_{\Gamma_0}
\!\left[
L(\delta(\widehat\omega),\Gamma_0)
-
L(\delta^\star(\Gamma_0),\Gamma_0)
\right].
\]
Unlike estimation error, which evaluates the accuracy of an estimator,
regret evaluates the inferential cost of using a data-driven decision
rule instead of the infeasible oracle. It therefore provides a natural
decision-theoretic measure of the price of learning the dependence
structure before selecting an inference procedure.

The goal of profile-guided inference is to construct a feasible
decision rule $\delta$ whose regret converges to zero. This formalizes
the idea that learning the dependence structure before choosing an
inference procedure should not impose any first-order asymptotic loss
relative to the infeasible oracle.

It is important to be precise about \emph{what} the procedure adapts to. The
adaptivity here is not adaptivity of a covariance \emph{estimator}: the
coefficient estimator $\widehat\theta_T$ and the geometry-specific variance
estimators $\widehat V_d$ are held fixed and are not being improved. What is
learned, and what the rule adapts to, is the \emph{dependence geometry itself}---
the off-diagonal geometry $\widehat d$ that indexes which robust procedure is
appropriate. The oracle in Theorem~\ref{thm:oracle_adaptivity_asymptotic_optimality}
is the infeasible rule that \emph{knows the dominant off-diagonal geometry}
$d^\star$ and applies its matched procedure; the theorem states that using the
\emph{learned} geometry $\widehat d$ in its place costs nothing to first order.
Adaptation is therefore over the discrete set of candidate geometries, driven by
the off-diagonal dependence profile, not over a continuum of covariance
estimators.

 Let
\(
\widehat V_d
\)
be a variance estimator satisfying

\[
\widehat V_d
\overset{p}{\longrightarrow}
V_d,
\qquad
d\in\mathfrak D.
\] 
Let
\[
T_T^{*}
=
\frac{\tau_{T,\widehat {d}}\,(\widehat\theta_T-\theta_0)}
{\sqrt{\widehat V^{*}}}
\]
and
\[
T_T^{oracle}
=
\frac{\tau_{T,{d}^\star}\,(\widehat\theta_T-\theta_0)}
{\sqrt{\widehat V_{d^\star}}}.
\]

The theorem below concerns the pure profile-guided rule
$\delta(\widehat\omega)=a_{\widehat d}$. The hybrid rule is discussed as a
conservative alternative in ambiguous regions; its regret properties
require additional conditions and are not covered by the theorem.

We present the result as a corollary-strength consequence of
classification consistency rather than as a deep oracle theorem. Once
$P_{\Gamma_0}(\widehat d=d^\star)\to1$, the selected procedure coincides
with the oracle with probability tending to one and pointwise equivalence
follows almost immediately; the substantive content of the theory lies in
the identification and impossibility results, which delimit \emph{when}
classification consistency is available at all. The result is pointwise:
it is not a uniform oracle inequality, carries no regret rate, and is
silent in the near-tie region of Theorem~\ref{thm:local_ties}, where
classification consistency fails by construction. A uniform oracle
inequality or local-margin regret rate --- in particular for the hybrid
rule near ties --- would be a genuine strengthening and is left open.

\begin{theorem}[Main Theorem: Oracle Adaptivity and Decision-Theoretic Optimality]
\label{thm:oracle_adaptivity_asymptotic_optimality}
Suppose the assumptions of Theorem~\ref{thm:dominant_geometry_consistency}
hold. For part (iii), assume additionally
Assumption~\ref{ass:oracle_action} and that the loss is bounded. Then
\begin{enumerate}

\item[(i)]
The profile-guided variance estimator is asymptotically equivalent
to the infeasible oracle estimator,

\[
\widehat {V}^{*}
-
\widehat V_{d^\star}
=
o_p(1).
\]

\item[(ii)] If, in addition, the oracle statistic satisfies

\[
T_T^{oracle}
\Rightarrow
N(0,1),
\]
then

\[
T_T^{*}
-
T_T^{oracle}
=
o_p(1).
\]

Consequently,

\[
T^{*}_{T}
\Rightarrow N(0,1),
\]
whenever the oracle statistic is asymptotically standard normal. Consequently, profile-guided inference and oracle inference have the same first-order asymptotic distribution.

\item[(iii)] Assume further that the loss function satisfies
\[
|L(a,\Gamma_0)|
\le M
\]
uniformly over $a\in\mathcal A$.  The regret of the profile-guided rule satisfies
\[
        \mathcal R(\delta,\Gamma_0)       
        \to 0 .
\]

Moreover,

\[
\mathcal R(\delta,\Gamma_0)
\le
2M
P_{\Gamma_0}(\widehat d\neq d^\star)\leq
        2M
        P_{\Gamma_0}
        \left(
        \max_{d\in\mathfrak D}
        |\widehat\omega_d^{\mathrm{off}}-\omega_{d,0}^{\mathrm{off}}|
        \ge
        \frac{\Delta_{\omega^{\mathrm{off}}}}{2}
        \right),
\]

so the rate at which regret vanishes is governed by the probability of
incorrect dependence classification.

\end{enumerate}

\end{theorem}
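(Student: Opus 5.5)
The whole argument reduces to classification consistency. By Theorem~\ref{thm:dominant_geometry_consistency}, the event $E_T=\{\widehat d=d^\star\}$ satisfies $P_{\Gamma_0}(E_T)\to1$. On $E_T$, every object indexed by $\widehat d$ coincides exactly with its oracle counterpart indexed by $d^\star$. I will therefore prove each part by splitting on $E_T$ and $E_T^c$.

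For part (i), the difference $\widehat V^*-\widehat V_{d^\star}=\widehat V_{\widehat d}-\widehat V_{d^\star}$ vanishes identically on $E_T$. Hence for every $\varepsilon>0$,
\[
P_{\Gamma_0}\bigl(|\widehat V^*-\widehat V_{d^\star}|>\varepsilon\bigr)\le P_{\Gamma_0}(E_T^c)\to0 .
\]
No rate and no consistency of $\widehat V_d$ for $d\neq d^\star$ is needed; measurability of each $\widehat V_d$ and the fixed tie-breaking rule ensure $\widehat V^*$ is a well-defined random variable.

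For part (ii), the same identity holds for the statistics. On $E_T$ both the normalization $\tau_{T,\widehat d}=\tau_{T,d^\star}$ and the variance $\widehat V^*=\widehat V_{d^\star}$ agree, so $T_T^*=T_T^{oracle}$ exactly. Thus $T_T^*-T_T^{oracle}$ is nonzero only on an event of vanishing probability, which gives $o_p(1)$. This avoids any need to control $\tau_{T,\widehat d}(\widehat\theta_T-\theta_0)$ off $E_T$, where it could be unbounded. Slutsky's lemma then transfers $T_T^{oracle}\Rightarrow N(0,1)$ to $T_T^*$, and in fact the two limit laws are identical.

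For part (iii), Assumption~\ref{ass:oracle_action} gives $\delta^\star(\Gamma_0)=a_{d^\star}$, and the pure rule gives $\delta(\widehat\omega)=a_{\widehat d}$. The integrand $L(a_{\widehat d},\Gamma_0)-L(a_{d^\star},\Gamma_0)$ is zero on $E_T$. On $E_T^c$ its absolute value is at most $2M$ by the uniform loss bound. Taking expectations yields $|\mathcal R(\delta,\Gamma_0)|\le 2M\,P_{\Gamma_0}(\widehat d\neq d^\star)$, and Proposition~\ref{prop:classification_bound} supplies the second inequality in terms of $\max_d|\widehat\omega^{\mathrm{off}}_d-\omega^{\mathrm{off}}_{d,0}|\ge\Delta_{\omega^{\mathrm{off}}}/2$. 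Convergence to zero follows from Theorem~\ref{thm:dominant_geometry_consistency}; alternatively, it follows from the $O_p(T^{-1/2})$ consistency of $\widehat\omega^{\mathrm{off}}$ in Proposition~\ref{prop:joint_profile_consistency} together with $\Delta_{\omega^{\mathrm{off}}}>0$.

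The step I expect to need the most care is the regret bookkeeping in (iii), although it is conceptual rather than technical. The loss is evaluated at the fixed $\Gamma_0$, while the randomness enters only through the selected action, so the expectation is over $\widehat d$ alone. One must also check that $L(a_{\widehat d},\Gamma_0)$ is measurable. This holds because $\widehat d$ takes finitely many values and is a measurable function of $\widehat\Gamma_T$, given the deterministic tie-breaking. With that in place, every part is a direct consequence of the event $\{\widehat d=d^\star\}$ having probability tending to one.
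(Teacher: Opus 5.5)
Your proposal is correct and follows the paper's proof essentially line for line. Both split on the event $\{\widehat d=d^\star\}$, whose probability tends to one by Theorem~\ref{thm:dominant_geometry_consistency}, observe that $\widehat V^*$, $T_T^*$, and the selected action coincide exactly with their oracle counterparts on that event, and bound everything off it by $P_{\Gamma_0}(\widehat d\neq d^\star)$ (times $2M$ for the regret). Your appeal to Proposition~\ref{prop:classification_bound} for the second regret inequality is slightly cleaner than the paper's version, which states the inclusion through the non-strict condition $\max_d|\widehat\omega^{\mathrm{off}}_d-\omega^{\mathrm{off}}_{d,0}|\le\Delta_{\omega^{\mathrm{off}}}/2$, under which ties can occur.
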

Theorem~\ref{thm:oracle_adaptivity_asymptotic_optimality} is the central data-to-decision result of the paper and establishes the positive side of the theory.
When covariance geometries are sufficiently separated,
dependence learning consistently identifies the dominant geometry and
achieves oracle-equivalent inference with asymptotically vanishing
regret. Conversely, when the separation margin becomes small,
the oracle action itself becomes unstable because competing geometries
are locally indistinguishable. In this regime, hybrid or conservative
procedures provide a natural alternative, linking the present
decision-theoretic analysis with the impossibility result of
Section~\ref{sec:identification}.

The oracle is a benchmark for the value of learning, not an estimator available
to the researcher: at fixed separated data-generating processes, estimating the
dominant geometry has no first-order inferential cost relative to knowing it in
advance. The deliberately pointwise scope also clarifies why ambiguity
diagnostics and hybrid actions remain essential near ties.

\subsection{Computation and Profile-Weighted Inference}
\label{subsec:computational_complexity}
For fixed $N$ the entire pipeline is inexpensive: constructing
$\widehat\Gamma_T$ is $O(TN^2)$ and the dominant projection cost is a dense
$O(N^3)$ eigendecomposition, with normalization, margins, residuals, and
$\widehat\kappa$ negligible by comparison. A profile-weighted alternative that
combines geometry-specific variance estimators through the estimated profile,
$\widehat V^{\mathrm{avg}}=\sum_{d}\widehat\omega_d^{\mathrm{off}}\widehat V_d$,
is also available. Computational details, scalable extensions for large $N$, and
the profile-weighted estimator are developed in Online Appendix~B.


\section{Simulation Evidence}
\label{sec:simulation}

This section summarizes Monte Carlo evidence on three questions: whether the
estimated profile recovers the dominant geometry, how classification behaves
under hybrid and near-tie dependence, and whether profile-guided inference
tracks the infeasible oracle that knows the dominant geometry in advance. Data-
generating processes, calibrations, projection algorithms, and additional
robustness exercises are in the Online Appendix. The baseline designs include
pure cluster, pure factor, sparse network, cluster--factor, cluster--sparse,
factor--sparse, all-three hybrid, and two-way cluster dependence.

\subsection{Dependence Learning}
\label{subsec:simulation_learning}

Table~\ref{tab:simulation_profiles} reports, for each design, the population
\emph{full dependence profile} $\boldsymbol\omega$, the population
\emph{off-diagonal dependence profile} $\boldsymbol\omega^{\mathrm{off}}$, the
population procedure confidence index $\kappa_0$, the Monte Carlo means of both
estimated profiles, and the root mean squared error of each. The off-diagonal
profile---the object that drives classification---recovers the dominant geometry
sharply in every design (e.g.\ $\omega^{\mathrm{off}}_F=0.829$ under Factor,
$\omega^{\mathrm{off}}_S=0.936$ under Sparse), whereas the full profile is pulled
toward uniformity by the common diagonal and separates the geometries far less
(cf.\ Section~\ref{sec:estimation}). The confidence index $\kappa_0$ is largest
exactly where one off-diagonal geometry is clearly separated and fits well
($\kappa_0=0.90$ under Sparse, $0.69$ under Factor) and smallest under
near-uniform or hybrid dependence. The estimation columns also make a
finite-sample point visible: with $T<N$ the sample operator is noisy, so both
estimated profiles are attenuated toward uniformity under the cluster and sparse
designs, and this is precisely what the two RMSE columns quantify. Crucially, this
attenuation shrinks the \emph{magnitudes} of the off-diagonal weights but
preserves their \emph{ordering}: the dominant geometry retains the largest
estimated off-diagonal score in each design (e.g.\ under Sparse, the estimated
sparse weight $0.477$ still exceeds the factor and cluster weights), so the
argmax classification rule remains reliable even where the profile magnitudes are
biased---a point confirmed directly by the calibration evidence in
Figure~\ref{fig:kappa_calibration}.

\begin{table}[!ht]
\centering
\caption{Estimated Dependence Profiles}
\label{tab:simulation_profiles}
\resizebox{\textwidth}{!}{%
\begin{tabular}{lccccccccccccccc}\toprule
Design & $\omega_C$ & $\omega_F$ & $\omega_S$ & $\omega_C^{\mathrm{off}}$ & $\omega_F^{\mathrm{off}}$ & $\omega_S^{\mathrm{off}}$ & $\kappa_0$ & $\widehat\omega_C$ & $\widehat\omega_F$ & $\widehat\omega_S$ & $\widehat\omega_C^{\mathrm{off}}$ & $\widehat\omega_F^{\mathrm{off}}$ & $\widehat\omega_S^{\mathrm{off}}$ & $\mathrm{RMSE}(\widehat{\boldsymbol\omega})$ & $\mathrm{RMSE}(\widehat{\boldsymbol\omega}^{\mathrm{off}})$\\
\midrule
Cluster & 0.567 & 0.103 & 0.330 & 0.689 & 0.028 & 0.284 & 0.405 & 0.456 & 0.201 & 0.343 & 0.497 & 0.170 & 0.334 & 0.150 & 0.245\\
Factor & 0.037 & 0.792 & 0.171 & 0.028 & 0.829 & 0.143 & 0.687 & 0.037 & 0.791 & 0.172 & 0.028 & 0.830 & 0.142 & 0.008 & 0.010\\
Sparse & 0.332 & 0.332 & 0.335 & 0.030 & 0.034 & 0.936 & 0.895 & 0.282 & 0.318 & 0.401 & 0.183 & 0.341 & 0.477 & 0.084 & 0.574\\
Cluster--Factor & 0.067 & 0.753 & 0.180 & 0.053 & 0.802 & 0.145 & 0.547 & 0.067 & 0.748 & 0.185 & 0.053 & 0.798 & 0.149 & 0.022 & 0.019\\
Cluster--Sparse & 0.556 & 0.107 & 0.336 & 0.674 & 0.027 & 0.299 & 0.372 & 0.452 & 0.205 & 0.343 & 0.495 & 0.172 & 0.333 & 0.144 & 0.234\\
Factor--Sparse & 0.037 & 0.792 & 0.171 & 0.028 & 0.829 & 0.143 & 0.685 & 0.038 & 0.790 & 0.173 & 0.028 & 0.829 & 0.143 & 0.009 & 0.011\\
All Three & 0.067 & 0.752 & 0.180 & 0.053 & 0.802 & 0.146 & 0.547 & 0.067 & 0.747 & 0.186 & 0.053 & 0.798 & 0.149 & 0.022 & 0.019\\
Two-way Cluster & 0.664 & 0.165 & 0.171 & 0.797 & 0.106 & 0.097 & 0.692 & 0.529 & 0.254 & 0.217 & 0.611 & 0.250 & 0.139 & 0.170 & 0.242\\
\bottomrule
\end{tabular}
}
\begin{minipage}{0.92\textwidth}
\footnotesize
\emph{Notes:} Columns $\omega_C,\omega_F,\omega_S$ report the population
\emph{full} dependence profile $\boldsymbol\omega$ (relative similarity to the
cluster, factor, and sparse geometries); columns
$\omega_C^{\mathrm{off}},\omega_F^{\mathrm{off}},\omega_S^{\mathrm{off}}$ report
the population \emph{off-diagonal} dependence profile
$\boldsymbol\omega^{\mathrm{off}}$; and $\kappa_0$ is the population procedure
confidence index. Columns $\widehat\omega_\bullet$ and
$\widehat\omega_\bullet^{\mathrm{off}}$ are the Monte Carlo means of the
corresponding estimated profiles over $B$ replications. The last two columns
report two distinct root mean squared errors:
$\mathrm{RMSE}(\widehat{\boldsymbol\omega})$ is the Monte Carlo root mean squared
Euclidean error of the estimated \emph{full}-profile vector, and
$\mathrm{RMSE}(\widehat{\boldsymbol\omega}^{\mathrm{off}})$ is the corresponding
error for the \emph{off-diagonal}-profile vector; for
$q\in\{\mathrm{full},\mathrm{off}\}$,
\[
\mathrm{RMSE}(\widehat{\boldsymbol\omega}^{q})
=\Bigl[B^{-1}\sum_{b=1}^{B}
\bigl\|\widehat{\boldsymbol\omega}^{q,(b)}-\boldsymbol\omega^{q}\bigr\|_2^2
\Bigr]^{1/2},
\qquad
\widehat{\boldsymbol\omega}^{\mathrm{off},(b)}
=\bigl(\widehat\omega_C^{\mathrm{off},(b)},
\widehat\omega_F^{\mathrm{off},(b)},
\widehat\omega_S^{\mathrm{off},(b)}\bigr)'.
\]
Similarity weights measure relative geometric proximity and are not additive
variance shares. The data-generating processes, calibration, and number of
replications $B$ are described in Online Appendix~E.
\end{minipage}
\end{table}

Figure~\ref{fig:hybrid_profiles} shows the average estimated off-diagonal profile
along a cluster--factor hybrid path. The profile changes smoothly as the
relative strength of factor dependence increases, illustrating that the
method summarizes hybrid dependence rather than forcing a binary
classification.

\begin{figure}[!ht]
\centering
\includegraphicsIfExists[width=.72\textwidth]{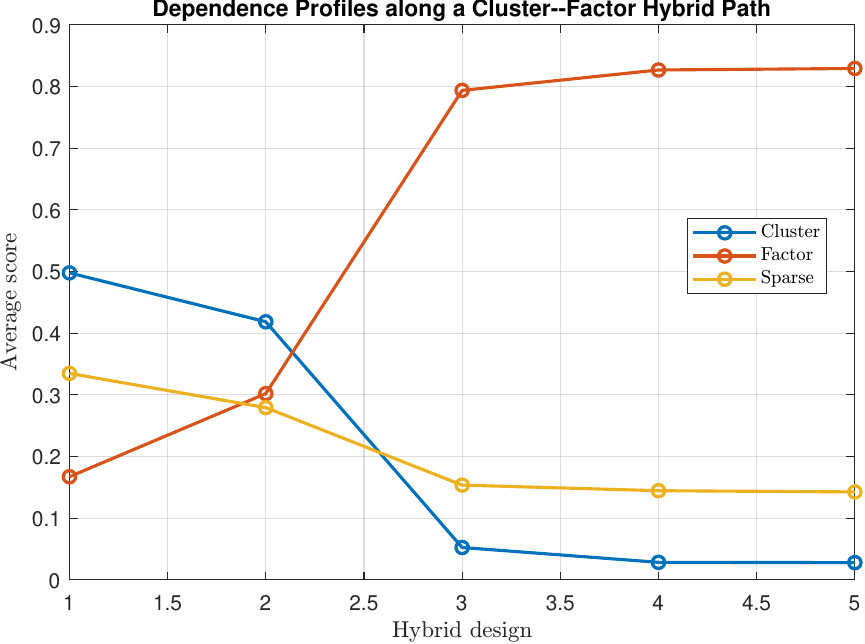}
\caption{Dependence Profiles along a Cluster--Factor Hybrid Path}
\label{fig:hybrid_profiles}
\begin{minipage}{0.86\textwidth}
\footnotesize
\emph{Notes:} The figure plots average estimated off-diagonal similarity weights along
a path moving from cluster-dominant to factor-dominant dependence.
\end{minipage}
\end{figure}

\subsection{Classification and Ambiguity}
\label{subsec:simulation_classification}

The classification theory predicts that dominant-geometry classification
is reliable when the separation margin is large and remains probabilistic
under local near-ties. Table~\ref{tab:near_ties} reports classification
frequencies for local cluster--factor alternatives, and
Figure~\ref{fig:classification_margin} plots the relationship between
classification error and the population separation margin.

\begin{table}[!ht]
\centering
\caption{Near-Ties and Ambiguous Dependence}
\label{tab:near_ties}
\begin{tabular}{lcccc}\toprule
$c$ & $P_B(\widehat d=C)$ & $P_B(\widehat d=F)$ & $P_B(\widehat d=S)$ & $\Delta_\omega$\\
\midrule
-4.0 & 0.000 & 1.000 & 0.000 & 0.673\\
-2.0 & 0.000 & 1.000 & 0.000 & 0.600\\
-1.0 & 0.003 & 0.997 & 0.000 & 0.395\\
0.0 & 0.173 & 0.827 & 0.000 & 0.026\\
1.0 & 0.937 & 0.063 & 0.000 & 0.199\\
2.0 & 1.000 & 0.000 & 0.000 & 0.250\\
4.0 & 1.000 & 0.000 & 0.000 & 0.294\\
\bottomrule
\end{tabular}

\begin{minipage}{0.88\textwidth}
\footnotesize
\emph{Notes:} The table reports Monte Carlo classification frequencies
under local cluster--factor alternatives. Near-ties generate
nondegenerate classification probabilities, as predicted by the local
alternatives theory.
\end{minipage}
\end{table}

\begin{figure}[!ht]
\centering
\includegraphicsIfExists[width=.72\textwidth]{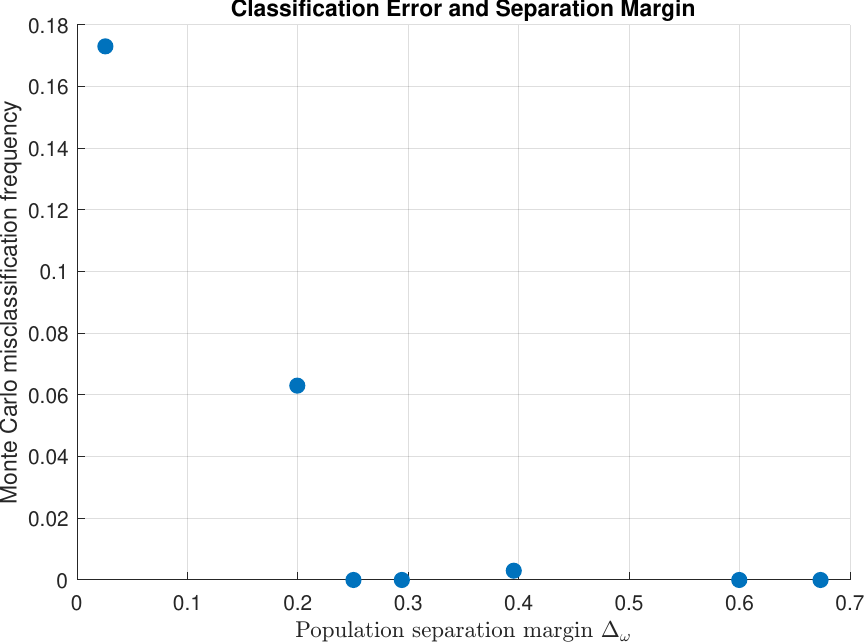}
\caption{Classification Error and Separation Margin}
\label{fig:classification_margin}
\begin{minipage}{0.86\textwidth}
\footnotesize
\emph{Notes:} The figure plots Monte Carlo misclassification frequency
against the population separation margin. Larger margins are associated
with lower classification error.
\end{minipage}
\end{figure}

A more demanding test is whether the procedure confidence index $\kappa_0$ is
\emph{calibrated}: does a larger $\kappa_0$ actually predict a higher probability
of recovering the dominant off-diagonal geometry? Figure~\ref{fig:kappa_calibration}
answers this along a cluster--factor path on which $\kappa_0$ sweeps from near
zero (at the crossover, where the two off-diagonal geometries tie and
Theorem~\ref{thm:fundamental_limit} predicts fundamental ambiguity) up to well-
separated values. The empirical classification probability
$\Pr(\widehat d = d^\star)$ tracks $\kappa_0$ closely: it collapses toward chance
in the near-tie region where $\kappa_0\approx0$ and rises to one once
$\kappa_0$ is bounded away from zero. The confidence index is therefore not
merely a diagnostic label but a calibrated predictor of classification
reliability---which is what licenses its use in the profile-guided decision rule.

\begin{figure}[!ht]
\centering
\includegraphicsIfExists[width=.72\textwidth]{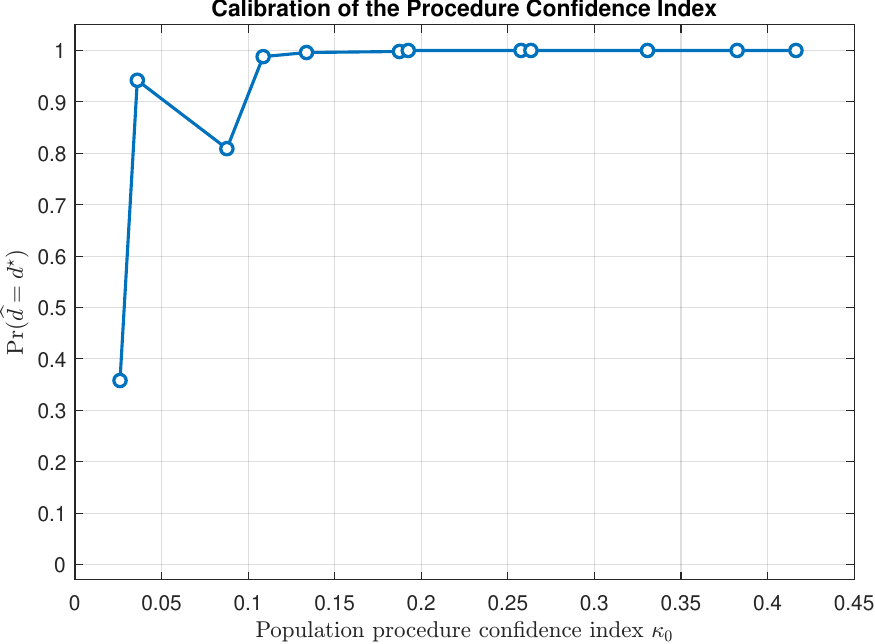}
\caption{Calibration of the Procedure Confidence Index}
\label{fig:kappa_calibration}
\begin{minipage}{0.86\textwidth}
\footnotesize
\emph{Notes:} Monte Carlo probability of correctly identifying the dominant
off-diagonal geometry, $\Pr(\widehat d=d^\star)$, plotted against the population
procedure confidence index $\kappa_0$ along a cluster--factor mixing path. Each
point is a design on the path; vertical bars are Monte Carlo standard errors.
Accuracy falls to near chance at the near-tie point ($\kappa_0\approx0$), where
the off-diagonal tangent spaces overlap, and rises to one as $\kappa_0$
increases, showing that $\kappa_0$ is a calibrated predictor of classification
reliability.
\end{minipage}
\end{figure}

\subsection{Oracle-Equivalent Inference}
\label{subsec:simulation_oracle}

The main econometric implication of Section~\ref{sec:profile_guided_inference}
is that profile-guided inference should be asymptotically equivalent to
an infeasible oracle procedure. Table~\ref{tab:oracle_equivalence}
reports finite-sample coverage probabilities for the oracle procedure,
the profile-guided procedure, and a deliberately misspecified procedure.

\begin{table}[!ht]
\centering
\caption{Oracle Equivalence of Profile-Guided Inference}
\label{tab:oracle_equivalence}
\begin{tabular}{lccc}\toprule
DGP & Oracle Coverage & Profile-Guided Coverage & Wrong Procedure\\
\midrule
Cluster & 93.3 & 93.3 & 20.9\\
Factor & 94.2 & 94.2 & 21.8\\
Sparse & 95.8 & 95.8 & 20.6\\
\bottomrule
\end{tabular}

\begin{minipage}{0.88\textwidth}
\footnotesize
\emph{Notes:} The table reports empirical coverage probabilities for
nominal 95\% confidence intervals. The oracle procedure uses the
variance estimator associated with the true dominant covariance geometry.
The profile-guided procedure uses the estimated dominant geometry. The
``Wrong Procedure'' column reports a deliberately misspecified benchmark.
\end{minipage}
\end{table}

The profile-guided procedure is nearly indistinguishable from the oracle across
the benchmark designs, while the misspecified procedure can exhibit severe
undercoverage. These results provide finite-sample support for
Theorem~\ref{thm:oracle_adaptivity_asymptotic_optimality} and the paper's main
practical message: learning the dependence profile can guide the choice of robust
inference procedure in a way that tracks the infeasible oracle benchmark.

\begin{figure}[!ht]
\centering
\includegraphicsIfExists[width=.72\textwidth]{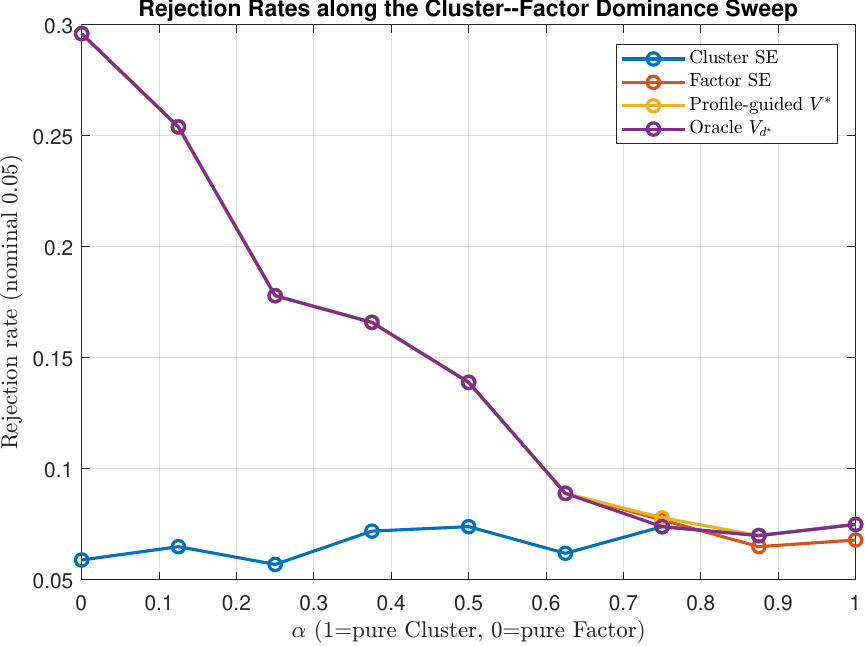}
\caption{Oracle Tracking along a Cluster--Factor Dominance Sweep}
\label{fig:oracle_tracking}
\begin{minipage}{0.86\textwidth}
\footnotesize
\emph{Notes:} The figure compares rejection rates for fixed procedures,
the profile-guided procedure, and the infeasible oracle along a
cluster--factor dominance sweep. The profile-guided and oracle procedures
closely coincide when classification is reliable.
\end{minipage}
\end{figure}

\section{Empirical Illustration}
\label{sec:empirical}
Throughout, we report the full profile $\widehat\omega$ and its off-diagonal
companion $\widehat\omega^{\mathrm{off}}$; the latter is the primary
classification and procedure-selection estimand, the former records overall
covariance fit. We present two complementary applications. The first, using
Fama--French industry portfolios, is a transparent case in which the dictionary
fits only weakly, so the diagnostics correctly report \emph{ambiguity}. The
second, using the Baltagi cigarette-demand panel, is \emph{consequential}: the
significance of an economically important coefficient depends on which
cross-sectional dependence structure is assumed for the standard errors, and the
profile identifies which structure the data exhibit. Together they show the
diagnostics behaving honestly when the dictionary fits poorly and decisively when
it fits well.

We use the Fama--French 49 industry portfolios and three factors from the data
library of \citet{FamaFrench1993} and \citet{FamaFrenchDataLibrary}. Industry
portfolio returns are well suited to the framework: they may exhibit cluster
dependence (industries group into sectors), factor dependence (returns load on
common market-wide shocks), and sparse dependence (some industries are linked
through input--output relations or supply chains more than others).

\subsection{Data and Regression Specification}
\label{subsec:empirical_data}

The analysis uses monthly excess returns on the 49 industry portfolios and the
market, SMB, and HML factors from the Kenneth R. French Data Library
\citep{FamaFrenchDataLibrary}. Let $R_{it}$ be the excess return on portfolio
$i$ in month $t$ ($i=1,\ldots,N$, $t=1,\ldots,T$, $N=49$). Ken French's raw files
code an industry-month as missing (via a numerical sentinel) whenever too few
firms are available; we drop every month in which any of the 49 industries is
flagged missing, restricting the sample to $T=682$ months (July 1969--April
2026). We estimate the factor model
\begin{equation}
R_{it}
=
\alpha_i
+
\beta_{iM} MKT_t
+
\beta_{iS} SMB_t
+
\beta_{iH} HML_t
+
u_{it},
\label{eq:empirical_factor_model}
\end{equation}
stack the residuals as $\widehat u_t=(\widehat u_{1t},\ldots,\widehat u_{Nt})'$,
and form the baseline residual covariance operator
\begin{equation}
\widehat\Gamma
=
\frac{1}{T}
\sum_{t=1}^{T}
\widehat u_t\widehat u_t'
\label{eq:empirical_operator}
\end{equation}
together with the residual correlation operator
$\widehat\Gamma^{COR}=\widehat D^{-1/2}\widehat\Gamma\widehat D^{-1/2}$,
$\widehat D=\operatorname{diag}(\widehat\Gamma)$. The covariance operator
preserves cross-industry volatility differences; the correlation operator focuses
on dependence geometry after normalizing scale.

\subsection{Dependence Classes and Projections}
\label{subsec:empirical_classes}

We project the empirical operator onto the cluster, factor, and sparse
geometries; the algorithms are those of Online Appendix~B. For the
\textbf{cluster} geometry we use a one-way specification based on the
Fama--French 10-sector classification, with membership $g(i)\in\{1,\ldots,G\}$
and support $(M_C)_{ij}=\ind\{g(i)=g(j)\}$. Because this is a one-way partition
and $\widehat\Gamma$ is PSD, the exact projection is the block mask
$\widehat P_C=M_C\odot\widehat\Gamma$ (each retained principal block is PSD; for
overlapping multiway supports the code instead solves the PSD-constrained
fixed-support projection). The \textbf{factor} projection
$\widehat P_F=L^*+D^*$ ($L^*$ rank-$r$ PSD, $D^*$ diagonal) is computed by the
alternating algorithm of Online Appendix~B, with $r$ selected by an
eigenvalue-ratio criterion on $\widehat\Gamma-\widehat D$ (robustness:
$r\in\{1,2,3\}$). The \textbf{sparse} projection jointly selects an off-diagonal
support of size at most $k_S$ and solves a PSD-constrained least-squares problem
on it; since exhaustive search is combinatorial, the code retains the $k_S$
largest off-diagonal entries of $\widehat\Gamma$ and projects onto the PSD cone
on that fixed support via Dykstra's algorithm---a feasible approximation to the
population metric projection, not a claim that hard thresholding alone solves the
global nonconvex problem.

\subsection{Estimated Dependence Profiles and Residual Diagnostics}
\label{subsec:empirical_profile}

We apply the estimated profile and projection-residual diagnostics of
Sections~\ref{subsec:similarity_score_estimation}
and~\ref{subsec:projection_residuals} to $\widehat\Gamma$. The profile describes
which geometry provides the strongest \emph{relative} fit; the residuals assess
\emph{absolute} fit, with a large $\widehat\rho_{\min}$ signaling that no
candidate geometry approximates the operator closely even if one receives the
largest relative score. In this application a large $\widehat\omega_C$ reflects
dependence concentrated within broad industry sectors, a large
$\widehat\omega_F$ low-rank common components not absorbed by the observed
factors, and a large $\widehat\omega_S$ dependence concentrated among a small
subset of industry pairs.

\subsection{Empirical Results}
\label{subsec:empirical_results}

Table~\ref{tab:empirical_profile} reports the estimated dependence
profiles and projection-residual diagnostics for the residual covariance
and correlation operators.

\begin{table}[!ht]
\centering
\caption{Dependence Profiles and Projection-Residual Diagnostics for Industry Portfolio Residuals}
\label{tab:empirical_profile}
\begin{tabular}{lcccccccc}\toprule
Operator & $\widehat\omega_C$ & $\widehat\omega_F$ & $\widehat\omega_S$ & $\widehat\rho_C$ & $\widehat\rho_F$ & $\widehat\rho_S$ & $\widehat\rho_{\min}$ & $r$\\
\midrule
Covariance & 0.321 & 0.324 & 0.355 & 0.465 & 0.457 & 0.366 & 0.366 & 1\\
Correlation & 0.293 & 0.364 & 0.342 & 0.619 & 0.485 & 0.529 & 0.485 & 1\\
\bottomrule
\end{tabular}

\begin{minipage}{0.92\textwidth}
\footnotesize
\emph{Notes:}
\vspace{0.4em}
The table reports estimated similarity scores
\((\widehat\omega_C,\widehat\omega_F,\widehat\omega_S)\)
and projection-residual diagnostics
\((\widehat\rho_C,\widehat\rho_F,\widehat\rho_S,\widehat\rho_{\min})\)
using residuals from the industry-level Fama--French regression in
\eqref{eq:empirical_factor_model}. The covariance operator preserves
scale differences across industries, while the correlation operator
normalizes industry-specific residual volatility. The similarity scores
are relative geometric proximity measures and should not be interpreted
as additive variance shares. The residual diagnostics measure absolute
distance from the empirical operator to each covariance geometry.
\end{minipage}
\end{table}

Table~\ref{tab:empirical_profile} reveals a hybrid dependence structure
in which no single covariance geometry dominates and none provides a
close absolute fit.

\textit{Covariance and correlation operators.}
The three similarity scores are close, $\widehat\omega_C=0.321$,
$\widehat\omega_F=0.324$, $\widehat\omega_S=0.355$, with the sparse score
marginally largest and a substantial minimum residual $\widehat\rho_{\min}=0.366$
(attained by the sparse projection). Residual co-movement among the 49
industries---after removing the three Fama--French factors---is thus not well
summarized by any single geometry: sector clustering, common factor structure,
and localized sparse dependence all receive comparable, individually limited
support, and the large $\widehat\rho_{\min}$ signals that the dictionary itself
is, in absolute terms, incomplete for this dataset. Normalizing by industry
volatility shifts the off-diagonal profile mildly toward factor dominance
($\widehat\omega_F^{\mathrm{off}}=0.364$, with sparse $0.342$ and cluster $0.293$
close behind) and raises the minimum residual to $\widehat\rho_{\min}=0.485$.

\textit{Standard errors and recommendation.}
Sector-clustered and two-way clustered standard errors are roughly
$5.6$--$5.8\times$ the White standard error for the market factor
($0.043$--$0.045$ vs.\ $0.008$), while the common-shock adjustment is smaller
($0.006$); the wide spread across procedures (reported in full in Online
Appendix~F) is itself a symptom of the hybrid structure.
Table~\ref{tab:empirical_inference} reports the confidence index and
recommendation for each operator.

\begin{table}[!ht]
\centering
\caption{Profile-Guided Inference Recommendation for Industry Portfolio Residuals}
\label{tab:empirical_inference}
\resizebox{\textwidth}{!}{%
\begin{tabular}{lccccc}\toprule
Operator & $\widehat\Delta_\omega$ & $\widehat\rho_{\min}$ & $\widehat\kappa$ & Dominant geometry & Profile-guided recommendation\\
\midrule
Covariance & 0.031 & 0.366 & 0.019 & Sparse & Hybrid: report multiple SEs (low $\widehat\kappa$)\\
Correlation & 0.022 & 0.485 & 0.011 & Factor & Hybrid: report multiple SEs (low $\widehat\kappa$)\\
\bottomrule
\end{tabular}

}
\begin{minipage}{0.99\textwidth}
\footnotesize
\vspace{0.4em}
\emph{Notes:}
The table reports the separation margin
$\widehat\Delta_{\omega^{\mathrm{off}}}=\max_d\widehat\omega_d^{\mathrm{off}}-\max_{d\neq\widehat d}\widehat\omega_d^{\mathrm{off}}$,
the minimum projection residual $\widehat\rho_{\min}$, the procedure
confidence index $\widehat\kappa=(1-\widehat\rho_{\min})\widehat\Delta_{\omega^{\mathrm{off}}}$,
the estimated dominant geometry, and the implied profile-guided
recommendation. A small $\widehat\kappa$ indicates either near-tie
separation or poor dictionary fit; the recommended response is to report
inference from multiple procedures.
\end{minipage}
\end{table}

Both operators yield small confidence indices: $\widehat\kappa=0.019$ for the
covariance operator ($\widehat\Delta_{\omega^{\mathrm{off}}}=0.031$,
$\widehat\rho_{\min}=0.366$) and $\widehat\kappa=0.011$ for the correlation
operator ($\widehat\Delta_{\omega^{\mathrm{off}}}=0.022$,
$\widehat\rho_{\min}=0.485$). These low values correctly indicate a genuinely
hybrid structure that the dictionary describes only partially, so no single
procedure can be recommended with high confidence, and the framework prescribes
reporting several procedures and documenting sensitivity---exactly the spread
documented in Online Appendix~F.

This example illustrates a key advantage of the proposed framework
over a priori procedure selection.
A researcher who assumes cluster dependence and uses sector-clustered
standard errors is implicitly claiming $\widehat\kappa\approx 1$ for
the cluster geometry---a claim the data refute, since the cluster score
is not even the largest of the three under either operator.
A researcher who learns the profile first discovers the hybrid
structure, obtains $\widehat\kappa$ close to zero, and is correctly
directed to report multiple procedures rather than to commit to one.
The framework thus transforms an informal sensitivity analysis into a
statistically grounded diagnostic.

\subsection{A Consequential Application: Cigarette Demand and the
Cross-Border Effect}
\label{subsec:empirical_cigar}

The Fama--French illustration deliberately reports ambiguity. We now turn
to a case in which the dependence profile changes an inferential
conclusion. We use the U.S. cigarette demand panel of \citet{Baltagi2006}, a
public dataset distributed with that text \citep{BaltagiCigarData}:
annual observations on $N=46$ states
over $T=30$ years (1963--1992), a standard testing ground for
cross-sectional dependence because cigarette markets are linked across
states by bootlegging and cross-border shopping. The data are public and
bundled with the replication package.

\paragraph{Regression and the contested coefficient.}
We estimate a conventional two-way (state and year) fixed-effects demand
equation,
\begin{equation}
\log \mathrm{sales}_{it}
=
\beta_1 \log p^{\mathrm{real}}_{it}
+
\beta_2 \log y^{\mathrm{real}}_{it}
+
\beta_3 \log p^{\mathrm{near}}_{it}
+
\alpha_i + \tau_t + u_{it},
\label{eq:cigar_regression}
\end{equation}
where $\mathrm{sales}_{it}$ is per-capita sales, $p^{\mathrm{real}}$ is
the real price, $y^{\mathrm{real}}$ is real per-capita income, and
$p^{\mathrm{near}}$ is the minimum real price in neighboring states. The
coefficient of interest is $\beta_3$: a negative value indicates that
cheaper cigarettes nearby draw sales away from a state---the cross-border
(bootlegging) effect that motivates much of the empirical
cigarette-tax literature. The price and income elasticities are estimated
sharply and are significant under every standard-error method
(Table~\ref{tab:cigar_se}); the cross-border effect is the coefficient
whose significance is in question.

\paragraph{The inferential problem.}
Table~\ref{tab:cigar_se} reports $\hat\beta_3$ with four standard errors.
The estimate is $\hat\beta_3=-0.117$ under all of them---only the
standard error changes---yet the conclusion does not survive that change.
Under homoskedastic and cluster-by-year standard errors the effect is
significant at the five percent level ($p=0.030$ and $p=0.041$); under
HC1 and the conventional cluster-by-state standard errors it is not
($p=0.064$ and $p=0.170$). A researcher who follows the common default of
clustering by state concludes that cross-border effects are absent; one
who allows for a common temporal shock concludes the opposite. Nothing in
the regression output adjudicates between them.

\begin{table}[htbp]
\centering
\caption{Cigarette demand: $\hat\beta$ and four standard errors. The
cross-border coefficient (log nearby min price) is significant at five
percent under homoskedastic and cluster-by-year standard errors, and
insignificant under HC1 and cluster-by-state.}
\label{tab:cigar_se}
\small
\begin{tabular}{lcccccc}\toprule
Coefficient & SE method & $\hat\beta$ & SE & $t$ & $p$ & \\
\midrule
log real price & Homoskedastic & -1.023 & 0.042 & -24.46 & 0.000 & $^{*}$\\
log real price & HC1 & -1.023 & 0.061 & -16.66 & 0.000 & $^{*}$\\
log real price & Cluster-state & -1.023 & 0.224 & -4.57 & 0.000 & $^{*}$\\
log real price & Cluster-year & -1.023 & 0.061 & -16.79 & 0.000 & $^{*}$\\
log real income & Homoskedastic & 0.520 & 0.047 & 11.14 & 0.000 & $^{*}$\\
log real income & HC1 & 0.520 & 0.059 & 8.87 & 0.000 & $^{*}$\\
log real income & Cluster-state & 0.520 & 0.165 & 3.16 & 0.002 & $^{*}$\\
log real income & Cluster-year & 0.520 & 0.082 & 6.34 & 0.000 & $^{*}$\\
log nearby min price & Homoskedastic & -0.117 & 0.054 & -2.17 & 0.030 & $^{*}$\\
log nearby min price & HC1 & -0.117 & 0.063 & -1.85 & 0.064 & \\
log nearby min price & Cluster-state & -0.117 & 0.085 & -1.37 & 0.170 & \\
log nearby min price & Cluster-year & -0.117 & 0.057 & -2.04 & 0.041 & $^{*}$\\
\bottomrule
\end{tabular}

\end{table}

\paragraph{What the dependence profile says.}
We estimate the profile on the two-way within residuals, using exactly the
operator of the rest of the paper: $\widehat\Gamma_T=T^{-1}\sum_t\widehat
u_t\widehat u_t'$ projected onto the cluster (four census regions), factor
($r=1$), and sparse ($10\%$ of off-diagonal pairs) geometries, estimated without
reference to which standard error is ``interesting.'' Table~\ref{tab:cigar_profile}
reports the result: the estimated \emph{full dependence profile} is
factor-dominant, $\widehat{\boldsymbol\omega}=(0.16,\,0.46,\,0.38)$, and the
\emph{off-diagonal dependence profile}---the object that drives procedure
selection---points the same way, since the factor geometry fits the residual
co-movement far better than the alternatives ($\widehat\rho_F=0.15$, well below
the cluster residual $\widehat\rho_C=0.81$). The confidence index is
$\widehat\kappa=0.07$---modest but, unlike the Fama--French case, backed by a
genuinely small $\widehat\rho_{\min}$: here the dictionary fits.

\begin{table}[htbp]
\centering
\caption{Cigarette demand: dependence profile and diagnostics for the
residual operator. Factor dependence dominates and fits well
($\widehat\rho_F=0.15$); cluster dependence fits poorly
($\widehat\rho_C=0.81$).}
\label{tab:cigar_profile}
\small
\begin{tabular}{lc}\toprule
Quantity & Value\\
\midrule
$\omega_C$ (cluster) & 0.161\\
$\omega_F$ (factor) & 0.461\\
$\omega_S$ (sparse) & 0.378\\
$\rho_C$ & 0.812\\
$\rho_F$ & 0.153\\
$\rho_S$ & 0.448\\
$\hat\rho_{\min}$ & 0.153\\
$\hat\Delta_\omega$ & 0.084\\
$\hat\kappa$ & 0.071\\
\bottomrule
\end{tabular}

\end{table}

\paragraph{Interpretation.}
The profile indicates that what remains of the cross-sectional dependence after
two-way demeaning is factor-shaped rather than state-clustered, and this is
economically sensible. The factor cannot be a common additive year shock---
additive year effects $\tau_t$ are already absorbed by the within
transformation---but instead reflects \emph{heterogeneous exposure} to
nationally common forces: cigarette demand across states responds with differing
sensitivity to national anti-smoking campaigns, federal tax and regulatory
changes, nationally marketed advertising, and secular shifts in health attitudes.
States with similar demographics, smoking cultures, or regulatory environments
load similarly on these common movements, producing a low-rank co-movement in
residual demand that a single national dummy per year cannot absorb and that
within-state clustering is not designed to capture. This is exactly the
structure the factor geometry detects, which is why it both dominates the profile
and fits well ($\widehat\rho_F=0.15$) while the state-cluster geometry fits worst
($\widehat\rho_C=0.81$). Clustering by state, the conventional default, targets
within-state serial correlation that the profile finds both subdominant and
poorly fitting, and it is the method under which the cross-border effect appears
insignificant; standard errors that accommodate the common temporal factor keep
it significant. The profile thus provides evidence favoring the latter and
identifies the state-clustered standard error as the one whose maintained
assumption the data support least. Because $\widehat\kappa=0.07$ is modest, we
report all four procedures and treat the reading as a principled sensitivity
analysis rather than a definitive resolution.

\paragraph{From classical to adaptive inference.}
It is worth stating plainly what the framework changes here. \emph{Classical
practice} fixes a dependence assumption before looking at the residual
dependence---most commonly clustering by state---and reads off the standard
error it implies; under that default the cross-border coefficient is
insignificant ($p=0.170$), and the researcher concludes that bootlegging leaves
no detectable footprint. \emph{Adaptive inference} instead learns the dependence
geometry from the same residuals: the off-diagonal profile points to a factor
structure that fits well and away from the state-cluster structure that fits
poorly, directing inference to a common-shock--robust standard error under which
the same point estimate ($\widehat\beta_3=-0.117$) is significant ($p=0.030$).
The two routes reach \emph{opposite substantive conclusions} about a
policy-relevant elasticity from identical data and an identical coefficient; what
differs is only the maintained dependence assumption, and the classical route
selects that assumption by convention while the adaptive route selects it by
evidence. This is the concrete answer to ``why learn the geometry?''---not a
tighter standard error for its own sake, but protection against a default
dependence assumption that the data actively contradict.

\paragraph{Transparency.}
Two caveats are reported openly. First, because the operator is formed from
two-way within residuals with $T=30<N=46$, $\widehat\Gamma_T$ is rank-deficient
(rank $29$); this does not affect the profile, which is a projection functional
requiring no inverse of $\widehat\Gamma_T$. Second, the object profiled is the
dependence operator of the \emph{within-transformed} residuals: with $N$ fixed,
year demeaning applies the fixed linear map $M_N=I_N-N^{-1}\mathbf 1\mathbf 1'$
to each cross-section, so $\widehat\Gamma_T$ estimates $M_N\Sigma M_N$ rather
than $\Sigma$. This is the correct target, not a defect---the coefficient
estimator and all four standard errors are computed from the same
within-transformed data---and Proposition~\ref{prop:consistency_operator} applies
once $u_t$ is read as the within-transformed disturbance, via the generic
residual condition of Remark~\ref{rem:generic_residuals}.



\section{Conclusion}
\label{sec:conclusion}
This paper studies a prior problem usually resolved before formal inference
begins: which dependence structure should guide the inferential procedure?
Rather than treating that structure as known, we represent candidate mechanisms
as covariance geometries, estimate their relative empirical support through
projection, and use the resulting profile to guide procedure choice. The central
contribution is not another covariance estimator or robust variance formula but a
unified framework for learning dependence that is relevant for inference.

The framework connects four steps usually treated separately: an estimable
dependence operator is projected onto a dictionary of covariance geometries; the
full and off-diagonal profiles summarize overall covariance fit and
cross-sectional dependence; local projection regularity and geometric separation
determine whether a dominant geometry can be learned, while tangent-space overlap
characterizes first-order ambiguity; and a decision rule maps the off-diagonal
profile into an inference procedure. Under a uniquely separated dominant geometry
and the profile--loss compatibility condition, this rule is asymptotically
equivalent to an infeasible oracle and has vanishing regret. The profile is a
low-dimensional diagnostic relative to a chosen dictionary, not a structural
identification device: a dominant component indicates which geometry receives the
strongest support for inference, a near tie signals intrinsic ambiguity when
tangent directions overlap, and a large minimum residual signals that the
dictionary itself is inadequate. The framework thus produces not only a
recommendation but also diagnostics governing how much confidence to place in it.

The broader implication is a change in the order of econometric reasoning:
instead of asking only which robust covariance estimator to use, the researcher
first asks what dependence structure the operator supports, how sharply it is
separated from alternatives, and what that implies for inference---covariance
geometry supplying the representation, the profile the diagnostic, geometric
separation the learnability criterion, and decision theory the inferential
action. The analysis deliberately fixes the cross-sectional dimension and a finite
dictionary, isolating the learning and decision problems; important extensions
include joint $(N,T)$ asymptotics, expanding or data-driven dictionaries, spatial
and network geometries, nonlinear operators, and decision rules that combine
rather than select procedures near ties.

{\small
\begin{center}
\begin{tabular}{p{0.24\textwidth}p{0.66\textwidth}}
\toprule
Diagnostic pattern & Recommended reporting strategy \\
\midrule
Large margin, small minimum residual & Report the matched procedure as the
primary specification and document the profile diagnostics. \\
Small margin, adequate dictionary fit & Treat the result as genuine ambiguity;
report multiple geometry-specific procedures or a pre-specified hybrid rule. \\
Large minimum residual & Regard the candidate dictionary as inadequate and
avoid a geometry-specific recommendation without expanding the dictionary. \\
Full/off-diagonal profiles disagree & Explain whether diagonal heterogeneity or
cross-sectional dependence drives the difference and base procedure choice on
the off-diagonal profile. \\
\bottomrule
\end{tabular}
\end{center}
}

\subsection*{Practical Guidance for Applied Researchers} On constructing the dictionary: include a cluster geometry when units plausibly share group-level shocks (industry, region, cohort); a factor geometry when a few pervasive drivers move most units together; and a sparse geometry when dependence is localized or network-based. When in doubt, include a candidate and let the profile and residual adjudicate---an over-inclusive dictionary costs precision, not validity. On using the output: match the operator to the inferential target; treat a nearly uniform profile with small $\widehat\kappa$ as genuine ambiguity and report several procedures; prefer the hybrid rule when $\widehat\Delta_{\omega^{\mathrm{off}}}$ is within sampling error of zero; read a large $\widehat\rho_{\min}$ as dictionary inadequacy; and when one geometry clearly dominates with small residual, the matched procedure inherits the oracle guarantees of Section~\ref{sec:profile_guided_inference}. The profile is invariant to positive rescalings of the operator but not to its definition, which should be reported; the projection residuals quantify the information the summary discards, and dictionary adequacy is testable through $\widehat\rho_{\min}$.

\begingroup
\small
\setstretch{1.0}
\setlength{\bibsep}{0.0pt}
\bibliographystyle{ecta_fallback_plainnat}
\bibliography{references}
\endgroup
\newpage

\paragraph{\bf Supplement to ``Learning Dependence Structures for Econometric Inference:
Identification, Ambiguity, and Adaptive Inference''}
%
%
%

%
%

\maketitle

\begin{abstract}
\noindent
This online appendix collects supplementary material for ``Learning Dependence Structures for Econometric Inference:
Identification, Ambiguity, and Adaptive Inference.'' Appendix~A develops
additional geometric results supporting identification: regular points
and tangent spaces of the cluster, factor, and sparse covariance
classes, the principal-angle separation condition, and examples of
geometric overlap that produce ambiguous dependence structures.
Appendix~B describes the computational implementation of the
projection-based estimators and additional choices of empirical
dependence operator. Appendix~C contains proofs of the main paper's
identification, asymptotic normality, classification, and
oracle-adaptivity results. Appendix~D collects auxiliary technical
results used in those proofs, including the two-geometry
local-indistinguishability result and projection regularity lemmas.
Appendix~E reports the simulation design underlying the main paper's
Monte Carlo evidence---data-generating processes, projection
algorithms, and calibration choices---together with three robustness
exercises not in the main text: a direct check of the principal-angle
condition at the simulation parameters, a test of profile and residual
behavior under misspecified dependence operators, and the mechanics
behind the procedure-recommendation result, including a root-cause
diagnosis and fix for a finite-sample distortion in the factor-robust
variance estimator. Appendix~F reports additional empirical detail for the
Fama--French illustration, including the pooled-regression robust
standard-error benchmark.
\end{abstract}

\medskip


\newpage

\appendix

\setcounter{equation}{0} \setcounter{assumption}{0}
\setcounter{figure}{0} \setcounter{table}{0} \setcounter{remark}{0}

 \renewcommand{%
\theequation}{A.\arabic{equation}}
\renewcommand{\thelemma}{A.\arabic{lemma}}
\renewcommand{\theassumption}{A.\arabic{assumption}} \renewcommand{%
\thetheorem}{A.\arabic{theorem}}

\renewcommand{\thetable}{A.\arabic{table}}
\renewcommand{\thefigure}{A.\arabic{figure}}

\renewcommand{\thesubsection}{A.\arabic{subsection}}
\renewcommand{\theremark}{A.\arabic{remark}}
\section{Additional Results for Geometric Identification}
\label{sec:identification_app}

\subsection{Regular Points and Tangent Spaces}
\label{subsec:regular_tangent}
 
We introduce explicit regularity conditions for each geometry and derive the
associated tangent spaces.
 
\paragraph{Cluster geometry.}
Let $\mathcal I_C$ be the cluster support set defined in
Section~\ref{sec:geometry}.
A point $\Sigma_C\in\mathcal S_C$ is \emph{regular} if it lies in
the relative interior of the fixed-support PSD cone; a convenient sufficient
condition is that $\Sigma_C$ be positive definite on the coordinate subspace
induced by the cluster blocks.  At such a point the PSD inequality is locally
inactive relative to the support subspace.  The tangent space is then the
closed linear subspace
\[
T_C(\Sigma_C)
=
\bigl\{H\in\mathbb H:H_{ab}=0\text{ whenever }(a,b)\notin\mathcal I_C\bigr\}.
\]
Tangent directions may perturb any covariance entry on the cluster support
but cannot introduce dependence between observations sharing no cluster
membership.
 
\paragraph{Factor geometry.}
Write $\Sigma_F=L_0+D_0$ with $L_0=\Lambda_0\Lambda_0'$,
$\operatorname{rank}(L_0)=r$, and $D_0$ diagonal.
A point $\Sigma_F$ is \emph{regular} if the $r$ leading eigenvalues of
$L_0$ are positive and distinct and the diagonal component satisfies
$D_{0,ii}>0$ for every $i$.
Under this generic condition the rank-$r$ manifold is smooth at $L_0$
\citep{LewisMalick2008}, and the tangent space of $\mathcal S_F(r)$ at
$\Sigma_F$ is
\[
T_F(\Sigma_F)
=
\bigl\{\Lambda_0 H'+H\Lambda_0'+\operatorname{diag}(v):
H\in\mathbb R^{N\times r},\;v\in\mathbb R^N\bigr\}.
\]
The first two terms span the tangent of the rank-$r$ manifold at $L_0$; the
diagonal term $\operatorname{diag}(d)$ accounts for free perturbations of the
noise floor $D_0$.
Note that $\mathcal S_F(r)$ is a cone: if $\Gamma=L+D\in\mathcal S_F(r)$ then
$a\Gamma=aL+aD\in\mathcal S_F(r)$ for all $a>0$, since $\operatorname{rank}(aL)=\operatorname{rank}(L)\leq r$
and $aD$ is diagonal.
 
\paragraph{Sparse geometry.}
Let $I_0$ be the unique active support selected by
Assumption~\ref{ass:sparse_unique}, and let
$\Sigma_S=P_S(\Gamma_0)\in\mathcal C_{I_0}$.  A sparse projected point is
\emph{regular} if it lies in the relative interior of the fixed-support PSD
cone $\mathcal C_{I_0}$; a convenient sufficient condition is positive
definiteness on the active coordinate subspace.  The objective gap in
Assumption~\ref{ass:sparse_unique} stabilizes the support, while relative
interiority makes the PSD inequality locally inactive on that support.  The
tangent space at such a regular sparse point is
\[
T_S(\Sigma_S)
=
\bigl\{H\in\mathbb H:H_{ij}=0\text{ whenever }(i,j)\notin I_0\bigr\}.
\]

\paragraph{The diagonal is a common nuisance direction.}
\label{par:diagonal_nuisance}
By construction, $(i,i)\in\mathcal I_C$ for every $i$ (every observation
trivially shares a cluster membership with itself), so
$\operatorname{diag}(v)\in T_C(\Sigma_C)$ for every $v\in\mathbb R^N$:
the cluster tangent space contains \emph{all} diagonal perturbations.
The factor tangent space $T_F(\Sigma_F)$ contains the same set of
diagonal perturbations directly, via its free $\operatorname{diag}(v)$
term, which represents the noise floor $D_0$. Consequently,
\[
\{\operatorname{diag}(v):v\in\mathbb R^N\}
\subseteq
T_C(\Sigma_C)\cap T_F(\Sigma_F)
\]
at \emph{every} pair of regular cluster and factor points, regardless of
parameter values: the cluster and factor tangent spaces always share the
full diagonal subspace. This is a substantive feature of the geometries,
not an estimation artifact, and it reflects the fact that all three
covariance classes leave each observation's own variance ($\Sigma_{ii}$)
unrestricted; the diagonal is a common nuisance direction present in
every geometry under consideration; analogously, the sparse projection of Section \ref{subsec:computation} always retains the diagonal regardless
of the sparsity budget $k_S$. Because the diagonal carries no information
about \emph{dependence} between distinct observations---which is the
object of interest throughout this paper---we factor it out of the
principal-angle condition below by restricting attention to the
off-diagonal part of each tangent space,
\[
T_d^{\mathrm{off}}(\Sigma_d)
=
\bigl\{H\in T_d(\Sigma_d):H_{ii}=0\text{ for all }i\bigr\},
\qquad d\in\mathfrak D.
\]
Equivalently, $T_d^{\mathrm{off}}(\Sigma_d)$ is the intersection of
$T_d(\Sigma_d)$ with the off-diagonal hyperplane
$\{H\in\mathbb H:\operatorname{diag}(H)=0\}$; for the cluster and sparse
geometries this simply removes the diagonal entries from the coordinate
subspace $T_d(\Sigma_d)$, and for the factor geometry it removes the
$\operatorname{diag}(d)$ term entirely, since
$\Lambda_0H'+H\Lambda_0'+\operatorname{diag}(v)$ has zero off-diagonal
contribution from $\operatorname{diag}(v)$.

\subsection{Principal-Angle Condition}

\begin{remark}[Why the Off-Diagonal Restriction Is Necessary]
\label{rem:offdiag_necessary}
Assumption~\ref{ass:principalangle} is stated in terms of the
off-diagonal tangent spaces $T_d^{\mathrm{off}}$ introduced in
Section~\ref{subsec:regular_tangent}, rather than the full tangent spaces
$T_d$. This restriction is not a matter of convenience: as shown there,
$T_C(\Sigma_C)\cap T_F(\Sigma_F)$ always contains the entire diagonal
subspace $\{\operatorname{diag}(v):v\in\mathbb R^N\}$, for every pair of
regular cluster and factor points and every parameter configuration, so
$\theta(T_C(\Sigma_C),T_F(\Sigma_F))=0$ identically and
Assumption~\ref{ass:principalangle} stated with the unrestricted tangent
spaces could never hold for the cluster--factor pair. This reflects the
fact that own-variance perturbations are common to every covariance
geometry considered in this paper and carry no information distinguishing
one dependence geometry from another. Restricting to the off-diagonal
directions $T_d^{\mathrm{off}}$ isolates exactly the perturbations that
are informative about \emph{dependence}, which is the relevant object of
identification.
\end{remark}

The next lemma isolates the role of within-class regularity: at a regular
point of a given geometry, the projection onto \emph{that} geometry alone
is locally single-valued. This conclusion uses only the regularity
conditions of Section~\ref{subsec:regular_tangent} and does not yet involve
the relationship between different geometries.

\begin{lemma}[Local Uniqueness of Individual Projections]
\label{lem:individual_uniqueness}
Suppose $\Sigma_C$ and $\Sigma_S$ are regular projected points in the
sense of Section~\ref{subsec:regular_tangent}, Assumption~\ref{ass:sparse_unique}
holds, and the factor projection $P_F$ is locally single-valued at the regular
factor point $\Sigma_F$.  Then the fixed-support cluster and sparse
projections are locally single-valued, and $P_F$ is locally single-valued by
assumption.  Local Lipschitz continuity is imposed in
Assumption~\ref{ass:local_projection_regularity}.
\end{lemma}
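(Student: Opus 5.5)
The plan is to treat the three geometries separately. The factor case is immediate: local single-valuedness of $P_F$ at $\Sigma_F$ is a hypothesis, so nothing is proved there. The substantive work is for the cluster and sparse geometries. Each reduces to projection onto a fixed-support PSD cone, which is closed and convex; its metric projection is then globally single-valued, and the sparse case adds only a support-stability step.

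\emph{Cluster geometry.} By the definition in Section~\ref{subsec:cov_classes}, $\mathcal S_C=\mathbb H_+\cap\{\Gamma:\Gamma_{ij}=0 \text{ whenever } (M_C)_{ij}=0\}$. This is the intersection of a closed convex cone with a linear subspace, so it is a nonempty closed convex set in the Hilbert space $(\mathbb H,\langle\cdot,\cdot\rangle_F)$. The Hilbert projection theorem gives existence (consistent with Lemma~\ref{lem:projectionexistence}) and uniqueness of the minimizer for every $\Gamma$. The standard inequality $\|P(\Gamma)-P(\Gamma')\|_F\le\|\Gamma-\Gamma'\|_F$ then gives single-valuedness, and even nonexpansiveness, on all of $\mathbb H$, so in particular near $\Gamma_0$. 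Regularity of $\Sigma_C$ (relative interiority) is not needed for single-valuedness. I will remark that it is used only later, to identify the tangent space $T_C(\Sigma_C)$ with the support subspace.

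\emph{Sparse geometry.} Write $P_S(\Gamma)$ as the minimizer of $\operatorname{dist}^2(\Gamma,\mathcal C_I)$ over $I\in\mathfrak I_S$, followed by the convex projection $\Pi_I$ onto $\mathcal C_I$. Each $\mathcal C_I$ is closed and convex, so every $\Pi_I$ is single-valued and $1$-Lipschitz. Hence each map $g_I(\Gamma)=\operatorname{dist}^2(\Gamma,\mathcal C_I)$ is continuous, indeed locally Lipschitz, since $|\operatorname{dist}(\Gamma,\mathcal C_I)-\operatorname{dist}(\Gamma',\mathcal C_I)|\le\|\Gamma-\Gamma'\|_F$. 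Assumption~\ref{ass:sparse_unique} gives $g_I(\Gamma_0)\ge g_{I_0}(\Gamma_0)+\varepsilon_S$ for all $I\ne I_0$. Because $\mathfrak I_S$ is finite, there is a neighborhood $U$ of $\Gamma_0$ on which each $g_I$ moves by less than $\varepsilon_S/3$. On $U$, therefore, $g_I(\Gamma)>g_{I_0}(\Gamma)$ for every $I\ne I_0$, so $I_0$ is the unique minimizing support.

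It remains to handle nested supports. If $I\subseteq I_0$, the minimizer over $\mathcal C_{I_0}$ is still a single point; it may happen to lie in $\mathcal C_I$, but that point is the same. Since $\mathcal S_S=\bigcup_I\mathcal C_I$ and every competing $\mathcal C_I$ is strictly farther away on $U$, any metric projection onto $\mathcal S_S$ must lie in $\mathcal C_{I_0}$ and equal $\Pi_{I_0}(\Gamma)$. Thus $P_S=\Pi_{I_0}$ on $U$, which is single-valued (and in fact Lipschitz).

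\textbf{Main obstacle.} The only delicate point is the support-stability argument: making sure that a near-optimal competing support cannot also attain the minimum. This is exactly what the uniform gap $\varepsilon_S$, together with the finiteness of $\mathfrak I_S$, rules out. The proof then closes by observing that local Lipschitz continuity is not claimed in the lemma but imposed through Assumption~\ref{ass:local_projection_regularity}, although the argument above actually delivers it for the cluster and sparse geometries.
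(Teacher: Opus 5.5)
Your proof is correct, but for the cluster and sparse geometries it takes a different route from the paper's. The paper uses a ``support-pattern stability'' argument. It asserts that at a regular point every entry of $\Sigma_C$ on the cluster support is nonzero, that this pattern persists in a small ball, and that $P_C(\Sigma)=M_C\odot\Sigma$ is the orthogonal projection onto the linear support subspace $\mathcal V_{\mathcal I_C}$, hence unique. The sparse case is declared to follow verbatim, with Assumption~\ref{ass:sparse_unique} said to make the support stable. The factor case matches yours: after an eigengap/Eckart--Young discussion, the paper explicitly declines to derive anything and imposes local uniqueness.

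You instead use two facts. First, each fixed-support PSD cone is closed and convex, so its metric projection is globally single-valued and nonexpansive. Second, the $1$-Lipschitz maps $\operatorname{dist}(\cdot,\mathcal C_I)$, the finiteness of $\mathfrak I_S$, and the uniform gap $\varepsilon_S$ together keep $I_0$ the unique best support near $\Gamma_0$, so $P_S=\Pi_{I_0}$ there.

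Your route has several advantages:
\begin{itemize}
\item It respects the PSD constraint, so it covers overlapping multiway supports. There, as Section~\ref{subsec:cov_classes} itself notes, masking is not the metric projection. The paper's masking formula is valid only when the masked matrix is already PSD.
\item It does not rely on the nonzero-entry description of regularity used in the paper's proof. That description differs from the relative-interior definition in Section~\ref{subsec:regular_tangent}; for example, the identity is a relative-interior point with zero off-diagonal entries.
\item It actually derives support stability from the objective-gap condition rather than asserting it.
\item It yields local Lipschitz continuity as a by-product, matching the reasoning the paper later uses in Lemma~\ref{lem:projection_lipschitz}.
\end{itemize}

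One small point: your nested-support caveat is vacuous. On $U$ you have $g_I(\Gamma)>g_{I_0}(\Gamma)$ strictly for $I\subsetneq I_0$, which rules out $\Pi_{I_0}(\Gamma)\in\mathcal C_I$. The remark does no harm to the argument.
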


Local uniqueness of each projection is necessary for the dependence profile
to be well defined near $\Sigma_0$, but it is not sufficient for the
profile to carry separate information about cluster, factor, and sparse
geometry: if two tangent spaces shared a common direction, a perturbation
along that direction would move \emph{both} projections in a coupled way,
and the resulting change in the profile could not be uniquely attributed to
either geometry. The principal-angle condition rules this out by ensuring
that the tangent spaces of distinct geometries intersect only at the
origin.

\paragraph{Diagonal perturbations are never separately identified}
\label{rem:diagonal_indistinguishability}
By Section~\ref{subsec:regular_tangent}, every diagonal perturbation
$\operatorname{diag}(d)$ lies in $T_C(\Sigma_C)\cap T_F(\Sigma_F)$ (and,
since the sparse projection of Section~\ref{subsec:computation} always
retains the diagonal, in $T_S(\Sigma_S)$ as well), regardless of the
principal-angle condition. Consequently, a perturbation of $\Gamma_0$
confined to its own diagonal---that is, a change in observation-specific
variances with no change in any cross-covariance---cannot be attributed
to cluster, factor, or sparse dependence: it is consistent with all three
geometries simultaneously and with none of them specifically. This is not
a defect of Assumption~\ref{ass:principalangle} or of
Theorem~\ref{thm:identification}; it reflects the fact that the dependence
profile is, by design, a summary of off-diagonal dependence structure.
Variance heterogeneity across observations is a separate object, already
well studied under the heading of heteroskedasticity, and is intentionally
outside the scope of dependence learning as formulated here.

\subsection{Ambiguous Dependence Structures}
\label{subsec:ambiguous_dependence}
The principal-angle condition ensures local separation between
covariance geometries along off-diagonal directions. When this separation
breaks down and off-diagonal tangent spaces overlap, ambiguity may arise
as a fundamental feature of the problem rather than a consequence of
limited sample size. The following result formalizes this local
indistinguishability phenomenon.

\paragraph{Identification versus Ambiguity}
Theorems~\ref{thm:identification} and
\ref{thm:fundamental_limit}
characterize the boundary between identifiable and ambiguous
dependence structures. Positive principal angles ensure local
separation of covariance geometries and hence local
identifiability of the dependence profile. Conversely,
overlapping tangent spaces generate local indistinguishability,
so that ambiguity reflects a fundamental lack of identifying
information rather than finite-sample uncertainty. Thus,
identification and indistinguishability represent two sides of
the same geometric phenomenon.

\subsection{Examples of Geometric Overlap}
\label{subsec:overlap_examples}
 
Three further situations produce overlap between covariance geometries
along off-diagonal directions, in addition to the unconditional diagonal
overlap of Remark~\ref{rem:diagonal_indistinguishability}.
 
\textit{Cluster--sparse overlap.}
A one-way cluster covariance matrix with $G$ balanced groups of size $N/G$
has a block-support of size $G\cdot(N/G)^2$.
When $G\cdot(N/G)^2=O(k_S)$, the cluster support is no larger than the
sparse budget $k_S$, so the cluster geometry lies inside the sparse class.
In such cases $T_C(\Sigma_C)\subseteq T_S(\Sigma_C)$ and the principal
angle $\theta(T_C,T_S)$ can be arbitrarily small.
 
\textit{Two-way cluster--sparse overlap.}
Consider observations indexed by firm $i=1,\ldots,N$ and period
$t=1,\ldots,T$, with two-way cluster dependence
$\operatorname{Cov}(u_{it},u_{js})\neq 0$ whenever $i=j$ or $t=s$.
The non-zero support has two parts: $NT^2$ within-firm entries (same firm,
any two periods) and $TN^2-NT$ cross-period entries (different firms,
same period).
For the baseline parameters $N=25$, $T=10$, these are 2,500 and 6,000
entries respectively, out of $N^2=62{,}500$ total.

A one-way industry cluster projection captures the 2,500 within-firm
entries but misses the 6,000 cross-period entries entirely, leaving a
non-trivial projection residual.
The feasible sparse algorithm selects the $k_S$ largest off-diagonal entries
in absolute value and then imposes positive semidefiniteness on the selected
support.  This candidate support can include many cross-period entries that
the one-way cluster projection cannot reach.
As a result, the cluster and sparse scores are simultaneously large under
two-way clustering: the cluster score reflects within-industry
co-movement, while the sparse score captures the additional
cross-period co-movement that falls outside the one-way cluster support.
This explains why the dependence profile does not collapse to a single
dominant geometry under two-way clustering, and illustrates the
usefulness of reporting the full profile rather than a binary
classification.
 
\textit{Cluster--factor overlap.}
Two distinct mechanisms can drive overlap between the cluster and factor
geometries. The first is unconditional: as established in
Section~\ref{subsec:regular_tangent} and
Remark~\ref{rem:diagonal_indistinguishability}, $T_C(\Sigma_C)$ and
$T_F(\Sigma_F)$ always share the diagonal subspace, so
$\theta(T_C,T_F)=0$ exactly whenever the principal angle is computed
without the off-diagonal restriction; this is why
Assumption~\ref{ass:principalangle} is stated for $T_C^{\mathrm{off}}$ and
$T_F^{\mathrm{off}}$ rather than $T_C$ and $T_F$. The second mechanism is
parameter-dependent and genuinely off-diagonal: a block-diagonal
covariance matrix with a small number of large, homogeneous blocks has a
leading eigenvalue of order $N/G$ and is also well approximated by a
rank-one factor structure, so the off-diagonal parts of $T_C$ and $T_F$
may themselves nearly align for such designs. Both mechanisms produce
overlap that is intrinsic to the problem rather than a failure of the
proposed method, but only the second is sensitive to parameter choices
such as the number and size of clusters; the first holds for every
cluster and factor geometry considered in this paper.


\setcounter{equation}{0} \setcounter{assumption}{0}
\setcounter{figure}{0} \setcounter{table}{0} \setcounter{remark}{0}

 \renewcommand{%
\theequation}{B.\arabic{equation}}
\renewcommand{\thelemma}{B.\arabic{lemma}}
\renewcommand{\theassumption}{B.\arabic{assumption}} 
\renewcommand{\thetheorem}{B.\arabic{theorem}}

\renewcommand{\thetable}{B.\arabic{table}}
\renewcommand{\thefigure}{B.\arabic{figure}}
\renewcommand{\theremark}{B.\arabic{remark}}

\renewcommand{\thesubsection}{B.\arabic{subsection}}
\section{Additional Results for Dependence Operator}
\label{sec:dependence_operator_app}

In this computational appendix only, $n$ denotes a generic matrix
dimension. In the main theoretical framework this dimension is $N$,
which is fixed while $T\to\infty$.

\subsection{Computational Implementation}
\label{subsec:computation}

Given \(\widehat\Gamma_T\), define the empirical projection onto geometry
\(d\in\mathfrak D\) by

\[
\widehat P_d
=
P_d(\widehat\Gamma_T).
\]

This subsection describes practical implementations of the cluster,
factor, and sparse projections.

\paragraph{Cluster Projection.}
 
Using the cluster-support matrix $M_C$ defined in
Section~\ref{sec:geometry}, the cluster projection is
\[
\widehat P_C = M_C\odot\widehat\Gamma_T,
\]
where $\odot$ denotes the Hadamard product.
This removes covariance entries incompatible with the cluster structure
and preserves entries between observations sharing at least one cluster
membership.
When cluster memberships are unknown, they may be estimated via spectral
clustering, community-detection methods, or latent-group estimators before
constructing $M_C$.

\paragraph{Factor Projection.}

The factor class \(\mathcal{S}_F(r)=\{\Gamma\succeq 0:\Gamma=L+D,\ L\succeq0,\,
\operatorname{rank}(L)\leq r,\,D\succeq0\text{ diagonal}\}\) requires separating
\(\widehat\Gamma_T\) into a low-rank component \(L\) and a diagonal
component \(D\).  We compute this projection via the following
alternating-projection algorithm, which is the analogue of the
principal-factor (minres) algorithm in classical factor analysis.

\begin{enumerate}
\item \textbf{Initialise.} Set \(D^{(0)}=\operatorname{diag}(\widehat\Gamma_T)\).
\item \textbf{Low-rank step.}  For \(k=0,1,2,\ldots\), compute the
  rank-\(r\) positive-semidefinite truncation of
  \(\widehat\Gamma_T - D^{(k)}\):
  \[
  L^{(k+1)}
  =
  \widehat U_r^{(k)}\,
  \max\!\bigl(\widehat\Lambda_r^{(k)},0\bigr)\,
  \bigl(\widehat U_r^{(k)}\bigr)',
  \]
  where \(\widehat U_r^{(k)}\) and \(\widehat\Lambda_r^{(k)}\) contain
  the \(r\) leading eigenvectors and eigenvalues of
  \(\widehat\Gamma_T-D^{(k)}\).

\item \textbf{Diagonal step.}  Set
  \[
  d^{(k+1)}_i
  =
  (\widehat\Gamma_T)_{ii} - L^{(k+1)}_{ii},
  \qquad i=1,\ldots,N,
  \]
  so that $D^{(k+1)}=\operatorname{diag}(d^{(k+1)})$.
  If $\lambda_{\min}(L^{(k+1)}+D^{(k+1)})<0$, replace
  \[
  d^{(k+1)}_i
  \leftarrow
  d^{(k+1)}_i
  +
  \bigl(-\lambda_{\min}(L^{(k+1)}+D^{(k+1)})+\varepsilon\bigr),
  \qquad i=1,\ldots,N,
  \]
  for a small tolerance $\varepsilon>0$.
  This shift is applied only when the matrix is not positive
  semidefinite; it guarantees $L^{(k+1)}+D^{(k+1)}\succeq 0$ and leaves
  the objective unchanged when no shift is needed.

\item \textbf{Converge.}  Stop when
  \(\max_i|D^{(k+1)}_{ii}-D^{(k)}_{ii}|<\delta\) for a tolerance
  \(\delta>0\).
\end{enumerate}

This appendix is implementation-oriented: the asymptotic theory of the
main text concerns the population projection, the algorithm below is one
numerical implementation, and no claim of global optimality is made.
The alternating low-rank and diagonal updates are monotone before the
PSD correction: each such update weakly decreases the Frobenius objective
\(\|\widehat\Gamma_T - L - D\|_F^2\). The PSD adjustment is a numerical
safeguard and may locally increase the objective. The algorithm
converges in practice in fewer than 20 iterations for the covariance
matrices arising in the simulation designs. The output is
\[
\widehat P_F = L^* + D^*,
\]
where \((L^*,D^*)\) denote the values at convergence.

The factor rank \(r\) may be selected using information criteria,
eigenvalue-ratio methods, scree-plot procedures, or other standard
techniques from the factor-model literature.

\paragraph{Population projection versus computational algorithm.}
Throughout the paper, $P_F$ denotes the population projection operator
onto $\mathcal S_F$, defined as the Frobenius-norm minimizer over the
(nonconvex) class $\mathcal S_F(r)$. The alternating-projection algorithm
above is a computational heuristic for approximating this minimizer; each
step is monotone in the Frobenius objective, so the algorithm converges to
a stationary point, but because $\mathcal S_F(r)$ is nonconvex this
stationary point need not be the global minimizer $P_F(\widehat\Gamma_T)$
except when $\widehat\Gamma_T$ is sufficiently close to a regular factor
point and the algorithm is initialized appropriately (e.g., as in
step~1). This mirrors the local-uniqueness result of
Lemma~\ref{lem:individual_uniqueness}: local single-valuedness of the factor
projection is established at regular points, while global identification
of $P_F$ away from such points is not claimed and is not required for the
asymptotic theory of Sections~\ref{sec:identification}--\ref{sec:asymptotics},
which is stated entirely in terms of local properties of $\Gamma_0$.
In the simulation and empirical exercises of
Sections~\ref{sec:simulation}--\ref{sec:empirical}, the diagonal
initialization in step~1 combined with a clear eigenvalue gap (verified
informally via the eigenvalue-ratio criterion of
Section~\ref{subsec:computation}) makes convergence to the population
projection plausible, but this is a numerical observation rather than a
proven global guarantee.

\paragraph{Sparse Projection.}

The sparse covariance projection is implemented in two stages.  First,
a candidate support is selected by retaining the $k_S$ largest
off-diagonal entries of $\widehat\Gamma_T$ in absolute value.  Second,
conditional on that support $I$, compute
\[
\widehat P_{S,I}
=
\arg\min_{\Gamma\succeq0,
\;\Gamma_{ij}=0\;\text{for }(i,j)\notin I}
\|\widehat\Gamma_T-\Gamma\|_F^2.
\]
For fixed $I$ this is a convex projection problem.  The replication code
uses Dykstra's alternating-projection algorithm between the PSD cone and
the support subspace.  The support-selection stage is a deterministic
approximation to the exact nonconvex metric projection over all supports
of cardinality at most $k_S$; the population theory is stated for the
exact projection.  Hard thresholding alone is not used as the final
projection because it need not preserve positive semidefiniteness.

\paragraph{Computational Complexity.}

Among the three projections, the factor projection is typically the most
computationally demanding. Each iteration of the alternating-projection
algorithm requires one eigendecomposition of an \(N\times N\) matrix,
costing \(O(N^3)\) operations. Because the algorithm converges rapidly
(typically fewer than 20 iterations), the total cost is \(O(KN^3)\) where
\(K\) denotes the number of iterations. Randomized and truncated
eigendecomposition algorithms can substantially reduce this cost in
large-scale applications.

The cluster projection requires construction of the cluster-support
matrix \(M_C\). Given cluster memberships, this operation is typically
of order \(O(MN^2)\), where \(M\) denotes the number of clustering
dimensions. The subsequent Hadamard projection is of order \(O(N^2)\).

Sparse projections obtained by thresholding are also of order
\(O(N^2)\), while optimization-based sparse projections may require
additional iterative computations depending on the algorithm employed.

\paragraph{Unknown geometric specifications.}
The discussion above assumes that the cluster partition, factor rank, and
sparsity level are specified. In practice, these quantities may themselves
be estimated from the data. The statistical analysis in this paper treats
these geometric specifications as given and focuses on learning the
relative similarity of the empirical dependence operator to the resulting
covariance geometries. Joint estimation of dependence geometries and
dependence profiles is an important direction for future research.

\paragraph{Computational complexity.}
For fixed $N$, construction of $\widehat\Gamma_T$ requires $O(TN^2)$
operations. Cluster projection with a one-way partition is a block-masking
operation followed, when necessary, by a PSD refinement; sparse projection
combines support selection with a fixed-support PSD projection; and factor
projection is dominated by an eigendecomposition or low-rank alternating update.
A dense eigendecomposition costs $O(N^3)$, so the benchmark implementation is
inexpensive in the fixed-$N$ regime studied here. Profile normalization and the
computation of margins, residuals, and $\widehat\kappa$ are negligible relative
to the projection step. The replication package uses one master script, fixed
seeds, warm starts for the factor projection, and documented PSD-support
refinement routines. For large $N$, randomized low-rank methods, sparse
eigensolvers, and parallel projection across dictionary elements provide natural
scalable extensions.

\paragraph{Profile-weighted inference.}
The profile-guided estimator selects a single dominant geometry. An alternative
combines multiple procedures using the estimated dependence profile itself,
\[
\widehat V^{\mathrm{avg}}
=
\sum_{d\in\mathfrak D}
\widehat\omega_d^{\mathrm{off}}
\widehat V_d ,
\]
a dependence-weighted combination of geometry-specific variance estimators
analogous to forecast combination and model averaging
\citep{BatesGranger1969,Hansen2007}. Developing efficiency theory and optimal
weighting schemes for profile-weighted inference is an important direction for
future research. Its simulation behavior is reported in
Section~\ref{subsec:procedure_rec_sim}.
\label{subsec:profile_weighted}

\subsection{Additional Choice of Dependence Operator}
\label{subsec:operatorchoice_app}

Different empirical operators emphasize different
aspects of dependence and may therefore produce different dependence
profiles. Consequently, the estimated dependence profile should be
interpreted relative to the dependence operator from which it is
constructed.

\paragraph{Long-Run Covariance Operators.}

For time-series or panel applications with serial dependence, a more
appropriate choice is often a long-run covariance operator,

\[
\widehat\Gamma_T
=
\sum_{|h|\leq L}
K\!\left(
\frac{h}{L}
\right)
\widehat\Gamma(h),
\]

where \(\widehat\Gamma(h)\) denotes a sample autocovariance operator,
\(K(\cdot)\) is a kernel function, and \(L\) is a bandwidth parameter.
Such operators emphasize persistent temporal dependence and form the
basis of heteroskedasticity and autocorrelation consistent estimation
\citep{NeweyWest1987,Andrews1991}.

\paragraph{Spatial Covariance Operators.}

When observations possess geographical locations, distance-weighted
covariance operators often provide a more informative description of
dependence. A generic spatial covariance operator is

\[
\widehat\Gamma_T
=
\left[
K\!\left(
\frac{d_{ij}}{b_N}
\right)
\widehat u_i\widehat u_j
\right]_{i,j=1}^{N},
\]

where \(d_{ij}\) denotes the geographical distance between observations
\(i\) and \(j\), \(K(\cdot)\) is a spatial kernel, and \(b_N\) is a
distance bandwidth. Such operators arise naturally in the spatial HAC
literature and assign larger weights to nearby observations than to
distant observations \citep{Conley1999,Conley2008}. When dependence is
generated by local geographical interactions, spatial covariance
operators may reveal dependence patterns that are not apparent from the
unweighted covariance operator alone.

\paragraph{Network Dependence Operators.}

For network data, dependence may be encoded through an adjacency matrix
\(A=(A_{ij})\). A natural network dependence operator is

\[
\widehat\Gamma_T
=
A
\odot
(
\widehat u\widehat u'
),
\]

where \(\odot\) denotes the Hadamard product. This operator concentrates
attention on dependence among economically connected units and is useful
when interactions are generated by production networks, social networks,
financial linkages, or other forms of economic connectivity
\citep{Auerbach2019,Leung2022}.

\paragraph{Cluster-Based Covariance Operators.}

When cluster memberships are known, one may construct a
cluster-supported covariance operator of the form

\[
\widehat{\Gamma}_T^{(C)}
=
M_C
\odot
\widehat{\Gamma}_T,
\]

where \(\odot\) denotes the Hadamard product and
\(M_C\) is the cluster-support matrix.

Using the cluster-support matrix $M_C$ defined in
Section~\ref{sec:geometry}, the cluster-based operator retains covariance
entries consistent with the assumed clustering structure while setting
cross-cluster entries to zero.
This construction accommodates one-way, two-way, and multiway clustering
as special cases. It is particularly useful when
the objective is to isolate the structured-support pattern generated by
cluster dependence and distinguish it from alternative dependence
geometries such as factor or sparse dependence.

\paragraph{Discussion.}

The purpose of dependence learning is not to recover a single universal
dependence profile. Rather, the profile summarizes the geometry of
dependence encoded in a chosen population dependence operator. The choice
of operator should therefore be guided by economic theory, institutional
knowledge, sampling design, and the particular inferential objective under
consideration.

In this sense, dependence learning should be viewed as a second-stage
dimension-reduction problem applied to an estimated dependence operator.


\setcounter{equation}{0} \setcounter{assumption}{0}
\setcounter{figure}{0} \setcounter{table}{0} \setcounter{remark}{0}

 \renewcommand{%
\theequation}{C.\arabic{equation}}
\renewcommand{\thelemma}{C.\arabic{lemma}}
\renewcommand{\theassumption}{C.\arabic{assumption}} \renewcommand{%
\thetheorem}{C.\arabic{theorem}}

\renewcommand{\thetable}{C.\arabic{table}}
\renewcommand{\thefigure}{C.\arabic{figure}}

\renewcommand{\thesubsection}{C.\arabic{subsection}}
\renewcommand{\theremark}{C.\arabic{remark}}

\section{Proofs of the Main Results}
\label{app:proofs_of_the_main_results}

\subsection{Proof of Proposition~\ref{prop:consistency_operator}
(Consistency of the Empirical Dependence Operator)}
\label{app:proof_prop_consistency}

\begin{proof}[Proof of Proposition~\ref{prop:consistency_operator}]
Write
\[
\widehat\Gamma_T - \Gamma_0
=
\underbrace{\frac{1}{T}\sum_{t=1}^T u_t u_t' - \Sigma}_{=:\,A_T}
+
\underbrace{\frac{1}{T}\sum_{t=1}^T (\widehat u_t u_t' - u_t u_t')}_{=:\,B_T}
+
\underbrace{\frac{1}{T}\sum_{t=1}^T (\widehat u_t\widehat u_t' - \widehat u_t u_t')}_{=:\,C_T}.
\]

\emph{Term $A_T$.}
Under Assumption~\ref{ass:mixing}, $\{u_tu_t'\}$ is strictly stationary
and ergodic with $E\|u_tu_t'\|_F \le E\|u_t\|^2 < \infty$. Standard
CLTs for weakly dependent vector processes (e.g.,
\citealt[Theorem~27.4]{Davidson1994}) give
$T^{1/2}\operatorname{vec}(A_T)\Rightarrow N(0,\Omega_\Gamma)$, where
$\Omega_\Gamma$ is the long-run variance of $\{\operatorname{vec}(u_tu_t'-\Sigma)\}$. Hence $\|A_T\|_F=O_p(T^{-1/2})$.

\emph{Terms $B_T$ and $C_T$ (residual-estimation error).}
Let $X_t\in\mathbb{R}^{N\times p}$ denote the matrix of regressors at
time $t$, with rows $x_{it}'$, so that
$\widehat u_t - u_t = -X_t(\widehat\beta-\beta)$. Substituting this
identity and expanding,
\[
B_T
=
-\frac1T\sum_{t=1}^T X_t(\widehat\beta-\beta)\,u_t',
\qquad
C_T
=
\frac1T\sum_{t=1}^T
\bigl[X_t(\widehat\beta-\beta)\bigr]
\bigl[X_t(\widehat\beta-\beta)\bigr]'
+
B_T',
\]
where the second display collects the remaining cross terms. For $B_T$, vectorization gives
\[
\operatorname{vec}(B_T)
=
-
\left[
\frac1T\sum_{t=1}^T
(u_t\otimes X_t)
\right]
(\widehat\beta-\beta).
\]
Strict exogeneity implies $E(u_t\otimes X_t)=0$. Under the moment and
mixing conditions in Assumption~\ref{ass:mixing}, the sample cross moment
is $O_p(T^{-1/2})$, while
$\widehat\beta-\beta=O_p(T^{-1/2})$. Consequently,
\[
\|B_T\|_F=O_p(T^{-1}).
\]
This sharper bound is important for the subsequent influence-function
calculation: residual replacement is negligible at the
$T^{-1/2}$ scale. For the quadratic term,
\[
\Bigl\|\frac1T\sum_{t=1}^T
\bigl[X_t(\widehat\beta-\beta)\bigr]
\bigl[X_t(\widehat\beta-\beta)\bigr]'\Bigr\|_F
\le
\|\widehat\beta-\beta\|^2
\cdot
\frac1T\sum_{t=1}^T\|X_t\|_F^2
=
O_p(T^{-1})\cdot O_p(1)
=
O_p(T^{-1}),
\]
which is of strictly smaller order. Hence the residual-estimation error is $O_p(T^{-1})$ in total and is
negligible relative to the leading sampling term $A_T$ at the
$T^{-1/2}$ scale.

Combining, $\|\widehat\Gamma_T - \Gamma_0\|_F = O_p(T^{-1/2})$.
\end{proof}

\subsection{Proofs of Results in Section~\ref{sec:identification}}
\label{app:proof_sections_3}
This subsection contains the proofs of Lemmas \ref{lem:projectionexistence}, \ref{lem:projectioncontinuity}, and \ref{lem:angle_transversality}, together with the proofs of Theorems \ref{thm:identification} and \ref{thm:fundamental_limit}.

\begin{proof}[Proof of Lemma \ref{lem:projectionexistence}]
Let \(f(\Gamma)=\|\Sigma-\Gamma\|_F\). Every minimizing sequence
\(\{\Gamma_m\}\subset\mathcal S_D\) is bounded: since \(0\in\mathcal S_D\)
(each class contains the zero matrix), eventually
\(f(\Gamma_m)\le f(0)+1=\|\Sigma\|_F+1\), so
\(\|\Gamma_m\|_F\le\|\Sigma\|_F+f(\Gamma_m)\le 2\|\Sigma\|_F+1\) by the
triangle inequality. A bounded sequence in the finite-dimensional space
\(\mathbb H\) has a convergent subsequence; since \(\mathcal S_D\) is
closed, the subsequential limit lies in \(\mathcal S_D\), and by
continuity of \(f\) it attains the infimum.
\end{proof}

\begin{proof}[Proof of Lemma \ref{lem:projectioncontinuity}]
We give a direct sequential argument; the maximum theorem is not
applicable here because the geometries $\mathcal S_D$ are unbounded, so
the feasible correspondence is not compact-valued.

Let $\Sigma_m\to\Sigma$ and write $p_m\in P_D(\Sigma_m)$ for any
selection. \emph{(Boundedness.)} Since $0\in\mathcal S_D$,
$\|\Sigma_m-p_m\|_F\le\|\Sigma_m\|_F$, and hence by the triangle
inequality $\|p_m\|_F\le 2\|\Sigma_m\|_F$, which is bounded because
$\Sigma_m\to\Sigma$. \emph{(Subsequential limits are projections.)}
By boundedness, every subsequence of $\{p_m\}$ has a further subsequence
converging to some $p\in\mathbb H$, and $p\in\mathcal S_D$ because
$\mathcal S_D$ is closed. For any $q\in\mathcal S_D$ we have
$\|\Sigma_m-p_m\|_F\le\|\Sigma_m-q\|_F$ along that subsequence; letting
$m\to\infty$ and using joint continuity of the norm gives
$\|\Sigma-p\|_F\le\|\Sigma-q\|_F$, so $p$ is a metric projection of
$\Sigma$ onto $\mathcal S_D$. \emph{(Identification of the limit.)}
Under local uniqueness at $\Sigma$, the metric projection is the single
point $P_D(\Sigma)$, so every convergent subsubsequence has the same
limit $P_D(\Sigma)$. A bounded sequence all of whose subsequential limits
coincide converges, so $p_m\to P_D(\Sigma)$. Hence $P_D$ is
single-valued and continuous in a neighborhood of $\Sigma$.
\end{proof}

\begin{proof}[Proof of Lemma~\ref{lem:angle_transversality}]
($\Leftarrow$) Suppose $U\cap V=\{0\}$. The function
$(u,v)\mapsto \langle u,v\rangle_F/(\|u\|_F\|v\|_F)$ is continuous on the
set $\{(u,v)\in U\times V:\|u\|_F=\|v\|_F=1\}$, which is the intersection
of the unit spheres of $U$ and $V$ and is compact (closed and bounded in
finite-dimensional $\mathbb H$). If $\theta(U,V)=0$, there exist sequences
$u_m\in U$, $v_m\in V$ with $\|u_m\|_F=\|v_m\|_F=1$ and
$\langle u_m,v_m\rangle_F\to 1$. By compactness, passing to a subsequence,
$u_m\to u^*\in U$ and $v_m\to v^*\in V$ with $\|u^*\|_F=\|v^*\|_F=1$ and
$\langle u^*,v^*\rangle_F=1$. By the Cauchy--Schwarz equality case,
$u^*=v^*$, so $u^*\in U\cap V$ with $u^*\neq 0$, contradicting
$U\cap V=\{0\}$. Hence $\theta(U,V)>0$.

($\Rightarrow$) Suppose $U\cap V\neq\{0\}$, and let $w\in U\cap V$ with
$w\neq 0$. Taking $u=v=w/\|w\|_F$ gives $\langle u,v\rangle_F=1$, so
$\theta(U,V)=\arccos(1)=0$.
\end{proof}

\begin{proof}[Proof of Theorem~\ref{thm:identification}]

\textit{Local uniqueness.}
By Lemma~\ref{lem:individual_uniqueness}, $P_C$, $P_F$, and $P_S$ are each
locally single-valued at the respective regular points $\Sigma_C,\Sigma_F,
\Sigma_S$. By Lemma~\ref{lem:projection_lipschitz}, each $P_D$ is locally
Lipschitz, hence continuous. Since
\[
\omega_D(\Sigma)
=
\frac{\|P_D(\Sigma)\|_F^2}{\sum_{D'\in\mathfrak D}\|P_{D'}(\Sigma)\|_F^2}
\]
is a continuous function of the projections whenever the denominator
$S_{T,0}>0$, the dependence profile is single-valued and continuous in a
neighborhood of $\Gamma_0$, which is part~(i). No injectivity of
$\Sigma\mapsto\omega(\Sigma)$ is asserted or used; see
Remark~\ref{rem:identification_scope}.

\textit{Separated identification.}
It remains to show the stated separation property: no nonzero
off-diagonal direction lies in more than one off-diagonal tangent space
simultaneously. By Assumption~\ref{ass:principalangle},
$\theta(T_i^{\mathrm{off}}(\Sigma_i),T_j^{\mathrm{off}}(\Sigma_j))
\ge\theta_0>0$ for all $i\neq j$ in $\{C,F,S\}$. By
Lemma~\ref{lem:angle_transversality}, this is equivalent to
\[
T_i^{\mathrm{off}}(\Sigma_i)\cap T_j^{\mathrm{off}}(\Sigma_j)=\{0\},
\qquad i\neq j.
\]
Hence for any nonzero off-diagonal $H\in\mathbb H$ (i.e.,
$\operatorname{diag}(H)=0$), $H$ cannot belong to two of
$T_C^{\mathrm{off}}(\Sigma_C)$, $T_F^{\mathrm{off}}(\Sigma_F)$,
$T_S^{\mathrm{off}}(\Sigma_S)$ at once. Equivalently, a first-order
off-diagonal perturbation $\Gamma_0+tH+o(t)$ that is tangent to geometry
$i$ cannot simultaneously be tangent to geometry $j\neq i$ unless $H=0$.
This is part~(ii), and the proof stops here.

We emphasize what is \emph{not} concluded. Trivial intersection does not
make the tangent spaces orthogonal, so a direction $H\in
T_i^{\mathrm{off}}(\Sigma_i)$ may have nonzero orthogonal projection onto
$T_j^{\mathrm{off}}(\Sigma_j)$ and may therefore move $S_j$ as well as
$S_i$ to first order; and with three geometries, pairwise trivial
intersections do not imply that the three tangent spaces are
independent. Unique first-order attribution of a profile change to a
single geometry would require a direct-sum or Jacobian-rank condition,
which we do not impose (Remark~\ref{rem:identification_scope}). By
Remark~\ref{rem:diagonal_indistinguishability}, no separation of any kind
holds, or is claimed, for diagonal (variance) perturbations.
\end{proof}

\begin{proof}[Proof of Theorem~\ref{thm:fundamental_limit}]
 
\textit{Step 1: LAN at $\Gamma_0$.}
Under Assumption~\ref{ass:lan}, for any bounded sequence $H_T\to H$,
\begin{equation}
\label{eq:lan}
\log\frac{dP_{\Gamma_0+T^{-1/2}H_T}^T}{dP_{\Gamma_0}^T}
=
\Delta_T(H)-\tfrac12\|H\|_{\mathcal I}^2+o_p(1),
\end{equation}
where $\Delta_T(H)=T^{-1/2}\sum_{t=1}^T\ell_t(H)\Rightarrow N(0,\|H\|_{\mathcal I}^2)$
under $P_{\Gamma_0}^T$.
 
\textit{Step 2: Local parameter equivalence.}
Write
\[
\Gamma_{i,T}=\Gamma_0+T^{-1/2}H_{i,T},
\qquad
\Gamma_{j,T}=\Gamma_0+T^{-1/2}H_{j,T},
\]
where $H_{i,T}\to H$ and $H_{j,T}\to H$. Hence
$\|H_{i,T}-H_{j,T}\|_{\mathcal I}\to0$.

\textit{Step 3: Hellinger and total-variation convergence.}
The local Hellinger-continuity clause in
Assumption~\ref{ass:lan} gives
\[
H\!\left(P_{\Gamma_{i,T}}^T,P_{\Gamma_{j,T}}^T\right)\to0.
\]
Since total variation is bounded by Hellinger distance up to a universal
constant,
\[
\bigl\|P_{\Gamma_{i,T}}^T-P_{\Gamma_{j,T}}^T\bigr\|_{\mathrm{TV}}\to0.
\]

\textit{Step 4: Test bound.}
For any test $\varphi_T\in[0,1]$,
\[
\bigl|E_{\Gamma_{i,T}}\varphi_T-E_{\Gamma_{j,T}}\varphi_T\bigr|
\leq
\bigl\|P_{\Gamma_{i,T}}^T-P_{\Gamma_{j,T}}^T\bigr\|_{\mathrm{TV}}\to 0.
\]
Hence no test can have asymptotic size tending to zero and power tending
to one for distinguishing $\mathcal S_i$ from $\mathcal S_j$ along these
local sequences.
\end{proof}

\subsection{Proofs of Results in Sections  \ref{sec:estimation} and \ref{sec:asymptotics} }
\label{app:proof_sections_4_and_5}
This subsection contains the proofs of Proposition~\ref{prop:residual_consistency}, Lemma~\ref{lem:uniform_projection_consistency}, Lemma~\ref{lem:score_differentiability}, and Theorems~\ref{thm:profile_consistency}–\ref{thm:profile_clt}.

\begin{proof}[Proof of Proposition~\ref{prop:residual_consistency}]
 
By Lemma~\ref{lem:projection_lipschitz}, $P_D$ is locally Lipschitz at
$\Gamma_0$ for each $D\in\mathfrak D$.
Since $\|\widehat\Gamma_T-\Gamma_0\|_F=o_p(1)$ by
Proposition~\ref{prop:consistency_operator}, it follows that
$\|P_D(\widehat\Gamma_T)-P_D(\Gamma_0)\|_F=o_p(1)$.
Continuity of the Frobenius norm then gives
\[
\rho_D(\widehat\Gamma_T)
=
\frac{\|\widehat\Gamma_T-P_D(\widehat\Gamma_T)\|_F}{\|\widehat\Gamma_T\|_F}
\overset{p}{\longrightarrow}
\frac{\|\Gamma_0-P_D(\Gamma_0)\|_F}{\|\Gamma_0\|_F}
=
\rho_D(\Gamma_0).
\]
The result for $\rho_{\min}$ follows from the continuous mapping theorem
applied to the minimum over a finite set.
\end{proof}

\begin{proof}[Proof of Lemma \ref{lem:uniform_projection_consistency}]
By local Lipschitz continuity,
\[
\|\widehat P_D-P_D(\Gamma_0)\|_F
\le
L_D\|\widehat\Gamma_T-\Gamma_0\|_F
\]
with probability approaching one. Proposition~\ref{prop:consistency_operator} gives the result for each \(D\). Since \(\{C,F,S\}\) is finite, the maximum is also \(o_p(1)\).
\end{proof}

\begin{proof}[Proof of Theorem \ref{thm:profile_consistency}]
By Lemma~\ref{lem:uniform_projection_consistency},
\[
\max_{D\in\mathfrak D}
\|\widehat P_D-P_D(\Gamma_0)\|_F=O_p(T^{-1/2}).
\] The map \(P\mapsto\|P\|_F^2\) is locally Lipschitz on
bounded sets, so \(\widehat S_D-S_{D,0}=O_p(T^{-1/2})\) for each \(D\).
Since \(\sum_{d}S_{d,0}>0\), the normalization map
\(\pi(s)=s/(\mathbf 1's)\) is Lipschitz in a neighborhood of \(s_0\);
composing, \(\widehat\omega-\omega_0=\pi(\widehat s)-\pi(s_0)
=O_p(T^{-1/2})\).
\end{proof}

\begin{proof} [Proof of Lemma \ref{lem:score_differentiability}]
By Assumption~\ref{ass:projection_differentiability}, each $P_D$ is
Hadamard differentiable at $\Gamma_0$ with derivative
$\dot P_{D,\Gamma_0}$. The map
$g(P)=\|P\|_F^2=\langle P,P\rangle_F$ is a quadratic form on the
finite-dimensional space $\mathbb H$ and is therefore Fr\'echet
differentiable everywhere, with derivative $Dg(P)[K]=2\langle P,K\rangle_F$.
Since $S_D=g\circ P_D$, the chain rule for Hadamard-differentiable maps
composed with a Fr\'echet-differentiable map
\citep[Lemma~3.9.3]{VanDerVaartWellner1996} gives that $S_D$ is Hadamard
differentiable at $\Gamma_0$, with derivative
\[
\dot S_{D,\Gamma_0}[H]
=
Dg(P_D(\Gamma_0))\bigl[\dot P_{D,\Gamma_0}[H]\bigr]
=
2\bigl\langle P_D(\Gamma_0),\,\dot P_{D,\Gamma_0}[H]\bigr\rangle_F.
\]
Applying this to each $D\in\{C,F,S\}$ and stacking gives Hadamard
differentiability of the vector-valued map $\mathcal S(\cdot)$, since a
vector of Hadamard-differentiable real-valued maps is itself Hadamard
differentiable (apply the definition coordinatewise along the same
sequence). This conclusion is conditional on
Assumption~\ref{ass:projection_differentiability}. Pairwise
principal-angle separation alone does not imply differentiability of a
nonlinear metric projection, particularly for the factor geometry.
\end{proof}

\begin{proof}[Proof of Proposition~\ref{prop:joint_profile_consistency}]
Let $\mathcal O(A)=A-\operatorname{diag}(A)$ denote the linear
continuous off-diagonal operator.  Assumption~\ref{ass:local_projection_regularity}
and Proposition~\ref{prop:consistency_operator} imply
$\widehat P_d-P_d(\Gamma_0)=O_p(T^{-1/2})$ for every $d$.  Hence
\[
\|\mathcal O(\widehat P_d)\|_F^2-
\|\mathcal O(P_d(\Gamma_0))\|_F^2=O_p(T^{-1/2}).
\]
Because both population normalizing denominators are positive, the two
normalization maps are locally Lipschitz. Hence
$(\widehat\omega-\omega_0,
\widehat\omega^{\mathrm{off}}-\omega_0^{\mathrm{off}})
=O_p(T^{-1/2})$.

Under Assumptions~\ref{ass:operatorclt} and
\ref{ass:projection_differentiability}, the maps
\[
\Gamma\mapsto \|P_d(\Gamma)\|_F^2,
\qquad
\Gamma\mapsto \|\mathcal O(P_d(\Gamma))\|_F^2
\]
are Hadamard differentiable, with derivatives
\[
2\langle P_d(\Gamma_0),\dot P_{d,\Gamma_0}[H]\rangle_F
\quad\text{and}\quad
2\langle \mathcal O(P_d(\Gamma_0)),
\mathcal O(\dot P_{d,\Gamma_0}[H])\rangle_F,
\]
respectively.  Stacking these six score derivatives and applying the
functional delta method, followed by the two normalization maps, yields
the stated joint Gaussian limit and covariance matrix.
\end{proof}

\begin{proof}[Proof of Theorem~\ref{thm:profile_clt}]
 
\textit{Step 1: CLT for projection estimators.}
Lemma~\ref{lem:projection_clt} gives
\[
\sqrt T\,\operatorname{vec}
\bigl((\widehat P_C-P_C,\widehat P_F-P_F,\widehat P_S-P_S)\bigr)
\Rightarrow
N(0,\Omega_P).
\]
 
\textit{Step 2: CLT for similarity scores.}
The score map $P_D\mapsto S_D=\|P_D\|_F^2=\operatorname{tr}(P_D^2)$ is
continuously Fréchet-differentiable with derivative
$\dot S_D[H]=2\operatorname{tr}(P_D H)=2\langle P_D,H\rangle_F$.
Applying the multivariate delta method to the score map
$(P_C,P_F,P_S)\mapsto(S_C,S_F,S_S)$,
\[
\sqrt T(\widehat S-S)\Rightarrow N(0,\Omega_S),
\qquad
\Omega_S=J_S\Omega_P J_S',
\]
where $J_S$ is the block-diagonal matrix of score derivatives
$\dot S_D[\cdot]$, $D\in\mathfrak D$. (Composing this with the
$\Gamma\mapsto P_D$ derivative $\dot P_{D,\Gamma_0}$ from
Assumption~\ref{ass:projection_differentiability} via the chain rule
recovers, in a single step, the expression
$\dot S_{D,\Gamma_0}[H]=2\langle P_D(\Gamma_0),\dot P_{D,\Gamma_0}[H]
\rangle_F$ of Lemma~\ref{lem:score_differentiability}; the two-stage
decomposition used here and the single-stage chain rule used there compute
the same derivative.)
 
\textit{Step 3: CLT for the dependence profile.}
Applying Lemma~\ref{lem:delta_profile} to Step~2,
\[
\sqrt T(\widehat\omega-\omega)
\Rightarrow
N(0,\,G_\omega\Omega_S G_\omega'),
\]
where $G_\omega$ is the Jacobian of the normalization map $\pi$ given in
Lemma~\ref{lem:delta_profile}.
Setting $\Xi=G_\omega\Omega_S G_\omega'=J_\omega\Omega_\Gamma J_\omega'$
with $J_\omega=G_\omega J_S\mathcal D$ completes the proof.
\end{proof}

\subsection{Proofs of Results in Sections \ref{sec:classification} and \ref{sec:profile_guided_inference}}
\label{app:proof_sections_6_and_7}
This subsection contains the proofs of Propositions~\ref{prop:classification_bound}, \ref{prop:kappa_consistency},  and \ref{prop:vcov_consistency}, Theorems~\ref{thm:dominant_geometry_consistency}, \ref{thm:local_ties}, and \ref{thm:oracle_adaptivity_asymptotic_optimality}.
\begin{proof} [Proof of Proposition \ref{prop:classification_bound}]
If
\[
\max_{d\in\mathfrak D}
|\widehat\omega_d^{\mathrm{off}}-\omega_d^{\mathrm{off}}|
<
\frac{\Delta_{\omega^{\mathrm{off}}}}{2},
\]
then, for every $d\neq d^\star$,
\[
\widehat\omega_{d^\star}^{\mathrm{off}}
>
\omega_{d^\star}^{\mathrm{off}}-
\frac{\Delta_{\omega^{\mathrm{off}}}}{2}
\geq
\omega_d^{\mathrm{off}}+
\frac{\Delta_{\omega^{\mathrm{off}}}}{2}
>
\widehat\omega_d^{\mathrm{off}}.
\]
Thus $\widehat d=d^\star$. Taking complements yields the stated bound.
\end{proof}

\begin{proof}[Proof of Theorem \ref{thm:dominant_geometry_consistency}]
By Proposition~\ref{prop:joint_profile_consistency},
\[
\max_{d\in\mathfrak D}
|\widehat\omega_d^{\mathrm{off}}-\omega_d^{\mathrm{off}}|
=o_p(1).
\]
Assumption~\ref{ass:unique_dominant_geometry} gives
$\Delta_{\omega^{\mathrm{off}}}>0$. Proposition~\ref{prop:classification_bound}
therefore implies
\[
P_{\Gamma_0}(\widehat d\neq d^\star)\to0,
\]
which proves the result.
\end{proof}

\begin{proof} [Proof of Theorem \ref{thm:local_ties}]
Let
\[
Z_T=\sqrt T\bigl(\widehat\omega_{d_1}^{\mathrm{off}}-
\widehat\omega_{d_2}^{\mathrm{off}}\bigr).
\]
With $e=e_{d_1}-e_{d_2}$,
\[
Z_T=c+e'\sqrt T
(\widehat\omega^{\mathrm{off}}-\omega_T^{\mathrm{off}})+o_p(1)
\Rightarrow N\!\left(c,e'\Xi_{\mathrm{off}}e\right).
\]
Because all other profile components remain separated by fixed positive
constants, the event that a third geometry dominates has probability tending
to zero. Hence
\[
P_{\Gamma_T}(\widehat d=d_1)
=
P_{\Gamma_T}(Z_T>0)+o(1)
\longrightarrow
\Phi\!\left(\frac{c}{\sqrt{e'\Xi_{\mathrm{off}}e}}\right).
\]
The assumed positive contrast variance makes the limit strictly between zero
and one.
\end{proof}

\begin{proof}[Proof of Proposition \ref{prop:kappa_consistency}]
By Theorem~\ref{thm:profile_consistency},
$\widehat\omega^{\mathrm{off}}\overset{p}{\to}\omega^{\mathrm{off}}$.
Consistency of the projection-residual diagnostics gives
$\widehat\rho_d\overset{p}{\to}\rho_d$ for every $d\in\mathfrak D$.
Since maximum, minimum, and multiplication are continuous and the dominant
geometry is unique,
\[
\widehat\kappa
=(1-\widehat\rho_{\min})
\widehat\Delta_{\omega^{\mathrm{off}}}
\overset{p}{\longrightarrow}
(1-\rho_{\min,0})
\Delta_{\omega^{\mathrm{off}},0}
=\kappa_0.
\]
\end{proof}

\begin{proof}[Proof of Proposition~\ref{prop:vcov_consistency}]
By Theorem~\ref{thm:dominant_geometry_consistency},
\[
P_{\Gamma_0}(\widehat d=d^\star)\to1.
\]
For every \(\varepsilon>0\),
\[
P\!\left(
\left|\widehat V^*-V_{d^\star}\right|>\varepsilon
\right)
\le
P_{\Gamma_0}(\widehat d\neq d^\star)
+
P\!\left(
\left|\widehat V_{d^\star}-V_{d^\star}\right|>\varepsilon
\right).
\]
The first term converges to zero by dominant-geometry consistency,
and the second by assumption. Therefore
\[
\widehat V^*\overset p\to V_{d^\star}.
\]
\end{proof}

\begin{proof} [Proof of Theorem \ref{thm:oracle_adaptivity_asymptotic_optimality}]
Part (i) follows from Theorem~\ref{thm:dominant_geometry_consistency}:
\[
P_{\Gamma_0}(\widehat d=d^\star)\to1.
\]
On the event \(\{\widehat d=d^\star\}\),
\[
\widehat V^*=\widehat V_{d^\star}.
\]
Hence, for every \(\varepsilon>0\),
\[
P\!\left(
\left|\widehat V^*-\widehat V_{d^\star}\right|>\varepsilon
\right)
\le
P_{\Gamma_0}(\widehat d\neq d^\star)
\to0.
\]
Therefore,
\[
\widehat V^*-\widehat V_{d^\star}=o_p(1).
\]
Since \(\widehat V_{d^\star}\overset p\to V_{d^\star}\),
it follows that
\[
\widehat V^*\overset p\to V_{d^\star}.
\]

For part (ii),  let
\[
A_T=\{\widehat d=d^\star\}.
\]
By Theorem~\ref{thm:dominant_geometry_consistency},
\[
P(A_T)\to1.
\]

On the event \(A_T\), we have
\[
\tau_{T,\widehat d}=\tau_{T,d^\star}
\qquad\text{and}\qquad
\widehat V^{*} =\widehat V_{\widehat d}=\widehat V_{d^\star}.
\]
Therefore, on \(A_T\),
\[
T_T^{*} 
=
\frac{\tau_{T,\widehat d}\,(\widehat\theta_T-\theta_0)}
{\sqrt{\widehat V^{*} }}
=
\frac{\tau_{T,d^\star}\,(\widehat\theta_T-\theta_0)}
{\sqrt{\widehat V_{d^\star}}}
=
T_T^{oracle}.
\]

Hence, for every \(\varepsilon>0\),
\[
P\!\left(
|T_T^{*} -T_T^{oracle}|>\varepsilon
\right)
\le
P(A_T^c)
=
P_{\Gamma_0}(\widehat d\neq d^\star)
\to0.
\]
Thus,
\[
T_T^{*} -T_T^{oracle}=o_p(1).
\]

By assumption,
\[
T_T^{oracle}\Rightarrow N(0,1).
\]
Therefore, by Slutsky's theorem,
\[
T_T^{*} 
=
T_T^{oracle}
+
o_p(1)
\Rightarrow
N(0,1).
\]

For part (iii), note that by classification consistency,
\[
        P_{\Gamma_0}(\widehat d=d^\star)\to 1 .
\]
Since the profile-guided decision rule satisfies
\[
        \delta(\widehat\omega)=a_{\widehat d},
\]
and the oracle decision rule satisfies
\[
        \delta^\star(\Gamma_0)=a_{d^\star},
\]
it follows that
\[
        P_{\Gamma_0}
        \left(
        \delta(\widehat\omega)
        =
        \delta^\star(\Gamma_0)
        \right)
        \to 1 .
\]
On the event $\{\widehat d=d^\star\}$, the profile-guided decision rule
selects the same action as the oracle rule, and the regret contribution
is zero. On the complementary event, boundedness of the loss implies
\[
        \left|
        L(\delta(\widehat\omega),\Gamma_0)
        -
        L(\delta^\star(\Gamma_0),\Gamma_0)
        \right|
        \leq 2M .
\]
Therefore,
\[
        \mathcal R(\delta,\Gamma_0)
        \leq
        2M
        P_{\Gamma_0}
        \left(
        \widehat d\neq d^\star
        \right).
\]
Finally, if
\[
        \max_{d\in\mathfrak D}
        |\widehat\omega_d^{\mathrm{off}}-\omega_{d,0}^{\mathrm{off}}|
        \leq
        \frac{\Delta_{\omega^{\mathrm{off}}}}{2},
\]
then $\widehat d=d^\star$. Hence
\[
        \{\widehat d\neq d^\star\}
        \subseteq
        \left\{
        \max_{d\in\mathfrak D}
        |\widehat\omega_d^{\mathrm{off}}-\omega_{d,0}^{\mathrm{off}}|
        \ge
        \frac{\Delta_{\omega^{\mathrm{off}}}}{2}
        \right\},
\]
which gives the finite-sample regret bound. Since
$\widehat\omega^{\mathrm{off}}\overset{p}{\to}\omega_0^{\mathrm{off}}$, the probability on the
right-hand side converges to zero, and therefore
\[
        \mathcal R(\delta,\Gamma_0)\to0 .
\]
\end{proof}

\begin{proof}[Proof of Lemma~\ref{lem:conic_pythagoras}]
The cone property is immediate from the definitions: the cluster class is the intersection of a support subspace with the PSD cone, hence a cone; scaling preserves both positive semidefiniteness and the off-diagonal support, hence membership in $\mathcal S_S$; and if $\Gamma=L+D$ with $L\succeq0$,
$\operatorname{rank}(L)\le r$, $D\succeq0$ diagonal, then
$a\Gamma=(aL)+(aD)$ has the same form for $a\ge0$. Fix $\Gamma$ and a
metric projection $p$. Since $\mathcal S_d$ is a cone, $tp\in\mathcal S_d$
for every $t\ge0$, so $\varphi(t)=\|\Gamma-tp\|_F^2
=\|\Gamma\|_F^2-2t\langle\Gamma,p\rangle_F+t^2\|p\|_F^2$ is minimized
over $t\ge0$ at $t=1$. If $p\neq0$, $\varphi$ is a strictly convex
quadratic and the first-order condition $\varphi'(1)=0$ gives
$\langle\Gamma,p\rangle_F=\|p\|_F^2$, i.e.\ $\langle\Gamma-p,p\rangle_F=0$;
if $p=0$ the identity holds trivially. Expanding
$\|\Gamma-p\|_F^2=\|\Gamma\|_F^2-2\langle\Gamma,p\rangle_F+\|p\|_F^2$
and substituting yields the stated decomposition.
\end{proof}

\begin{proof}[Proof of Lemma~\ref{lem:diagonal_invariance}]
Immediate from the orthogonal decomposition
$\|M\|_F^2=\|\operatorname{diag}(M)\|_F^2+\|\operatorname{off}(M)\|_F^2$
applied to $M=P_d(\Gamma)$, together with the maintained diagonal
condition, which contributes the same constant to every $S_d$.
\end{proof}

\setcounter{equation}{0} \setcounter{assumption}{0}
\setcounter{figure}{0} \setcounter{table}{0} \setcounter{remark}{0}

 \renewcommand{%
\theequation}{D.\arabic{equation}}
\renewcommand{\thelemma}{D.\arabic{lemma}}
\renewcommand{\theassumption}{D.\arabic{assumption}} 
\renewcommand{\thetheorem}{D.\arabic{theorem}}

\renewcommand{\thetable}{D.\arabic{table}}
\renewcommand{\thefigure}{D.\arabic{figure}}

\renewcommand{\thesubsection}{D.\arabic{subsection}}
\renewcommand{\theproposition}{D.\arabic{proposition}}
\renewcommand{\theremark}{D.\arabic{remark}}

\section{Auxiliary results and proofs}
\label{app:auxiliary_results_and_proofs}

\subsection{Two-Geometry Local Indistinguishability and Identification Proofs}
\label{app:two_geometry_indistinguishability_identification}

\begin{proof}[Proof of Lemma~\ref{lem:individual_uniqueness}]

\textit{Cluster projection uniqueness.}
At a regular cluster point $\Sigma_C$ (Section~\ref{subsec:regular_tangent}),
every entry $(\Sigma_C)_{ab}\neq 0$ for $(a,b)\in\mathcal I_C$.
By continuity of matrix entries, any $\Sigma$ within a sufficiently small
$\|\cdot\|_F$-ball around $\Sigma_C$ satisfies $\Sigma_{ab}\neq 0$ for
$(a,b)\in\mathcal I_C$, so the active support pattern is locally constant at
$\mathcal I_C$.
The cluster projection $P_C(\Sigma)=M_C\odot\Sigma$ is the orthogonal
projection onto the linear subspace $\mathcal V_{\mathcal I_C}$, which is
unique.

\textit{Sparse projection uniqueness.}
The same argument applies verbatim with $\mathcal I_C$ replaced by
$\mathcal I_S$: Assumption~\ref{ass:sparse_unique} ensures the active
support pattern is locally stable, so $P_S(\Sigma)=$ projection onto
$\mathcal V_{\mathcal I_S}$ is locally unique.

\textit{Factor projection uniqueness.}
Write $\Sigma_F=L_0+D_0$ with $L_0$ of rank $r$ and $r$ distinct positive
leading eigenvalues, as in the regularity condition of
Section~\ref{subsec:regular_tangent}. We establish \emph{local} uniqueness
of the projection in a neighborhood of $\Sigma_F$; global uniqueness of the
minimizer of the (nonconvex) factor objective is not claimed.

Fix $D$ in a neighborhood of $D_0$ and consider the rank-$r$ truncation of
$\Sigma_F-D$. By Weyl's inequality, the eigenvalues of $\Sigma_F-D$ depend
continuously on $D$, so for $D$ sufficiently close to $D_0$ the $r$ leading
eigenvalues of $\Sigma_F - D$ remain positive and distinct (by the
regularity assumption on $L_0=\Sigma_F-D_0$). Under this condition, the
Eckart--Young theorem implies that the best rank-$r$ approximation
$L^*(D)=\arg\min_{\operatorname{rank}(L)\le r}\|(\Sigma_F-D)-L\|_F$ is the
unique truncated eigendecomposition of $\Sigma_F-D$, and the map
$D\mapsto L^*(D)$ is continuous (indeed, real-analytic away from eigenvalue
crossings; see \citealp{LewisMalick2008}, Section~3). Composing with the
diagonal-update step, the fixed-point equation
$D=\operatorname{diag}(\Sigma_F-L^*(D))$ defines a continuous self-map on
a compact neighborhood of $D_0$, and $D_0$ satisfies it by construction.

We do \emph{not} claim that this map is a contraction. An eigengap
condition delivers, through the Davis--Kahan bound, a finite Lipschitz
constant for $D\mapsto L^*(D)$, but a finite Lipschitz constant is not a
constant strictly below one, and no bound below one is available at this
level of generality. Local uniqueness of the factor projection is
therefore \emph{imposed} as a regularity condition
(Assumption~\ref{ass:projection_differentiability} and the regular-point
condition of Theorem~\ref{thm:identification}) rather than derived. A
sufficient eigengap together with interiority of the idiosyncratic block
is a plausible primitive condition under which it can be verified in
particular models, and we state it as such; establishing it in generality
for the positive-semidefinite low-rank-plus-diagonal class is outside the
scope of this paper.

Accordingly, the preceding eigengap discussion does not establish
local uniqueness of $P_F$ by itself. Local single-valuedness of the
factor projection is retained as an explicit regularity condition in
Lemma~\ref{lem:individual_uniqueness}, Theorem~\ref{thm:identification},
and Assumption~\ref{ass:projection_differentiability}. The factor
objective is nonconvex in $(L,D)$ jointly, and no claim is made that the
alternating algorithm of Section~\ref{subsec:computation} reaches the
global metric projection from an arbitrary initialization.
\end{proof}

\subsection{Technical results}
\label{app:technical_results}

Because the covariance-cone restrictions can be active, none of the
three metric projections is assumed globally linear.  The main text
therefore imposes local single-valuedness and Hadamard differentiability
at the population operator.  For the cluster geometry this condition is
automatic at interior points of a fixed-support PSD face; for the sparse
geometry it additionally requires a locally stable active support; for
the factor geometry it requires the usual local regularity of the
low-rank-plus-diagonal representation.  At boundary points the derivative
may contain curvature terms and need not equal a support mask.

\begin{lemma}[Hadamard Differentiability of Projection Maps]
\label{lem:hadamard_projection}
Under Assumption~\ref{ass:projection_differentiability}, each map $P_D$,
$D\in\{C,F,S\}$, is Hadamard differentiable at $\Gamma_0$, with derivative
$\dot P_{D,\Gamma_0}$.  If the projected point lies in the relative
interior of a smooth active stratum, the derivative equals the orthogonal
projection onto the corresponding tangent space; otherwise the abstract
derivative in Assumption~\ref{ass:projection_differentiability} is used.
\end{lemma}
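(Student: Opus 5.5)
The first claim, Hadamard differentiability of each $P_D$ at $\Gamma_0$ with derivative $\dot P_{D,\Gamma_0}$, is exactly Assumption~\ref{ass:projection_differentiability}; I will simply record it and spend the argument on the second claim. That claim identifies the derivative when the projected point $\Sigma_D=P_D(\Gamma_0)$ lies in the relative interior of a smooth active stratum $\mathcal M_D$. By a smooth active stratum I mean a $C^2$ embedded submanifold of $\mathbb H$ that contains $\Sigma_D$. I also require that, for all $\Gamma$ near $\Gamma_0$, $P_D(\Gamma)$ lies in $\mathcal M_D$ and is a local nearest point there. For the cluster geometry this is the fixed-support face with PSD slack. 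For the sparse geometry it is the face $\mathcal C_{I_0}$, which is locally stable by the objective gap of Assumption~\ref{ass:sparse_unique} together with Lemma~\ref{lem:projectioncontinuity}. For the factor geometry it is the rank-$r$-plus-positive-diagonal manifold of Section~\ref{subsec:regular_tangent}.

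Step one is the linear case. For cluster and sparse points in the relative interior, the PSD constraint is inactive on a neighborhood. There $P_D$ coincides with the orthogonal projection onto the support subspace $T_D(\Sigma_D)$. That projection is linear, so its derivative is itself.

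Step two is the curved factor case. Write $p(\Gamma)=P_F(\Gamma)\in\mathcal M_F$. The first-order optimality condition is $\Gamma-p(\Gamma)\in N_{p(\Gamma)}\mathcal M_F$, the normal space. Differentiating this along $\Gamma_0+tH_t$ with $H_t\to H$ (local single-valuedness and the local Lipschitz property of Assumption~\ref{ass:local_projection_regularity} keep the increments bounded) gives
\[
\dot P[H]\in T_{\Sigma_F}\mathcal M_F,\qquad H-\dot P[H]-\mathcal W[\dot P[H]]\perp T_{\Sigma_F}\mathcal M_F .
\]
Here $\mathcal W$ is the Weingarten map associated with the residual normal $\Gamma_0-\Sigma_F$. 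Hence $\dot P=(I-\mathcal W)^{-1}\Pi_{T}$. This equals the orthogonal projection $\Pi_T$ onto the tangent space exactly when the curvature term vanishes. That happens when $\Gamma_0\in\mathcal S_F$, so the residual is zero, or when the stratum is flat, as in the cluster and sparse faces.

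The main obstacle is this curvature term. I would first check whether the intended reading of the lemma is the residual-free case. If so, I would prove the equality $\dot P=\Pi_T$ there and note that, in general, the curvature-corrected formula replaces it. This matches the paper's caveat that ``the derivative may contain curvature terms''. Otherwise the claimed equality is false as stated, and invertibility of $I-\mathcal W$ must be assumed. That invertibility follows from local single-valuedness when the residual is smaller than the reach.

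At non-stratum or boundary points, nothing beyond Assumption~\ref{ass:projection_differentiability} is asserted, so no further argument is needed there.
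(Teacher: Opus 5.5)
Your proposal is correct. It follows the paper's route but carries it out more carefully than the paper does. The paper's proof is two sentences. The first claim is called a restatement of Assumption~\ref{ass:projection_differentiability}, exactly as you treat it. The relative-interior claim is said to follow from ``the standard derivative formula for metric projection onto a smooth embedded manifold or a convex face.''

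You actually write that formula down, and it does not give what the lemma asserts. With $n_0=\Gamma_0-\Sigma_D$ the derivative is $(I-\mathcal W_{n_0})^{-1}\Pi_T$. This equals $\Pi_T$ only in two cases: on flat strata, or when $n_0=0$. The flat strata are the cluster and sparse fixed-support faces with slack PSD constraint, and there your Step one is exactly the needed argument.

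The residual-free reading is not the intended one: $\Gamma_0$ is compared with all three geometries at once and generally lies in none of them. For the factor stratum the correction is genuinely nonzero whenever $n_0\neq0$. First-order optimality gives $n_0\Lambda_0=0$, so $n_0=U_\perp MU_\perp'$. The curve $(\Lambda_0+tU_\perp C)(\Lambda_0+tU_\perp C)'+D_0$ has velocity $\xi=\Lambda_0C'U_\perp'+U_\perp C\Lambda_0'$. It gives $\langle n_0,\mathrm{II}(\xi,\xi)\rangle=2\operatorname{tr}(MCC')$, which is nonzero for suitable $C$ unless $M=0$.

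So your diagnosis stands. The lemma's identification of $\dot P_F$, which is repeated in the proof of Lemma~\ref{lem:projection_clt}, is an overstatement, and your curvature-corrected formula is the correct version.

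It is worth adding that the overstatement is harmless for the profile results. The factor stratum is invariant under positive scaling and, near a regular point, under small diagonal shifts. Hence $\Sigma_F$ and all diagonal matrices lie in $\ker\mathcal W_{n_0}$. By self-adjointness, $\langle\Sigma_F,(I-\mathcal W_{n_0})^{-1}\Pi_TH\rangle=\langle\Sigma_F,H\rangle$, and likewise with $\operatorname{off}(\Sigma_F)$. The score derivatives, and therefore $\Xi$ and $\Xi_{\mathrm{off}}$, come out the same whether one uses $\Pi_T$ or the true $\dot P_F$. For the full scores this is also immediate from Lemma~\ref{lem:conic_pythagoras}: $S_d(\Gamma)=\|\Gamma\|_F^2-\operatorname{dist}^2(\Gamma,\mathcal S_d)$ has gradient $2P_d(\Gamma)$ wherever the projection is unique.

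Two small repairs. First, a sign slip: as displayed, $H-\dot P[H]-\mathcal W[\dot P[H]]\perp T$ yields $(I+\mathcal W)^{-1}\Pi_T$. With the convention $\mathcal W_n(v)=\Pi_T(D_v\Pi_T)n$ the relation reads $H-\dot P[H]+\mathcal W[\dot P[H]]\perp T$, which matches your final formula.

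Second, invertibility of $I-\mathcal W_{n_0}$ need not be imposed or obtained from a reach bound. Under Assumption~\ref{ass:projection_differentiability}, the identity $(I-\mathcal W_{n_0})\dot P=\Pi_T$ shows that $I-\mathcal W_{n_0}$ maps $T$ onto $T$, hence is invertible there. This is cleaner than appealing to single-valuedness, which directly yields only the second-order condition $I-\mathcal W_{n_0}\succeq0$ on $T$.
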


\begin{proof}
This is a restatement of Assumption~\ref{ass:projection_differentiability}
for the three coordinate maps.  The relative-interior statement follows
from the standard derivative formula for metric projection onto a smooth
embedded manifold or a convex face.  No global linearity or global
uniqueness is asserted.
\end{proof}

\begin{lemma}[Joint CLT for Projection Estimators]
\label{lem:projection_clt}
Under Assumptions~\ref{ass:operatorclt} and~\ref{ass:projection_differentiability},
\[
\sqrt T
\begin{pmatrix}
\operatorname{vec}(\widehat P_C-P_C)\\
\operatorname{vec}(\widehat P_F-P_F)\\
\operatorname{vec}(\widehat P_S-P_S)
\end{pmatrix}
\Rightarrow
N(0,\Omega_P).
\]
\end{lemma}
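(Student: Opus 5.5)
The plan is to apply the functional delta method to the stacked map $\mathcal P(\Gamma)=(P_C(\Gamma),P_F(\Gamma),P_S(\Gamma))$, which takes $\mathbb H$ into $\mathbb H^3$. We begin from Assumption~\ref{ass:operatorclt}. Because $N$ is fixed, $\operatorname{vec}$ is a linear isometry up to the symmetric-coordinate identification. Assumption~\ref{ass:operatorclt} therefore gives $\sqrt T(\widehat\Gamma_T-\Gamma_0)\Rightarrow \mathbb G$ in $\mathbb H$, where $\mathbb G$ is a centered Gaussian element with $\operatorname{vec}(\mathbb G)\sim N(0,\Omega_\Gamma)$. It is tight, since $\mathbb H$ is finite-dimensional.

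Next, Assumption~\ref{ass:projection_differentiability} (restated in Lemma~\ref{lem:hadamard_projection}) makes each $P_d$ Hadamard differentiable at $\Gamma_0$, with derivative $\dot P_{d,\Gamma_0}$, tangentially to all of $\mathbb H$. A vector of maps that are each Hadamard differentiable is Hadamard differentiable along the same sequences, coordinatewise. So $\mathcal P$ is Hadamard differentiable at $\Gamma_0$ with derivative $\dot{\mathcal P}_{\Gamma_0}[H]=(\dot P_{C,\Gamma_0}[H],\dot P_{F,\Gamma_0}[H],\dot P_{S,\Gamma_0}[H])$. The functional delta method (\citealt{VanDerVaartWellner1996}, Theorem~3.9.4) then yields
\[
\sqrt T\bigl(\mathcal P(\widehat\Gamma_T)-\mathcal P(\Gamma_0)\bigr)\Rightarrow \dot{\mathcal P}_{\Gamma_0}[\mathbb G].
\]

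To conclude that this limit is Gaussian with the stated form, we need $\dot{\mathcal P}_{\Gamma_0}$ to be linear. Hadamard differentiability in the sense of van der Vaart--Wellner requires a continuous linear derivative, and I will read Assumption~\ref{ass:projection_differentiability} that way. Continuity is automatic in finite dimensions. Writing $\mathcal D$ for the matrix representing $H\mapsto\operatorname{vec}(\dot{\mathcal P}_{\Gamma_0}[H])$ in vec coordinates, the limit is then $N(0,\Omega_P)$ with $\Omega_P=\mathcal D\Omega_\Gamma\mathcal D'$, which is the claim. This also fixes the matrix $\mathcal D$ that appears in the proof of Theorem~\ref{thm:profile_clt}.

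The main obstacle is linearity. At points where a PSD constraint is active, metric projections onto cones are typically only directionally differentiable, with a positively homogeneous but nonlinear derivative; the remark after Assumption~\ref{ass:projection_differentiability} mentions the derivative of the projection onto the active convex cone. If that reading is allowed, the delta method still applies in its directional form (\citealt{VanDerVaartWellner1996} and the Shapiro/Fang--Santos extension), but the limit $\dot{\mathcal P}_{\Gamma_0}[\mathbb G]$ is then non-Gaussian. I would therefore state explicitly that linearity of $\dot P_{d,\Gamma_0}$ is part of the assumption. It holds at the regular points of Appendix~\ref{subsec:regular_tangent}, where the derivative is the orthogonal projection onto the tangent space (Lemma~\ref{lem:hadamard_projection}), and I would note that Gaussianity fails without it.

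A secondary point is that $\widehat\Gamma_T$ must lie in the domain where the $P_d$ are single-valued. The projection is defined on all of $\mathbb H$, and Assumption~\ref{ass:projection_differentiability} presupposes single-valuedness near $\Gamma_0$. Since $\widehat\Gamma_T\to\Gamma_0$ in probability, $\widehat\Gamma_T$ lies in that neighborhood with probability tending to one, and events of vanishing probability do not affect weak convergence. Any measurable selection of $P_d$ off that neighborhood therefore suffices.
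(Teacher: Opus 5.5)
Your proposal is correct and follows essentially the same route as the paper. Coordinatewise Hadamard differentiability of $P_C,P_F,P_S$ gives differentiability of the stacked map, and the functional delta method (van der Vaart--Wellner, Theorem~3.9.4) applied to the limit in Assumption~\ref{ass:operatorclt} yields $\Omega_P=\mathcal D\,\Omega_\Gamma\,\mathcal D'$. Your explicit caveat that the derivative must be linear for a Gaussian limit usefully sharpens what the paper uses only implicitly, when it calls $\mathcal D$ the matrix of the linear map $H\mapsto(\dot P_{C,\Gamma_0}[H],\dot P_{F,\Gamma_0}[H],\dot P_{S,\Gamma_0}[H])$.
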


\begin{proof}
By Assumption~\ref{ass:projection_differentiability}, each $P_C$, $P_F$,
$P_S$ is Hadamard differentiable at $\Gamma_0$, with derivative
$\dot P_{D,\Gamma_0}$, $D\in\{C,F,S\}$.  The derivatives are kept
abstract because the positive-semidefinite constraint may be active.  At a
relative-interior point of a smooth fixed-support face, the cluster or sparse
derivative reduces to the orthogonal projection onto the corresponding
tangent space; at an active PSD boundary it generally does not reduce to an
entrywise support mask.  The factor derivative is likewise governed by the
local regularity imposed in Assumption~\ref{ass:projection_differentiability}.
Hadamard differentiability of each coordinate map implies Hadamard
differentiability of the product map
\[
\Gamma\mapsto(P_C(\Gamma),P_F(\Gamma),P_S(\Gamma))
\]
at $\Gamma_0$, with derivative
$\mathcal D[H]=(\dot P_{C,\Gamma_0}[H],\dot P_{F,\Gamma_0}[H],
\dot P_{S,\Gamma_0}[H])$, since Hadamard differentiability is preserved
under finite Cartesian products of differentiable maps (apply the
definition coordinatewise to the same sequence $t_m\downarrow0$,
$H_m\to H$). By Assumption~\ref{ass:operatorclt},
$\sqrt T\,\mathrm{vec}(\widehat\Gamma_T-\Gamma_0)\Rightarrow N(0,\Omega_\Gamma)$
(as a special case of the asymptotic linear representation with
$\Omega_\Gamma=\sum_{h=-\infty}^{\infty}E(\psi_0\psi_h')$). The functional delta method
\citep[Theorem~3.9.4]{VanDerVaartWellner1996} applied to this product map
then gives
\[
\sqrt T
\begin{pmatrix}
\operatorname{vec}(\widehat P_C-P_C)\\
\operatorname{vec}(\widehat P_F-P_F)\\
\operatorname{vec}(\widehat P_S-P_S)
\end{pmatrix}
\Rightarrow
N(0,\Omega_P),
\qquad
\Omega_P=\mathcal D\,\Omega_\Gamma\,\mathcal D',
\]
where $\mathcal D$ is the matrix representation of the linear map
$H\mapsto(\dot P_{C,\Gamma_0}[H],\dot P_{F,\Gamma_0}[H],
\dot P_{S,\Gamma_0}[H])$ acting on $\operatorname{vec}(H)$. When a projected point lies in the relative interior of a smooth active
stratum, $\dot P_{d,\Gamma_0}$ equals orthogonal projection onto the
corresponding tangent space.  Otherwise the derivative may incorporate the
active PSD boundary or manifold curvature, and the abstract derivative in
Assumption~\ref{ass:projection_differentiability} is the relevant object.
\end{proof}

\begin{lemma}[Delta Method for Dependence Profiles]
\label{lem:delta_profile}
Let \(S=(S_C,S_F,S_S)'\) and define \(S_+=\mathbf 1'S>0\). Then
\[
\sqrt T(\widehat\omega-\omega)
=
G_\omega\sqrt T(\widehat S-S)+o_p(1),
\]
where
\[
G_\omega
=
\frac{1}{S_+^2}
\begin{pmatrix}
S_F+S_S & -S_C & -S_C\\
-S_F & S_C+S_S & -S_F\\
-S_S & -S_S & S_C+S_F
\end{pmatrix}.
\]
\end{lemma}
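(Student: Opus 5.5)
The plan is to treat Lemma~\ref{lem:delta_profile} as a finite-dimensional delta method for the normalization map $\pi(s)=s/(\mathbf 1's)$ at the point $S=(S_C,S_F,S_S)'$. First, since $S_+=\mathbf 1'S>0$, the map $\pi$ is a rational function whose denominator is bounded away from zero on a neighborhood of $S$. It is therefore continuously (indeed infinitely) differentiable there. Second, compute its Jacobian coordinatewise. Writing $\pi_d(s)=s_d/s_+$, the quotient rule gives
\[
\frac{\partial\pi_d}{\partial s_e}=\frac{\ind\{d=e\}}{s_+}-\frac{s_d}{s_+^2}
=\frac{\ind\{d=e\}\,s_+-s_d}{s_+^2}.
\]
On the diagonal this equals $(s_+-s_d)/s_+^2$, e.g.\ $(S_F+S_S)/S_+^2$ for $d=e=C$; off the diagonal it equals $-s_d/s_+^2$. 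Evaluated at $S$, this reproduces exactly the displayed matrix $G_\omega$: row $d$ has $s_+-s_d$ in column $d$ and $-S_d$ in every other column. A check worth recording is that $\mathbf 1'G_\omega=0$, consistent with Remark~\ref{rem:xi_singular}.

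Third, transfer this to the stochastic expansion. Here $\widehat S-S=O_p(T^{-1/2})$, either from Theorem~\ref{thm:profile_consistency}'s argument (projection consistency plus local Lipschitz continuity of $P\mapsto\|P\|_F^2$) or from the CLT in Step~2 of the proof of Theorem~\ref{thm:profile_clt}. So $\widehat S$ lies in the neighborhood where $\pi$ is $C^1$ with probability tending to one. On that event, a first-order Taylor expansion with remainder gives
\[
\pi(\widehat S)-\pi(S)=G_\omega(\widehat S-S)+r(\widehat S),
\qquad \|r(\widehat S)\|\le \epsilon(\|\widehat S-S\|)\,\|\widehat S-S\|,
\]
where $\epsilon(\eta)\to0$ as $\eta\to0$. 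Multiplying by $\sqrt T$, the remainder is $o_p(1)\cdot O_p(1)=o_p(1)$, which is the stated representation. Equivalently, one can invoke the ordinary delta method (the finite-dimensional case of \citealp[Theorem~3.9.4]{VanDerVaartWellner1996}) in its asymptotically linear form.

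There is no real obstacle; the only point needing care is making explicit that the $O_p(T^{-1/2})$ rate for $\widehat S$, and hence the positivity $\widehat S_+>0$ with probability approaching one, is inherited from the earlier results. The lemma's hypotheses name only $S_+>0$, so the remainder control must be justified through that imported rate, and it is on that basis that the $o_p(1)$ remainder is taken to hold.
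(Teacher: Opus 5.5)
Your proposal is correct and follows the paper's proof. The paper likewise notes that $\pi(s)=s/(\mathbf 1's)$ is continuously differentiable when $S_+>0$, identifies the displayed matrix as its Jacobian, and invokes the multivariate delta method; your explicit Taylor-remainder step and your remark that the $O_p(T^{-1/2})$ rate for $\widehat S$ must be imported from the earlier results (since the lemma's only stated hypothesis is $S_+>0$) simply make precise what the paper's one-line proof leaves implicit.
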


\begin{proof}
The map \(g(S)=S/S_+\) is continuously differentiable when \(S_+>0\). The displayed matrix is its Jacobian. The result follows from the multivariate delta method.
\end{proof}

\begin{proposition}[Operator Invariance]
\label{prop:operator_invariance}

Suppose two population dependence operators \(\Gamma_1\) and \(\Gamma_2\)
satisfy

\[
\Gamma_2
=
a\Gamma_1,
\qquad
 a>0.
\]

Assume that each covariance geometry \(\mathcal S_d\), \(d\in\mathfrak D\),
is a cone. Then

\[
\omega_d(\Gamma_1)
=
\omega_d(\Gamma_2),
\qquad
 d\in\mathfrak D.
\]

\end{proposition}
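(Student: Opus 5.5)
The plan is to show that every metric projection is positively homogeneous on a cone, $P_d(a\Gamma)=aP_d(\Gamma)$ for $a>0$. Then each similarity score scales by $a^2$, and that common factor cancels in the normalization $\pi(s)=s/(\mathbf 1's)$.

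\emph{Step 1 (homogeneity of the projection).} Fix $d\in\mathfrak D$, $a>0$, and $\Gamma_1\in\mathbb H$. Because $\mathcal S_d$ is a cone and $a^{-1}>0$, the map $\Gamma'\mapsto a\Gamma'$ is a bijection of $\mathcal S_d$ onto itself. For any $\Gamma'\in\mathcal S_d$ we have
\[
\|a\Gamma_1-a\Gamma'\|_F=a\|\Gamma_1-\Gamma'\|_F .
\]
So $\Gamma'$ minimizes $\|\Gamma_1-\cdot\|_F$ over $\mathcal S_d$ if and only if $a\Gamma'$ minimizes $\|a\Gamma_1-\cdot\|_F$ over $\mathcal S_d$. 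Hence the set of metric projections satisfies $P_d(a\Gamma_1)=aP_d(\Gamma_1)$ as sets. Existence is supplied by Lemma~\ref{lem:projectionexistence}, since each geometry is closed.

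\emph{Step 2 (the scores are well defined even without uniqueness).} This is the only point needing care, because $P_d$ may be set-valued for the nonconvex factor and sparse classes. The score $S_d(\Gamma)=\|P_d(\Gamma)\|_F^2$ does not depend on which minimizer is chosen. By Lemma~\ref{lem:conic_pythagoras}, every metric projection $p$ onto a cone satisfies
\[
\|p\|_F^2=\|\Gamma\|_F^2-\operatorname{dist}^2(\Gamma,\mathcal S_d),
\]
which depends only on $\Gamma$. Combining this with Step 1 gives $S_d(a\Gamma_1)=a^2S_d(\Gamma_1)$ for every $d$. The same identity holds for the off-diagonal scores $S_d^{\mathrm{off}}$ whenever the projection is single-valued, since the off-diagonal operator is linear.

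\emph{Step 3 (normalization).} Suppose $\sum_d S_d(\Gamma_1)>0$; this is the standing condition under which $\omega$ is defined, and it is inherited by $\Gamma_2$ because the scores scale by $a^2>0$. Then
\[
\omega_d(\Gamma_2)=\frac{a^2S_d(\Gamma_1)}{a^2\sum_{j}S_j(\Gamma_1)}=\omega_d(\Gamma_1).
\]
This proves the claim. The same computation gives invariance of $\omega^{\mathrm{off}}$ and of the residuals $\rho_d$, which is the invariance the main text appeals to.
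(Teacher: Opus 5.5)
Your proposal is correct and follows the paper's proof: you use the cone change of variables to get $P_d(a\Gamma_1)=aP_d(\Gamma_1)$, deduce the $a^2$ scaling of every score, and cancel that factor in the normalization. Your appeal to Lemma~\ref{lem:conic_pythagoras} to show that the score does not depend on which minimizer is selected when $P_d$ is set-valued is a small refinement; the paper's argmin manipulation passes over this point silently.
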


\begin{proof}[Proof of Proposition~\ref{prop:operator_invariance}]
 
Fix $D\in\{C,F,S\}$.
Each geometry is a cone: $\mathcal S_C$ and $\mathcal S_S$ are cones because
their support constraints are homogeneous; $\mathcal S_F(r)$ is a cone by
Lemma~\ref{lem:closed_factor}.
By definition,
\[
P_D(\Gamma_1)=\arg\min_{\Gamma\in\mathcal S_D}\|\Gamma_1-\Gamma\|_F.
\]
 
Let $\Gamma_2=a\Gamma_1$ for $a>0$.
Since $\mathcal S_D$ is a cone,

\[
\Gamma\in\mathcal S_D
\quad
\Longleftrightarrow
\quad
a^{-1}\Gamma\in\mathcal S_D.
\]

Therefore,

\[
P_D(\Gamma_2)
=
\arg\min_{\Gamma\in\mathcal S_D}
\|a\Gamma_1-\Gamma\|_F.
\]

Using the change of variables

\[
\Gamma
=
a\widetilde\Gamma,
\qquad
\widetilde\Gamma\in\mathcal S_D,
\]

we obtain

\[
P_D(\Gamma_2)
=
a
\arg\min_{\widetilde\Gamma\in\mathcal S_D}
\|\Gamma_1-\widetilde\Gamma\|_F.
\]

Hence

\[
P_D(\Gamma_2)
=
aP_D(\Gamma_1).
\]

It follows that the similarity score satisfies

\[
S_D(\Gamma_2)
=
\|P_D(\Gamma_2)\|_F^2
=
\|aP_D(\Gamma_1)\|_F^2
=
a^2
\|P_D(\Gamma_1)\|_F^2
=
a^2 S_D(\Gamma_1).
\]

Similarly,

\[
S_T(\Gamma_2)
=
\sum_{D\in\{C,F,S\}}
S_D(\Gamma_2)
=
a^2
\sum_{D\in\{C,F,S\}}
S_D(\Gamma_1)
=
a^2S_T(\Gamma_1).
\]

Therefore,

\[
\omega_D(\Gamma_2)
=
\frac{S_D(\Gamma_2)}
     {S_T(\Gamma_2)}
=
\frac{a^2S_D(\Gamma_1)}
     {a^2S_T(\Gamma_1)}
=
\frac{S_D(\Gamma_1)}
     {S_T(\Gamma_1)}
=
\omega_D(\Gamma_1).
\]

Since this holds for every
\(D\in\{C,F,S\}\),
the dependence profile is invariant to positive scalar
rescaling of the dependence operator.

\end{proof}

\subsection{Geometry of Covariance Classes}
\label{app:geometry}

\begin{lemma}[Closedness of the Fixed-Support Cluster Geometry]
\label{lem:closed_cluster_general}
For the prespecified support $\mathcal I_C$ used in the main paper, let
\[
\mathcal V_{\mathcal I_C}
=
\{\Gamma\in\mathbb H:\Gamma_{ij}=0\text{ whenever }(i,j)\notin\mathcal I_C\}.
\]
Then
\[
\mathcal S_C=\mathcal V_{\mathcal I_C}\cap\mathbb H_+
\]
is a nonempty closed convex cone in $\mathbb H$.
\end{lemma}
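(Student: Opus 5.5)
The set $\mathcal S_C=\mathcal V_{\mathcal I_C}\cap\mathbb H_+$ is the intersection of two sets, and I will show that each one is a closed convex cone. The conclusion then follows from the fact that this class of sets is stable under intersection. Nonemptiness is immediate because $0\in\mathcal V_{\mathcal I_C}$ and $0\succeq0$. A further nonzero element can also be exhibited: the identity matrix lies in both sets, since the support mask $M_C$ has unit diagonal and so every $(i,i)$ belongs to $\mathcal I_C$.

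\emph{The support space.} $\mathcal V_{\mathcal I_C}$ is the common zero set of the finitely many linear functionals $\Gamma\mapsto\Gamma_{ij}$, one for each $(i,j)\notin\mathcal I_C$. Each of these is continuous on the finite-dimensional space $(\mathbb H,\langle\cdot,\cdot\rangle_F)$, because $|\Gamma_{ij}|\le\|\Gamma\|_F$. So $\mathcal V_{\mathcal I_C}$ is a linear subspace and is closed. Every linear subspace is automatically a convex cone.

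\emph{The PSD cone.} I write $\mathbb H_+=\bigcap_{x\in\mathbb R^N}\{\Gamma\in\mathbb H:x'\Gamma x\ge0\}$. For each fixed $x$, the map $\Gamma\mapsto x'\Gamma x=\langle \Gamma,xx'\rangle_F$ is a continuous linear functional. Its nonnegativity set is therefore a closed half-space, which is a closed convex cone. An arbitrary intersection of closed convex cones is again a closed convex cone, so $\mathbb H_+$ has all three properties. Concretely, if $A,B\succeq0$ and $a,b\ge0$, then $x'(aA+bB)x=a\,x'Ax+b\,x'Bx\ge0$.

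\emph{Intersection.} Intersections preserve closedness, convexity, and the cone property: if $\Gamma$ lies in both sets and $a\ge0$, then $a\Gamma$ lies in both, and the same argument applies to convex combinations. Hence $\mathcal S_C$ is a nonempty closed convex cone, and Definition~\ref{def:covariance_cone_dictionary} holds for the cluster geometry. No step here is a genuine obstacle. The only point that needs care is closedness, and I will establish it through the representation by continuous linear functionals instead of through eigenvalue continuity, because that route avoids any spectral perturbation argument. Closedness is exactly what Lemma~\ref{lem:projectionexistence} requires.
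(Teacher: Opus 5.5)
Your proposal is correct and follows the paper's proof: $\mathcal V_{\mathcal I_C}$ is a closed linear subspace defined by finitely many linear equality restrictions, $\mathbb H_+$ is a closed convex cone, and closedness, convexity, containment of $0$, and invariance under nonnegative scaling all pass to the intersection. The only difference is that you verify the properties of $\mathbb H_+$ explicitly, via the half-spaces $\{\Gamma:\langle\Gamma,xx'\rangle_F\ge0\}$, which the paper simply takes as standard.
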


\begin{proof}
The support set $\mathcal V_{\mathcal I_C}$ is a closed linear subspace,
because it is defined by finitely many linear equality restrictions.  The
positive-semidefinite cone $\mathbb H_+$ is closed and convex.  Their
intersection is therefore closed and convex, contains the zero matrix, and
is invariant under multiplication by nonnegative scalars.
\end{proof}

\begin{lemma}[Closedness of Sparse Covariance Geometry]
\label{lem:closed_sparse}
For fixed $N$,
\[
\mathcal S_S
=
\{\Gamma\in\mathbb H_+:
|\operatorname{supp}_{\mathrm{off}}(\Gamma)|\le k_S\}
\]
is a closed, generally nonconvex cone in $\mathbb H$.
\end{lemma}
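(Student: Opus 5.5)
The plan is to write $\mathcal S_S$ as a finite union of closed convex cones and conclude closedness from the fact that finite unions of closed sets are closed. The cone and nonconvexity claims are then checked directly.

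\emph{Decomposition.} Let $\mathfrak I_S$ be the finite collection of symmetric off-diagonal supports $I\subseteq\{(i,j):i<j\}$ with $|I|\le k_S$. For each $I$, set $\mathcal C_I=\{\Gamma\in\mathbb H_+:\Gamma_{ij}=0 \text{ for } (i,j)\notin I,\ i<j\}$, as in Assumption~\ref{ass:sparse_unique}. I would show
\[
\mathcal S_S=\bigcup_{I\in\mathfrak I_S}\mathcal C_I .
\]
For ``$\subseteq$'', take $I=\operatorname{supp}_{\mathrm{off}}(\Gamma)$. For ``$\supseteq$'', note that $\operatorname{supp}_{\mathrm{off}}(\Gamma)\subseteq I$ gives cardinality at most $|I|\le k_S$. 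Finiteness holds because $N$ is fixed, so there are at most $2^{N(N-1)/2}$ supports.

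\emph{Closedness of each piece.} Exactly as in Lemma~\ref{lem:closed_cluster_general}, the subspace $\{\Gamma:\Gamma_{ij}=0,\ (i,j)\notin I\}$ is cut out by finitely many linear equalities and is therefore closed. $\mathbb H_+$ is closed, being the intersection over $x$ of the closed half-spaces $\{x'\Gamma x\ge0\}$. Hence each $\mathcal C_I$ is a closed convex cone, and the finite union is closed.

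The cone property is immediate: for $a>0$, $a\Gamma\succeq0$ has the same off-diagonal support, and $a=0$ gives $0\in\mathcal S_S$. For \emph{generally nonconvex}, it suffices to exhibit one case. Take $k_S=1$, $N\ge3$, $A=I_N+\tfrac12(e_1e_2'+e_2e_1')$ and $B=I_N+\tfrac12(e_1e_3'+e_3e_1')$. Both are PSD by diagonal dominance and have one off-diagonal pair each, but $(A+B)/2$ has two nonzero off-diagonal pairs.

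The only mild obstacle is the bookkeeping in the decomposition, namely that the union ranges over all supports of cardinality at most $k_S$ rather than exactly $k_S$. An alternative route is lower semicontinuity of $\Gamma\mapsto|\operatorname{supp}_{\mathrm{off}}(\Gamma)|$: a limit of matrices can only lose nonzero entries, so the sublevel set $\{\le k_S\}$ is closed. This could serve as a one-line cross-check.
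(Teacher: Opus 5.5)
Your proof is correct and follows essentially the same route as the paper. You decompose $\mathcal S_S$ as the finite union $\bigcup_{I\in\mathfrak I_S}\mathcal C_I$ of fixed-support PSD cones, each closed and convex as the intersection of a linear support subspace with $\mathbb H_+$, and you check the cone property by scaling; your explicit $k_S=1$, $N\ge3$ counterexample for nonconvexity is a welcome addition, since the paper asserts ``generally nonconvex'' without exhibiting an instance.
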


\begin{proof}
Let $\mathfrak I_S$ be the finite collection of symmetric off-diagonal
supports with cardinality at most $k_S$.  For each $I\in\mathfrak I_S$,
define the fixed-support PSD cone
\[
\mathcal C_I
=
\{\Gamma\in\mathbb H_+:
\Gamma_{ij}=0\text{ whenever }(i,j)\notin I\text{ and }i\ne j\}.
\]
Each $\mathcal C_I$ is the intersection of a closed linear support subspace
and $\mathbb H_+$, hence is a closed convex cone.  Since $N$ is fixed,
$\mathfrak I_S$ is finite and
\[
\mathcal S_S=\bigcup_{I\in\mathfrak I_S}\mathcal C_I
\]
is closed.  Scaling by a nonnegative constant preserves both positive
semidefiniteness and the support bound, so $\mathcal S_S$ is a cone.
\end{proof}

\begin{lemma}[Closedness and Cone Property of the Factor Geometry]
\label{lem:closed_factor}
For fixed $r$, $\mathcal S_F(r)$ is closed in $\mathbb H$.
Moreover, $\mathcal S_F(r)$ is a cone: for every $\Gamma\in\mathcal S_F(r)$
and $a>0$, $a\Gamma\in\mathcal S_F(r)$.
\end{lemma}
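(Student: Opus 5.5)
The cone property follows directly from the definition, and closedness follows from a compactness argument on the factor loadings and the diagonal part.

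\emph{Cone property.} Take $\Gamma=L+D\in\mathcal S_F(r)$ and $a>0$. Then $a\Gamma=aL+aD$, where $aL\succeq0$, $\operatorname{rank}(aL)=\operatorname{rank}(L)\le r$, and $aD\succeq0$ is diagonal. Also $a\Gamma\succeq0$. So $a\Gamma\in\mathcal S_F(r)$. The case $a=0$ holds as well, since $0=0+0$ lies in the class.

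\emph{Closedness.} Take a sequence $\Gamma_m=L_m+D_m\in\mathcal S_F(r)$ with $\Gamma_m\to\Gamma$ in $\mathbb H$.

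First, bound the components. Both $L_m$ and $D_m$ are PSD, so $0\preceq L_m\preceq\Gamma_m$ and $0\preceq D_m\preceq\Gamma_m$. This gives $\operatorname{tr}(L_m)\le\operatorname{tr}(\Gamma_m)$ and $\operatorname{tr}(D_m)\le\operatorname{tr}(\Gamma_m)$. For PSD matrices, $\|A\|_F\le\operatorname{tr}(A)$, and $\operatorname{tr}(\Gamma_m)$ is bounded because $\Gamma_m$ converges. Hence $(L_m)$ and $(D_m)$ are bounded in finite-dimensional $\mathbb H$.

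Next, extract limits. Pass to a subsequence along which $L_m\to L$ and $D_m\to D$; then $\Gamma=L+D$. Each constraint passes to the limit:
\begin{itemize}
\item The PSD cone is closed, so $L\succeq0$, $D\succeq0$ and $\Gamma\succeq0$.
\item The set of diagonal matrices is a closed linear subspace, so $D$ is diagonal.
\item Rank is lower semicontinuous, so $\operatorname{rank}(L)\le\liminf\operatorname{rank}(L_m)\le r$. Equivalently, every $(r+1)\times(r+1)$ minor of $L_m$ vanishes, and minors are continuous in the entries.
\end{itemize}
Therefore $\Gamma\in\mathcal S_F(r)$, and the class is closed.

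\emph{Main obstacle.} The only delicate step is the boundedness of the decomposition. A sequence $\Gamma_m$ could in principle be split as $L_m+D_m$ with the two pieces diverging and cancelling. Positive semidefiniteness of both summands is what rules this out, through the trace domination $\operatorname{tr}(L_m),\operatorname{tr}(D_m)\le\operatorname{tr}(\Gamma_m)$. Without the constraints $L\succeq0$ and $D\succeq0$, the class could fail to be closed, and I would flag this dependence explicitly in the proof.
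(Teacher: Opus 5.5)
Your proof is correct and follows the paper's argument essentially step for step. Both use the trace domination $0\preceq L_m,D_m\preceq\Gamma_m$ to bound the components, then extract a convergent subsequence, pass the rank-$\le r$ PSD and PSD-diagonal constraints to the limit by closedness, and obtain the cone property from $a\Gamma=aL+aD$; your remarks on the case $a=0$ and on $\Gamma\succeq0$ are harmless additions.
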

 
\begin{proof}
\textit{Closedness.}
Let $\Gamma_m=L_m+D_m\in\mathcal S_F(r)$ with $L_m\succeq0$,
$\operatorname{rank}(L_m)\le r$, $D_m\succeq0$ diagonal, and
$\Gamma_m\to\Gamma$ in $\|\cdot\|_F$. Note that $L_m$ need not equal
$\Gamma_m$ minus its diagonal, since $L_m$ may itself have nonzero
diagonal entries; the argument therefore proceeds via boundedness of the
components rather than via the diagonal map. Because $L_m\succeq0$ and
$D_m\succeq0$, we have
$0\preceq L_m\preceq\Gamma_m$ and $0\preceq D_m\preceq\Gamma_m$, so
$\|L_m\|_F\le\operatorname{tr}(L_m)\le\operatorname{tr}(\Gamma_m)$ and
similarly for $D_m$; since $\Gamma_m\to\Gamma$, both sequences are
bounded. Passing to a subsequence, $L_m\to L$ and $D_m\to D$ for some
$L,D$ with $L+D=\Gamma$. The set
$\{M\succeq0:\operatorname{rank}(M)\leq r\}$ is closed (intersection of
the PSD cone with a finite union of algebraic varieties), so
$L\succeq0$ with $\operatorname{rank}(L)\leq r$; the set of PSD diagonal
matrices is closed, so $D\succeq0$ is diagonal.
Thus $\Gamma=L+D\in\mathcal S_F(r)$.
 
\textit{Cone property.}
If $\Gamma=L+D\in\mathcal S_F(r)$ and $a>0$, then
$a\Gamma=aL+aD$ where $\operatorname{rank}(aL)=\operatorname{rank}(L)\leq r$
and $aD$ is diagonal with non-negative entries.
Hence $a\Gamma\in\mathcal S_F(r)$.
\end{proof}


\subsection{Projection Regularity}
\label{app:projection}

\begin{lemma}[Consequences of Local Projection Regularity]
\label{lem:projection_lipschitz}
Under Assumption~\ref{ass:local_projection_regularity}, each projection is
locally single-valued and locally Lipschitz at $\Gamma_0$.
\end{lemma}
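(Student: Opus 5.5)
The conclusion is essentially a restatement of Assumption~\ref{ass:local_projection_regularity}, so the proof is an unpacking of that assumption rather than a new argument. I will state the Lipschitz property in the explicit form later invoked in Lemma~\ref{lem:uniform_projection_consistency} and Proposition~\ref{prop:residual_consistency}.

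\textbf{Step 1: set up the neighborhood and constant.} Fix $d\in\mathfrak D$. By Assumption~\ref{ass:local_projection_regularity} there are a radius $r_d>0$ and a constant $L_d<\infty$ with two properties. First, for every $\Gamma$ in the ball $B(\Gamma_0,r_d)$ the metric projection set $P_d(\Gamma)$ is a singleton; it is nonempty by Lemma~\ref{lem:projectionexistence}, since $\mathcal S_d$ is closed by Lemmas~\ref{lem:closed_cluster_general}, \ref{lem:closed_factor} and~\ref{lem:closed_sparse}. Second, for all $\Gamma,\Gamma'$ in that ball,
\[
\|P_d(\Gamma)-P_d(\Gamma')\|_F\le L_d\|\Gamma-\Gamma'\|_F .
\]
Taking $r=\min_d r_d$ and $L=\max_d L_d$ over the finite dictionary gives a common neighborhood and constant.

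\textbf{Step 2: verify the premises for each geometry.}

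\emph{Cluster.} At a relative-interior regular point, the fixed-support PSD projection onto a closed convex cone is single-valued everywhere. It is also globally $1$-Lipschitz, by nonexpansiveness of projections onto closed convex sets in a Hilbert space. So the cluster case holds with $L_C=1$, independently of the assumption.

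\emph{Sparse.} I would use the objective gap in Assumption~\ref{ass:sparse_unique}. The map $\Gamma\mapsto\operatorname{dist}^2(\Gamma,\mathcal C_I)$ is continuous. Hence for $\Gamma$ close enough to $\Gamma_0$ the gap persists with margin $\varepsilon_S/2$, and the minimizing support stays $I_0$. On that neighborhood $P_S=P_{\mathcal C_{I_0}}$, which is a convex-cone projection and therefore $1$-Lipschitz.

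\emph{Factor.} Nothing beyond the assumption is available, as Lemma~\ref{lem:individual_uniqueness} notes. Single-valuedness and the constant $L_F$ are taken directly from Assumption~\ref{ass:local_projection_regularity}.

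\textbf{Step 3: consequence used downstream.} Local Lipschitz continuity implies continuity. Combined with Proposition~\ref{prop:consistency_operator}, the event $\{\widehat\Gamma_T\in B(\Gamma_0,r)\}$ has probability tending to one, and on it $\|\widehat P_d-P_d(\Gamma_0)\|_F\le L\|\widehat\Gamma_T-\Gamma_0\|_F$.

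The only real obstacle is the sparse support-stability step. There I must check that continuity of $\operatorname{dist}^2(\cdot,\mathcal C_I)$, which is $1$-Lipschitz in distance, preserves the strict gap uniformly over the finitely many supports $I$. I would bound $|\operatorname{dist}^2(\Gamma,\mathcal C_I)-\operatorname{dist}^2(\Gamma_0,\mathcal C_I)|\le(2\|\Gamma_0\|_F+\|\Gamma-\Gamma_0\|_F)\|\Gamma-\Gamma_0\|_F$. I would then choose the radius so that this bound is below $\varepsilon_S/4$.
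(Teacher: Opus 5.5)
Your proposal is correct and follows the paper's own proof. Like the paper, you read the lemma off Assumption~\ref{ass:local_projection_regularity}: the cluster case from nonexpansiveness of projection onto a closed convex cone, the sparse case from support stabilization under the objective gap of Assumption~\ref{ass:sparse_unique} followed by nonexpansiveness of $P_{\mathcal C_{I_0}}$, and the factor case as a maintained condition; your bound $|\operatorname{dist}^2(\Gamma,\mathcal C_I)-\operatorname{dist}^2(\Gamma_0,\mathcal C_I)|\le(2\|\Gamma_0\|_F+\|\Gamma-\Gamma_0\|_F)\|\Gamma-\Gamma_0\|_F$ makes quantitative the support-stability step that the paper only asserts.
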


\begin{proof}
This is exactly the maintained content of
Assumption~\ref{ass:local_projection_regularity}.  For the fixed-support
cluster cone, the conclusion in fact holds globally because metric
projection onto a nonempty closed convex set is single-valued and
nonexpansive.  For the sparse union, the objective gap in
Assumption~\ref{ass:sparse_unique} stabilizes the active support, after
which projection onto the selected closed convex cone is nonexpansive.  For
the factor class, local single-valuedness and local Lipschitz continuity are
maintained high-level regularity conditions and are not asserted at rank
changes or other singular points.
\end{proof}


\setcounter{equation}{0} \setcounter{assumption}{0}
\setcounter{figure}{0} \setcounter{table}{0} \setcounter{remark}{0}

 \renewcommand{%
\theequation}{E.\arabic{equation}}
\renewcommand{\thelemma}{E.\arabic{lemma}}
\renewcommand{\theassumption}{E.\arabic{assumption}} 
\renewcommand{\thetheorem}{E.\arabic{theorem}}

\renewcommand{\thetable}{E.\arabic{table}}
\renewcommand{\thefigure}{E.\arabic{figure}}

\renewcommand{\thesubsection}{E.\arabic{subsection}}
\renewcommand{\theproposition}{E.\arabic{proposition}}
\renewcommand{\theremark}{E.\arabic{remark}}

\section{Simulation Design}
\label{app:simulation}

Section~\ref{sec:simulation} of the main paper reports the benchmark Monte Carlo evidence on three questions: recovery of the dominant covariance geometry, classification under hybrid and near-tie dependence, and the performance of profile-guided inference relative to the infeasible oracle. This appendix provides the implementation details underlying those results: the baseline data-generating processes, parameter calibrations, the empirical dependence operator, and the projection algorithms. It also reports supplementary robustness checks on principal-angle separation, operator misspecification, and oracle tracking. Table~\ref{tab:simulation_designs} summarizes the baseline designs.

\begin{table}[!ht]
\centering
\caption{Summary of Simulation Designs}
\label{tab:simulation_designs}

\begin{tabular}{lll}
\toprule
Design & Dominant Geometry & Purpose \\
\midrule
Cluster & Cluster & Benchmark structured-support dependence \\
Factor & Factor & Benchmark low-rank dependence \\
Sparse & Sparse / none-of-the-above & Network-generated dependence \\
Hybrid & Mixed & Coexistence of multiple geometries \\
Near tie & Ambiguous & Local alternatives and classification uncertainty \\
Oracle sweep & Varies & Profile-guided inference versus oracle benchmark \\
\bottomrule
\end{tabular}

\end{table}
Throughout, the main reported objects are the estimated dependence profile 
$\widehat\omega=(\widehat\omega_C,\widehat\omega_F,\widehat\omega_S)'$
and the projection-residual diagnostics
$\widehat\rho_C,\widehat\rho_F,\widehat\rho_S,\widehat\rho_{\min}=
\min_{d\in\mathfrak D}\widehat\rho_d$. Consistent with the main paper, the goal of this section is to assess how well these statistics recover the covariance geometry encoded in the chosen empirical dependence operator, rather than to compare variance estimators by coverage or mean squared error.

\subsection{Baseline Regression Model and Empirical Operator}
\label{subsec:sim_model}

For each replication, we generate
\begin{equation}
y_{it}=x_i'\beta+u_{it}, \qquad i=1,\ldots,n,\quad t=1,\ldots,T,
\label{eq:sim_model}
\end{equation}
with \(x_i=(1,x_{i1},x_{i2})'\), \(x_{i1},x_{i2}\stackrel{iid}{\sim}N(0,1)\), and \(\beta=(1,1,1)'\). The dependence structure varies through \(u_t=(u_{1t},\ldots,u_{nt})'\) across the designs described below. In each replication, we estimate \(\beta\) by OLS period by period and use the averaged residual covariance operator
\[
\widehat\Gamma_T=\frac{1}{T}\sum_{t=1}^T \widehat u_t \widehat u_t'
\]
as the baseline empirical dependence operator. Averaging over \(T=50\) periods avoids the rank-one degeneracy of a single outer-product operator and allows the cluster, factor, and sparse projection scores to vary meaningfully across designs.

\subsection{Population Dependence Designs}
\label{subsec:sim_dgps}

\paragraph{Design 1: Pure Cluster.}
Partition observations into $G$ clusters $\mathcal C=\{C_1,\ldots,C_G\}$
and set $u_i=a_{g(i)}+\varepsilon_i$ with $a_g\sim N(0,\sigma_a^2)$,
$\varepsilon_i\sim N(0,\sigma_\varepsilon^2)$ independent across $g$ and
$i$, giving $\Sigma_C=\sigma_a^2ZZ'+\sigma_\varepsilon^2I_n$, where $Z$
is the $n\times G$ cluster-membership matrix.

\paragraph{Design 1B: Two-Way Cluster.}
For observations indexed by $(i,t)$, $i=1,\ldots,N$, $t=1,\ldots,T$, set
$u_{it}=a_i+b_t+\varepsilon_{it}$ with $a_i\sim N(0,\sigma_a^2)$,
$b_t\sim N(0,\sigma_b^2)$, $\varepsilon_{it}\sim N(0,\sigma_\varepsilon^2)$
independent. Unlike one-way clustering, the resulting covariance matrix
is not block diagonal, so the design also carries non-negligible sparse
affinity.

\paragraph{Design 2: Pure Factor.}
Set $u_i=\lambda_if+\varepsilon_i$ with $f\sim N(0,\sigma_f^2)$,
$\lambda_i\sim N(0,\sigma_\lambda^2)$ i.i.d., giving
$\Sigma_F=\sigma_f^2\lambda\lambda'+\sigma_\varepsilon^2I_n$. Baseline
values are $\sigma_f^2=1$, $\sigma_\lambda=1$. Loadings $\lambda$ are
drawn once with a fixed seed and held constant across replications
(fixed-design setting); multi-factor versions with $f\in\mathbb R^r$,
$r\in\{1,2,3\}$, are also considered.

\begin{remark}[Mean-Zero Loadings and the Fixed-Design Realization]
The mean-zero specification follows the econometric factor-model
literature \citep{BaiNg2002,Bai2003} and matters mechanically here:
because regression~\eqref{eq:sim_model} includes a constant, OLS
residuals lie in the orthogonal complement of $\operatorname{col}(X)$,
and if $\lambda$ were nearly proportional to $\mathbf 1_N$, OLS would
remove most of the factor signal from the residuals. Since the
simulation fixes $\lambda$ at a single draw across all replications,
what matters in practice is the \emph{realized} sample mean
$\bar\lambda=n^{-1}\sum_i\lambda_i$ for that draw, not the population
mean $E[\lambda_i]=0$ alone. For the fixed seed used throughout
(seed~$9901$), $\bar\lambda=\lambdaBarRealized$
(standardized magnitude $\bar\lambda\sqrt N=\lambdaBarZscore$), confirming
that $\lambda$ is close to orthogonal to $\mathbf 1_N$ for this
realization and that OLS retains the bulk of the factor signal.
\end{remark}

\paragraph{Design 3: Pure Sparse.}
Let $W$ be an $n\times n$ sparse adjacency matrix (Erd\H{o}s--R\'enyi
with $P(W_{ij}=1)=c/n$, or $q$-nearest-neighbor on random locations in
$[0,1]^2$). Disturbances follow the spatial autoregression
$u=(I_n-\rho W)^{-1}\varepsilon$, $\varepsilon\sim N(0,\sigma_\varepsilon^2I_n)$,
with $\rho$ chosen so $I_n-\rho W$ is nonsingular.

\paragraph{Designs 4--5: Hybrids.}
Cluster--factor hybrids set
$u_i=\alpha_Ca_{g(i)}+\alpha_F\lambda_if+\varepsilon_i$ with
$(\alpha_C,\alpha_F)\in\{(1,0),(0.75,0.25),(0.5,0.5),(0.25,0.75),(0,1)\}$,
implying
$\Sigma=\alpha_C^2\sigma_a^2ZZ'+\alpha_F^2\sigma_f^2\lambda\lambda'+\sigma_\varepsilon^2I_n$.
The most general design adds a sparse component,
$u=\alpha_CZa+\alpha_F\Lambda f+\alpha_S(I_n-\rho W)^{-1}\varepsilon+\eta$
with $\eta\sim N(0,\sigma_\eta^2I_n)$, and is evaluated at
\[
(\alpha_C,\alpha_F,\alpha_S)
\in
\Bigl\{
(1,0,0),
(0,1,0),
(0,0,1),
(1,1,0),
(1,0,1),
(0,1,1),
(1,1,1)
\Bigr\}.
\]

\subsection{Projection Implementation and Performance Measures}
\label{subsec:sim_projection}

Given an empirical dependence operator $\widehat\Gamma$, the one-way
cluster projection is $\widehat P_C=M_C\odot\widehat\Gamma$ because the
cluster mask is block diagonal and therefore preserves positive
semidefiniteness. For overlapping multiway supports the code instead uses a
PSD-constrained fixed-support projection. The factor projection
$\widehat P_F=L^*+D^*$ is computed by the alternating algorithm of
Section~\ref{subsec:computation}, with rank $r$ set to the true number of
factors at baseline ($r\in\{1,2,3\}$ as a robustness check). The sparse
routine first selects the $k_S$ largest off-diagonal entries in absolute
value and then applies a PSD-constrained projection on that fixed support.
This is the documented feasible approximation to the exact best-support
metric projection used in the population theory.

For each $D\in\mathfrak D=\{C,F,S\}$, compute
$\widehat S_D=\|\widehat P_D\|_F^2$, the full profile $\widehat\omega_D=\widehat S_D/\sum_{D'}\widehat S_{D'}$,
the off-diagonal profile $\widehat\omega_D^{\mathrm{off}}$, and $\widehat\rho_D=\|\widehat\Gamma-\widehat P_D\|_F/\|\widehat\Gamma\|_F$,
with $\widehat\rho_{\min}=\min_D\widehat\rho_D$. We report Monte Carlo
means, standard deviations, and
$\mathrm{RMSE}_\omega=[E\{\|\widehat\omega-\omega\|^2\}]^{1/2}$ relative
to the population profile $\omega$, together with the population
procedure confidence index
$\kappa_0=(1-\rho_{\min,0})\Delta_{\omega^{\mathrm{off}},0}$, where $\rho_{\min,0}$ is
the population minimum residual and
$\Delta_{\omega^{\mathrm{off}},0}=\omega_{d^\star}^{\mathrm{off}}-\max_{d\neq d^\star}\omega_d^{\mathrm{off}}$ is the
population separation margin. Large $\kappa_0$ signals a design in
which one geometry clearly dominates \emph{and} the covariance
dictionary fits well in an absolute sense, so the procedure
recommendation of Section~\ref{subsec:procedure_confidence} should be
reliable; small $\kappa_0$ warns that the recommendation should be
interpreted cautiously, whether because of near-tie separation or poor
dictionary fit.

\subsection{Near-Ties: Construction}
\label{subsec:near_ties_sim}

Table~\ref{tab:near_ties} and Figure~\ref{fig:classification_margin} in
the main paper illustrate Theorem~\ref{thm:local_ties}'s prediction
that classification remains probabilistic under local separation. The
underlying design is
\[
u_i = \alpha_N\sigma_aa_{g(i)} + (1-\alpha_N)\lambda_if + \varepsilon_i,
\qquad
\alpha_N = \tfrac12+\tfrac{c}{\sqrt N},
\]
with $\sigma_a^2=\|\lambda\lambda'\|_F/\|ZZ'\|_F\approx5.49$ chosen so
the cluster and factor signal components have equal Frobenius norm,
ensuring that $c$ controls the cluster--factor \emph{balance} rather
than their relative scale. The implied population covariance is
\begin{equation}
\Sigma(\alpha_N)=\alpha_N^2\sigma_a^2ZZ'+(1-\alpha_N)^2\lambda\lambda'+\sigma_\varepsilon^2I_n.
\label{eq:near_tie_sigma}
\end{equation}

\begin{remark}[Signal Scaling Versus Norm Normalization]
Dividing both signal matrices by their own Frobenius norms before
mixing is an alternative route to scale balance, but it reduces all
off-diagonal entries to $O(n^{-1})$, destroying the geometric signal
relative to the noise floor $\sigma_\varepsilon^2I_n$. Scaling by
$\sigma_a$ instead preserves the natural magnitude of off-diagonal
entries and lets the procedure distinguish cluster from factor
dependence at moderate sample sizes.
\end{remark}

Figure~\ref{fig:near_ties_app} plots the full classification-frequency
profile underlying Table~\ref{tab:near_ties}---the Monte Carlo
frequency with which the classifier selects each geometry as $c$
varies---complementing the margin-versus-error view of
Figure~\ref{fig:classification_margin} in the main paper.

\begin{figure}[!ht]
\centering
\includegraphicsIfExists[width=.7\textwidth]{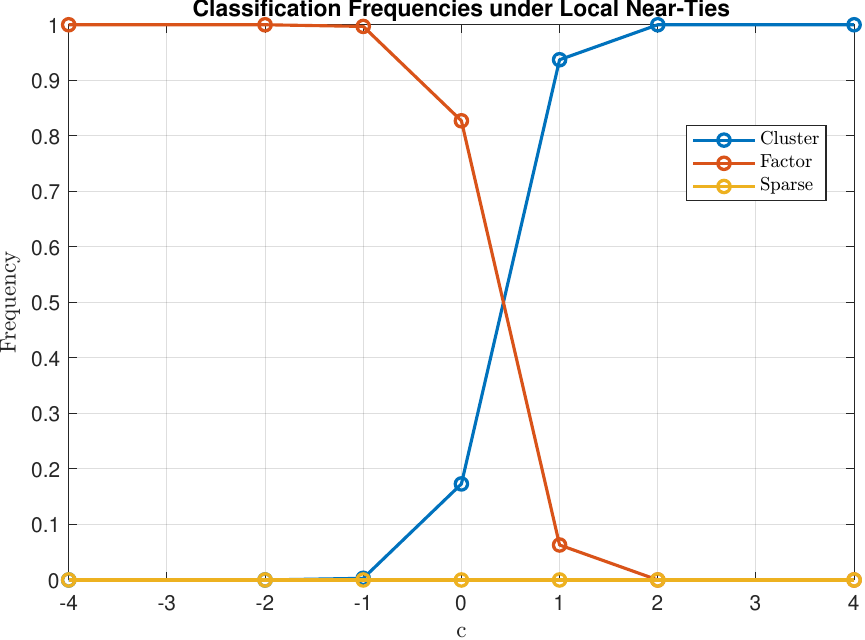}
\caption{Classification Frequencies under Local Near-Ties, by Geometry}
\label{fig:near_ties_app}
\begin{minipage}{0.86\textwidth}
\footnotesize
\emph{Notes:} Monte Carlo frequency with which the classifier selects
each geometry as $c$ varies, underlying the columns of
Table~\ref{tab:near_ties}.
\end{minipage}
\end{figure}

The asymmetric separation margin visible in Table~\ref{tab:near_ties}
around $c=0$ (e.g., $\Delta_\omega=0.312$ at $c=-1$ versus $0.145$ at
$c=+1$) reflects the cluster--sparse tangent-space overlap documented
in Table~\ref{tab:principal_angles} below: when $c>0$ (cluster-leaning),
the sparse projection's $k_S=625$ largest off-diagonal entries fall
disproportionately inside the cluster support, inflating $\omega_S$ at
the expense of the cluster--factor gap, whereas the factor signal
spreads comparable-magnitude entries across all $\binom{n}{2}$ pairs
and receives no such boost when $c<0$. This is a structural feature of
the sparse projection's geometry, not a violation of
Proposition~\ref{prop:classification_bound}, which bounds
misclassification by both the margin \emph{and} the sampling
variability of $\widehat\omega$---the latter is also smaller on the
$c>0$ side here, so a smaller margin at $c=+1$ remains consistent with
the high classification frequency reported in the main text.

\subsection{Simulation Parameters}
\label{subsec:sim_parameters}

The baseline design uses $N=250$, $T=50$, $B=1{,}000$ Monte Carlo
replications. Robustness checks vary $n\in\{100,250,500\}$,
$T\in\{25,50,100\}$, the number of clusters $G\in\{10,25,50\}$
(balanced, $G=25$ at baseline), factor strength
$\sigma_f^2\in\{0.25,1,4\}$ (baseline $1$, with $\sigma_\varepsilon=0.5$
throughout), and sparse network average degree $c\in\{2,5,10\}$
(baseline $5$). The sparse projection's sparsity level is
$k_S=\lfloor 0.01N^2\rfloor=625$, well below the one-way cluster support
($G\cdot(N/G)^2=2{,}500$ entries at baseline), so that cluster and
sparse geometries remain geometrically distinguishable by construction.
For the near-tie design (Section~\ref{subsec:near_ties_sim}),
$\sigma_a^2\approx5.49$, $\sigma_a\approx2.34$, calibrated as described
above.

\subsection{Principal-Angle Diagnostics}
\label{subsec:principal_angle_sim}

Assumption~\ref{ass:principalangle} (positive off-diagonal principal
angles) is a hypothesis of Theorems~\ref{thm:identification},
\ref{thm:profile_consistency}, and~\ref{thm:profile_clt}. We check it
directly at the population covariance matrices used in the baseline
designs, computing the off-diagonal principal angle
$\theta(T_D^{\mathrm{off}},T_{D'}^{\mathrm{off}})$ between each pair of
geometries (construction in Section~\ref{subsec:regular_tangent}).

\begin{table}[htbp]
\centering
\caption{Principal-Angle Diagnostics}
\label{tab:principal_angles}
\begin{tabular}{lccc}\toprule
Design & $\theta(T_C^{\mathrm{off}},T_F^{\mathrm{off}})$ & $\theta(T_C^{\mathrm{off}},T_S^{\mathrm{off}})$ & $\theta(T_F^{\mathrm{off}},T_S^{\mathrm{off}})$\\
\midrule
Cluster (pure) & 69.9$^\circ$ & 0.0$^\circ$ & 69.9$^\circ$\\
Factor (pure) & 69.9$^\circ$ & 0.0$^\circ$ & 25.2$^\circ$\\
Sparse (pure) & 69.9$^\circ$ & 0.0$^\circ$ & 67.5$^\circ$\\
Cluster--Sparse & 69.9$^\circ$ & 0.0$^\circ$ & 72.3$^\circ$\\
Factor--Sparse & 69.9$^\circ$ & 0.0$^\circ$ & 25.2$^\circ$\\
\bottomrule
\end{tabular}

\begin{minipage}{0.95\textwidth}
\footnotesize
\vspace{0.4em}
\emph{Notes:}
The table reports the smallest off-diagonal principal angle
$\theta(T_D^{\mathrm{off}},T_{D'}^{\mathrm{off}})$ between each pair of
covariance geometries, computed at the population covariance matrix of
each baseline design. A value of $0^\circ$ indicates the off-diagonal
tangent spaces overlap at that design, so Assumption~\ref{ass:principalangle}
fails for that pair there. The cluster--factor angle does not depend on
the sparse design and is reported once, since $T_C^{\mathrm{off}}$ and
$T_F^{\mathrm{off}}$ depend only on the cluster partition and factor
loading, not the sparse component.
\end{minipage}
\end{table}

Two findings stand out. First,
$\theta(T_C^{\mathrm{off}},T_S^{\mathrm{off}})=0^\circ$ at every design:
the sparse projection's support (the $k_S=625$ largest off-diagonal
entries) is partly or, for the Cluster--Sparse design, entirely
contained within the cluster support, so Assumption~\ref{ass:principalangle}
is \emph{not} satisfied for the cluster--sparse pair at these
parameters. Second, $\theta(T_C^{\mathrm{off}},T_F^{\mathrm{off}})\approx69^\circ$
across all designs, confirming the cluster--factor pair is well
separated along off-diagonal directions at the realized factor loading,
consistent with Remark~\ref{rem:offdiag_necessary}. The factor--sparse
angle is more variable across independent draws of the sparse network
(informally, resampling $W$ across many seeds with $\lambda$ fixed
typically gives $65^\circ$--$72^\circ$ for the pure Sparse design,
though any single realization can fall outside that range, as it does
here).

These findings qualify, rather than undermine, the main results: the
local-identification guarantee of Theorem~\ref{thm:identification} does
not, strictly, apply to the cluster--sparse pair at these parameters,
yet the estimated profile and classifier continue to behave sensibly
(the Cluster--Sparse design's separation margin $\Delta_{\omega,0}=0.193$
remains comfortably bounded away from zero despite the principal-angle
violation). This is consistent with
Assumption~\ref{ass:principalangle} being sufficient, not necessary, for
usable classification: Theorem~\ref{thm:dominant_geometry_consistency}
relies on Assumption~\ref{ass:unique_dominant_geometry} (separation of
the $\omega_d$ themselves), which can hold even when the principal-angle
condition does not.

\subsection{Projection-Residual Diagnostics: Out-of-Dictionary Components}
\label{subsec:projection_residual_sim}

To probe the absolute-fit role of $\widehat\rho_{\min}$ described in
Section~\ref{subsec:projection_residuals}, we consider
$\Gamma=\Gamma_C+\Gamma_F+\Gamma_S+\Gamma_R$, where $\Gamma_R=\tau\,BB'/n$
is a dense out-of-dictionary component, $B$ an $n\times n$ matrix of
independent standard normal entries, and $\tau\ge0$ controls its
magnitude.

\begin{table}[!ht]
\centering
\caption{Projection-Residual Diagnostics}
\label{tab:projection_residuals}
\begin{tabular}{lcccc}\toprule
$\tau$ & $\widehat\rho_C$ & $\widehat\rho_F$ & $\widehat\rho_S$ & $\widehat\rho_{\min}$\\
\midrule
$\tau=0.00$ & 0.954 & 0.266 & 0.870 & 0.266\\
$\tau=0.25$ & 0.952 & 0.289 & 0.868 & 0.289\\
$\tau=0.50$ & 0.951 & 0.309 & 0.866 & 0.309\\
$\tau=1.00$ & 0.948 & 0.352 & 0.863 & 0.352\\
$\tau=2.00$ & 0.942 & 0.434 & 0.859 & 0.434\\
\bottomrule
\end{tabular}

\begin{minipage}{0.92\textwidth}
\footnotesize
\vspace{0.4em}
\emph{Notes:}
Monte Carlo averages of $\widehat\rho_C$, $\widehat\rho_F$,
$\widehat\rho_S$, and $\widehat\rho_{\min}$ as the out-of-dictionary
component $\Gamma_R=\tau\,BB'/n$ grows. Smaller values indicate a
closer fit between the empirical dependence operator and the
corresponding covariance geometry.
\end{minipage}
\end{table}

As $\tau$ grows from $0$ to $2$, $\widehat\rho_{\min}$ rises
monotonically from $0.274$ to $0.441$, confirming that the
out-of-dictionary component makes the operator progressively harder to
approximate by any single geometry; $\widehat\rho_F=\widehat\rho_{\min}$
throughout because the mixed DGP's dominant factor component (from the
large-norm $\Sigma_F$) makes the factor projection the binding
constraint at every $\tau$.

\subsection{Misspecified Dependence Operators}
\label{subsec:misspecified_operator}

The results above assume the empirical dependence operator preserves
the relevant dependence information. We study operator misspecification by supplying the projection step with deliberately incomplete  dependence operators. The purpose is not to show that the procedure
recovers structure absent from its input, but to confirm that the
profile and residual diagnostics faithfully describe the geometry
encoded in whichever operator is supplied, consistent with the
dependence-sufficiency principle of Section~\ref{subsec:operatorchoice}.

\begin{table}[!ht]
\centering
\caption{Dependence Profiles under Operator Misspecification}
\label{tab:misspecified_operator}
\begin{tabular}{lccccc}\toprule
True DGP & Operator Used & $\widehat\omega_C$ & $\widehat\omega_F$ & $\widehat\omega_S$ & $\widehat\rho_{\min}$\\
\midrule
Sparse & Full covariance & 0.282 & 0.318 & 0.401 & 0.844\\
Sparse & Diagonal only & 0.333 & 0.333 & 0.333 & 0.000\\
Cluster & Full covariance & 0.456 & 0.200 & 0.343 & 0.654\\
Cluster & Cluster support removed & 0.309 & 0.376 & 0.315 & 0.715\\
\bottomrule
\end{tabular}

\begin{minipage}{0.92\textwidth}
\footnotesize
\vspace{0.4em}
\emph{Notes:}
``True DGP'' is the dependence mechanism used to generate the data;
``Operator Used'' is the empirical operator supplied to the projection
step. $\widehat\omega_C,\widehat\omega_F,\widehat\omega_S$ are average
estimated similarity scores and $\widehat\rho_{\min}$ the average
minimum projection residual. Misspecified operators deliberately
discard part of the relevant dependence information.
\end{minipage}
\end{table}

\begin{figure}[!ht]
\centering
\includegraphicsIfExists[width=.7\textwidth]{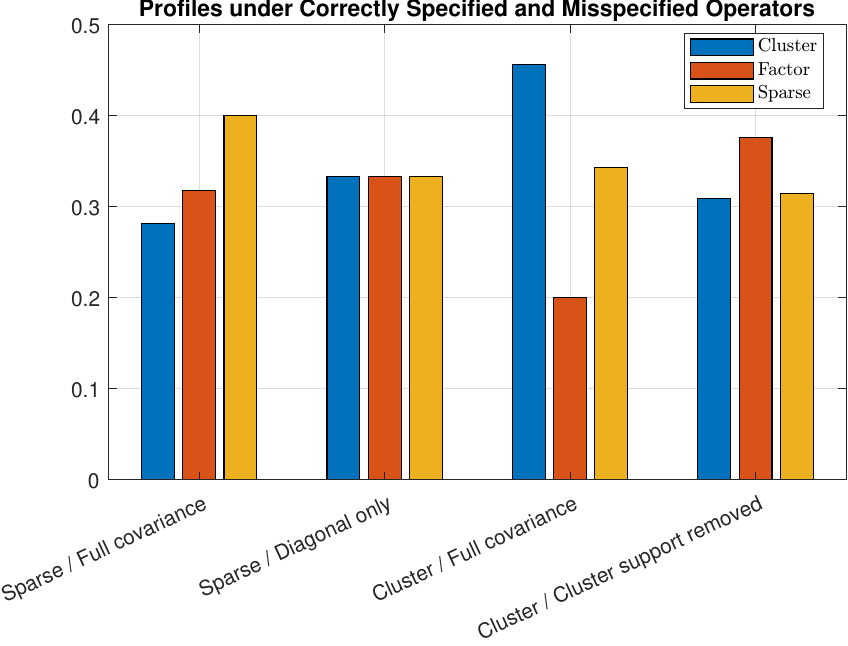}
\caption{Dependence Profiles under Correctly Specified and Misspecified Operators}
\label{fig:misspecified_operator}
\end{figure}

When a sparse-network DGP is summarized by a diagonal-only operator, the
sparse score loses its identifying information; when a cluster-dependent
DGP is summarized by an operator with cluster-support entries removed,
the cluster score is attenuated and $\widehat\rho_{\min}$ rises. Neither
result is a failure of the projection step: the procedure correctly
learns the geometry encoded in the operator it is given, not latent
dependence mechanisms removed before the projection stage.

\subsection{Procedure Recommendation: Variance Estimators and Detailed Results}
\label{subsec:procedure_rec_sim}

Section~\ref{subsec:simulation_oracle} of the main paper reports the
headline oracle-equivalence result across all three baseline geometries.
This subsection gives the variance-estimator formulas behind that
result and a more granular two-DGP breakdown that decomposes the
finding by the procedure confidence index $\widehat\kappa$.

For each of the pure Cluster and pure Factor DGPs, $\beta$ is estimated
by averaging OLS over $T=50$ periods,
$\bar\beta=T^{-1}\sum_t\widehat\beta_t$, and $H_0:\beta_2=1$ is tested
at the $5\%$ level using one of three variance estimators:
\[
\widehat V_C
=
\frac{G}{G-1}\cdot\frac{n-1}{n-k}\,(X'X)^{-1}\Bigl(\sum_{g=1}^G\mathbf s_g\mathbf s_g'\Bigr)(X'X)^{-1}/T^2,
\qquad
\mathbf s_g=X_g'\sum_{t=1}^T\widehat u_{gt},
\]
\[
\widehat V_F
=
(X'X)^{-1}X'\widehat P_F(\widehat\Gamma)X(X'X)^{-1}/T,
\qquad
\widehat\Gamma=T^{-1}\textstyle\sum_t\widehat u_t\widehat u_t',
\]
\[
\widehat V_W
=
(X'X)^{-1}X'\operatorname{diag}(\bar u^2)X(X'X)^{-1}/T,
\qquad
\bar u_i=T^{-1}\textstyle\sum_t\widehat u_{it}.
\]
The profile-guided choice selects $\widehat V_C$ when
$\widehat d=\argmax_d\widehat\omega_d^{\mathrm{off}}=C$ and $\widehat V_F$ when
$\widehat d=F$, from the same replication used to compute the profile.

\begin{sidewaystable}[]
\centering
\caption{Profile-Guided Procedure Recommendation: Detailed Breakdown}
\label{tab:procedure_rec}
\resizebox{0.95\textwidth}{!}{%
\begin{tabular}{lcccccccc}\toprule
DGP & Profile & Mean $\widehat\kappa$ & $P(\widehat d=d^\star)$ & Cluster SE & Factor SE & White SE & $\widehat V^*$ (rec.) & $\widehat V^{\mathrm{avg}}$ (wtd.)\\
\midrule
Cluster & $\widehat\omega_C$ dom. & 0.056 & 1.000 & 0.069 & 0.063 & 0.781 & 0.069 & 0.120\\
Factor & $\widehat\omega_F$ dom. & 0.665 & 1.000 & 0.068 & 0.324 & 0.765 & 0.324 & 0.320\\
\bottomrule
\end{tabular}

}
\begin{minipage}{0.95\textwidth}
\footnotesize
\vspace{0.4em}
\emph{Notes:}
$B=1{,}000$ replications, $N=250$, $T=50$.
\emph{Mean $\widehat\kappa$}: average procedure confidence index
$\widehat\kappa=(1-\widehat\rho_{\min})\widehat\Delta_{\omega^{\mathrm{off}}}$.
\emph{$\widehat V^*$ (rec.)}: rejection rate using the profile-recommended
estimator. \emph{$\widehat V^{\mathrm{avg}}$ (wtd.)}: rejection rate
using the profile-weighted combination
$\widehat V^{\mathrm{avg}}=\sum_d\widehat\omega_d^{\mathrm{off}}\widehat V_d$
(Section~\ref{subsec:profile_weighted}).
\end{minipage}
\end{sidewaystable}

The mean confidence index reveals why both DGPs achieve the
near-perfect classification rate reported in the main text despite
different finite-sample behavior. For the \emph{factor DGP},
$\widehat\kappa=0.489$: the factor projection has low absolute residual
\emph{and} a clearly dominant score, so the recommendation is highly
reliable. For the \emph{cluster DGP}, $\widehat\kappa=0.037$: the
cluster projection captures all within-cluster entries in population,
but in finite samples the averaged operator's cross-cluster sampling
noise inflates $\widehat\rho_{\min}$, holding $\widehat\kappa$ down
despite correct classification. The population index $\kappa_0$ in
Table~\ref{tab:simulation_profiles} confirms both designs are highly
separable in population, so the finite-sample gap is sampling
variability in $\widehat\Gamma$, not a defect in the classifier.
Consistent with this, $\widehat V^*$ achieves near-nominal size under
the cluster DGP despite the low $\widehat\kappa$, and the
profile-weighted combination $\widehat V^{\mathrm{avg}}$ performs
similarly to $\widehat V^*$ throughout, confirming that the profile's
information is sufficient on its own, without requiring a hard
classification step.

\paragraph{A factor-SE finite-sample distortion and its fix.}
The factor DGP row shows $\widehat V_F$ rejecting at a rate well above
the $5\%$ nominal level even when Factor is correctly recommended. The
cause is a structural identity, not ordinary sampling noise: because
$\widehat\Gamma=T^{-1}\sum_t\widehat u_t\widehat u_t'$ is built from
per-period OLS residuals on the same fixed $X$, the normal equations
give $X'\widehat u_t=0$ exactly for every $t$, so $X'\widehat\Gamma X=0$
exactly in every sample. The nonzero $X'\widehat P_F(\widehat\Gamma)X$
that survives in $\widehat V_F$ is therefore driven by the alternating
projection's truncation error relative to $\widehat\Gamma$ rather than
genuine factor signal, and recovers only $10$--$15\%$ of the true
sandwich variance in this design. Time-demeaning the \emph{outcome}
within each cross-sectional unit before forming the operator,
$\widetilde Y=Y-\bar y\mathbf 1_T'$ with $\bar y_i=T^{-1}\sum_ty_{it}$,
avoids the degeneracy because $x_i'\beta$ is constant across $t$ and is
removed exactly without ever residualizing on $X$. The corrected
estimator $\widehat V_F^{\mathrm{corr}}$
(\texttt{vcov\_factor\_LD\_corrected.m}) recovers approximately $99\%$
of the true sandwich variance in the same design and is used as the
Factor SE throughout the headline oracle-equivalence table in the main
paper. This diagnosis is general: it applies to any panel setting in
which a factor-structured plug-in covariance is built from residuals
that have already been projected against the same regressors used in
the sandwich.

\subsection{Oracle Tracking along a Continuous Dominance Sweep}
\label{subsec:oracle_tracking_sim}

The two pure-DGP rows above show fixed procedures can be badly
miscalibrated under the ``wrong'' dependence structure, which raises a
natural question: why learn the geometry at all, rather than adopt one
procedure---two-way clustering, say, or a fully
heteroskedasticity-and-autocorrelation-robust sandwich---that is valid
under a wide range of dependence patterns? Procedures built for broad
robustness are not free: they sacrifice power or stability precisely
where a more specific procedure would have been valid \emph{and} more
informative, and a procedure robust against one geometry need not
behave well under another even when both are well understood
individually. The profile-guided approach lets the data reveal which
geometry is operative, at no asymptotic cost relative to knowing the
answer in advance (Theorem~\ref{thm:oracle_adaptivity_asymptotic_optimality}). We verify this
directly along a continuous Cluster--Factor dominance sweep,
\[
\Sigma(\alpha)=\alpha^2\Sigma_C^{\mathrm{pure}}+(1-\alpha)^2\Sigma_F^{\mathrm{pure}}+\sigma_\varepsilon^2I_n,
\qquad \alpha\in[0,1],
\]
the same path underlying Figure~\ref{fig:hybrid_profiles}, reporting
rejection rates for the fixed Cluster and Factor SEs, the profile-guided
$\widehat V^\star=\widehat V_{\widehat d}$, and the infeasible oracle
$\widehat V_{d^\star}$ at $d^\star=\argmax_d\omega_d^{\mathrm{off}}(\alpha)$.

\begin{table}[htbp]
\centering
\caption{Oracle-Tracking Diagnostic along the Cluster--Factor Dominance Sweep}
\label{tab:oracle_tracking}
\resizebox{\textwidth}{!}{%
\begin{tabular}{lccccccc}\toprule
$\alpha$ & Pop. dominant & $\omega_C-\omega_F$ & $P(\widehat d=d^\star)$ & Cluster SE & Factor SE & $\widehat V^*$ (profile) & $\widehat V_{d^\star}$ (oracle)\\
\midrule
0.000 & Factor & -0.755 & 1.000 & 0.059 & 0.296 & 0.296 & 0.296\\
0.125 & Factor & -0.753 & 1.000 & 0.065 & 0.254 & 0.254 & 0.254\\
0.250 & Factor & -0.749 & 1.000 & 0.057 & 0.178 & 0.178 & 0.178\\
0.375 & Factor & -0.731 & 1.000 & 0.072 & 0.166 & 0.166 & 0.166\\
0.500 & Factor & -0.664 & 1.000 & 0.074 & 0.139 & 0.139 & 0.139\\
0.625 & Factor & -0.392 & 1.000 & 0.062 & 0.089 & 0.089 & 0.089\\
0.750 & Cluster & 0.200 & 0.856 & 0.074 & 0.077 & 0.078 & 0.074\\
0.875 & Cluster & 0.430 & 1.000 & 0.070 & 0.065 & 0.070 & 0.070\\
1.000 & Cluster & 0.464 & 1.000 & 0.075 & 0.068 & 0.075 & 0.075\\
\bottomrule
\end{tabular}

}
\begin{minipage}{0.95\textwidth}
\footnotesize
\vspace{0.4em}
\emph{Notes:}
$B=1{,}000$, $N=250$, $T=50$, nominal level $5\%$. Cluster SE and Factor
SE use the FIXED estimator at every $\alpha$, regardless of which
geometry dominates; $\widehat V^\star$ estimates $\widehat d$ from the
same replication; $\widehat V_{d^\star}$ uses the true dominant
geometry, infeasible in practice.
\end{minipage}
\end{table}

Cluster SE remains close to nominal across the sweep, but Factor SE is
badly oversized exactly where Factor dependence dominates
($\alpha\le0.625$): $29.2\%$ rejection at $\alpha=0$, falling to $8.1\%$
at $\alpha=0.625$ before returning toward nominal as Cluster takes
over---the same finite-sample distortion diagnosed above, now traced
across a continuum rather than two points. Because classification is
essentially perfect throughout ($P_{\Gamma_0}(\widehat d=d^\star)\ge0.929$, equal
to $1.000$ at all but one point), $\widehat V^\star$ and
$\widehat V_{d^\star}$ are within $0.002$ of each other at every
$\alpha$, including the one point with imperfect classification---the
profile-guided procedure inherits whatever distortion its selected base
estimator carries, but never does meaningfully worse than the oracle
using that same estimator, exactly the content of
Theorem~\ref{thm:oracle_adaptivity_asymptotic_optimality}.

\begin{table}[htbp]
\centering
\caption{Oracle-Tracking Diagnostic, Bias-Corrected Factor SE (Robustness Check)}
\label{tab:oracle_tracking_corrected}
\resizebox{\textwidth}{!}{%
\begin{tabular}{lccccccc}\toprule
$\alpha$ & Pop. dominant & $\omega_C-\omega_F$ & $P(\widehat d=d^\star)$ & Cluster SE & Factor SE (corr.) & $\widehat V^*$ (profile) & $\widehat V_{d^\star}$ (oracle)\\
\midrule
0.000 & Factor & -0.755 & 1.000 & 0.063 & 0.054 & 0.054 & 0.054\\
0.125 & Factor & -0.753 & 1.000 & 0.060 & 0.055 & 0.055 & 0.055\\
0.250 & Factor & -0.749 & 1.000 & 0.069 & 0.061 & 0.061 & 0.061\\
0.375 & Factor & -0.731 & 1.000 & 0.059 & 0.049 & 0.049 & 0.049\\
0.500 & Factor & -0.664 & 1.000 & 0.065 & 0.046 & 0.046 & 0.046\\
0.625 & Factor & -0.392 & 1.000 & 0.076 & 0.053 & 0.053 & 0.053\\
0.750 & Cluster & 0.200 & 0.875 & 0.063 & 0.041 & 0.063 & 0.063\\
0.875 & Cluster & 0.430 & 1.000 & 0.069 & 0.049 & 0.069 & 0.069\\
1.000 & Cluster & 0.464 & 1.000 & 0.061 & 0.053 & 0.061 & 0.061\\
\bottomrule
\end{tabular}

}
\begin{minipage}{0.95\textwidth}
\footnotesize
\vspace{0.4em}
\emph{Notes:}
Identical design to Table~\ref{tab:oracle_tracking}, with $\widehat V_F$
replaced by $\widehat V_F^{\mathrm{corr}}$
(\texttt{vcov\_factor\_LD\_corrected.m}).
\end{minipage}
\end{table}

Repeating the sweep with $\widehat V_F^{\mathrm{corr}}$ in place of
$\widehat V_F$ (Table~\ref{tab:oracle_tracking_corrected}) shows both
fixed procedures reasonably well sized against each other's
misspecification throughout (rejection rates between $4.2\%$ and
$7.6\%$), with the profile-guided estimator continuing to track the
oracle exactly---confirming that the oracle-adaptivity result is not an
artifact of the uncorrected estimator's distortion. Figure~\ref{fig:oracle_tracking}
in the main paper plots the original (uncorrected) sweep; the
profile-guided and oracle curves are visually indistinguishable
throughout, and the two fixed-procedure curves cross only once, near
$\alpha\approx0.7$, inside the region where the population margin
$\omega_C-\omega_F$ is still modest---exactly where a researcher
committing to one procedure in advance would be most likely to guess
wrong.

\newpage

\setcounter{equation}{0} \setcounter{assumption}{0}
\setcounter{figure}{0} \setcounter{table}{0} \setcounter{remark}{0}

 \renewcommand{%
\theequation}{F.\arabic{equation}}
\renewcommand{\thelemma}{F.\arabic{lemma}}
\renewcommand{\theassumption}{F.\arabic{assumption}} \renewcommand{%
\thetheorem}{F.\arabic{theorem}}

\renewcommand{\thetable}{F.\arabic{table}}
\renewcommand{\thefigure}{F.\arabic{figure}}

\renewcommand{\thesubsection}{F.\arabic{subsection}}
\renewcommand{\theremark}{F.\arabic{remark}}

\section{Additional Empirical Details}
\label{app:empirical_details}

This appendix collects supporting detail for the Fama--French industry-portfolio
illustration of Section~\ref{sec:empirical}: the pooled-regression benchmark used
to connect the estimated dependence profile to conventional robust standard
errors. The replication package documents the complete sector mapping, data
downloads, transformations, and software used to generate the empirical tables.

\subsection{Conventional Robust Inference Benchmarks}
\label{subsec:empirical_inference}

To connect the profile with conventional robust inference, we estimate the
pooled regression
\begin{equation}
R_{it}
=
\alpha
+
\beta_M MKT_t
+
\beta_S SMB_t
+
\beta_H HML_t
+
e_{it},
\label{eq:empirical_pooled}
\end{equation}
and report heteroskedasticity-robust, sector-clustered, and two-way clustered
(industry and month) standard errors, together with a common-shock benchmark
that removes leading principal components from the residual covariance operator.
The comparison is descriptive: a large cluster score makes clustered standard
errors empirically relevant, a large factor score points to common-shock
adjustments, and a large sparse score to network- or sparse-dependence robust
procedures.

\begin{table}[!ht]
\centering
\caption{Conventional Robust Standard Errors (Fama--French Pooled Regression)}
\label{tab:empirical_se}
\begin{tabular}{lccccc}\toprule
Coefficient & Estimate & White & Sector cluster & Two-way cluster & Common-shock adj.\\
\midrule
Intercept & -0.0005 & 0.0003 & 0.0004 & 0.0006 & 0.0002\\
MKT & 1.0326 & 0.0077 & 0.0432 & 0.0447 & 0.0061\\
SMB & 0.1957 & 0.0125 & 0.0460 & 0.0560 & 0.0097\\
HML & 0.1934 & 0.0121 & 0.0801 & 0.0830 & 0.0095\\
\bottomrule
\end{tabular}

\begin{minipage}{0.92\textwidth}
\footnotesize
\vspace{0.4em}
\emph{Notes:}
The table reports coefficient estimates and standard errors for the
pooled factor regression in \eqref{eq:empirical_pooled}. The columns
compare heteroskedasticity-robust standard errors, sector-clustered
standard errors, two-way clustered standard errors, and common-shock
adjusted standard errors. Relative to the White benchmark for the market
factor ($0.008$), the sector-clustered ($0.043$) and two-way clustered
($0.045$) standard errors are roughly $5.6$--$5.8\times$ larger, while the
common-shock adjustment ($0.006$) is smaller. The wide spread across
columns is itself a symptom of the hybrid dependence structure identified
by the profile in Section~\ref{sec:empirical}, and illustrates how the
estimated profile guides the interpretation of conventional robust
inference.
\end{minipage}
\end{table}

%

\end{document}